\title{Agnostic Product Mixed State Tomography via Robust Statistics}
\author{Alvan Arulandu\thanks{Harvard University, aarulandu@college.harvard.edu. Part of this work was performed while the author was a student at the Quantum@UW REU. } \and Ilias Diakonikolas\thanks{University of Wisconsin, Madison, ilias@cs.wisc.edu. Supported by NSF Medium Award CCF-2107079 and an H.I. Romnes Faculty Fellowship.} \and Daniel Kane\thanks{University of California, San Diego, dakane@ucsd.edu. Supported by NSF Medium Award CCF-210754.} \and Jerry Li\thanks{University of Washington, jerryzli@cs.washington.edu}.}
\begin{document}

\maketitle

\begin{abstract}
We study the complexity of two qualitatively related learning problems---one quantum and one classical. 
In the quantum setting, we consider the task of
agnostic tomography 
for the natural class of product mixed states. 
Specifically, given $N$ copies of an $n$-qubit state $\rho$, 
the goal is to output a nearly-optimal product mixed state approximation of $\rho$ 
in trace distance. While there has been a flurry of recent work on agnostic tomography for
pure state ansatz such as product states or stabilizer states, 
no polynomial-time guarantees were previously known for \emph{mixed state} ansatz. 
In the classical setting, we consider the task of robustly learning binary product distributions. Specifically, given $N$ samples from an unknown distribution $p$ 
on $\{0,1\}^n$, 
the goal is to output a nearly-optimal binary product approximation to $p$. 
This is a basic problem in robust statistics 
with a significant gap between the error guarantee of known efficient algorithms and the information-theoretic minimum. 

As our main contributions, we establish the following new results on the complexity of these tasks:
\begin{itemize}[leftmargin=*]
\item We give a semi-agnostic tomography algorithm for product   
      mixed states with polynomial copy and computational 
      complexity that achieves an error of 
      $O(\mathrm{opt} \log 1 / \mathrm{opt})$, 
      where $\mathrm{opt}$ is the trace distance of the best fit product   
      mixed state.
      This is the first efficient algorithm that achieves any non-trivial agnostic tomography guarantee for any class of mixed state ansatz, and we do so using only single-qubit, single-copy measurements. As a corollary, we obtain a new semi-agnostic tomography algorithm for {\em pure} product states.
      We complement our upper bound for product mixed states with a Quantum Statistical Query lower bound, providing formal evidence that the error guarantee achieved by our algorithm is 
      near-optimal among computationally efficient algorithms. We also establish an unconditional lower bound demonstrating that adaptivity is necessary for our agnostic tomography task, 
      so long as the algorithm only uses single-qubit two-outcome projective measurements.

\item We give a semi-agnostic algorithm for robustly learning binary product distributions with polynomial sample and computational complexity that achieves an error of $O(\mathrm{opt} \log 1 / \mathrm{opt})$, where $\mathrm{opt}$ is the total variation distance of the best fit product distribution. We complement our upper bound with a Statistical Query lower bound, providing 
evidence that the attained error guarantee is nearly optimal 
for efficient algorithms. This essentially resolves the efficient robust learnability of product distributions, marking 
the first algorithmic improvement since the initial work of (Diakonikolas et al. 2016). 
\end{itemize}
A central conceptual contribution of our work 
is an efficient black-box 
reduction from agnostic tomography of product mixed states 
to the robust learning of binary product distributions.  
We believe that this connection between quantum learning and classical robust statistics is of independent interest and may have broader implications.
As a corollary, we establish that these two learning problems are essentially equivalent.
Our new robust learner for binary product distributions 
introduces several technical innovations that may be useful in 
other contexts. These include a new measure that tightly characterizes the total variation distance between two binary product distributions in terms of their means, as well as a novel method for bounding the sample complexity of stability conditions arising in robust statistics.
\end{abstract}

\thispagestyle{empty}

\newpage

\tableofcontents
\setcounter{page}{0}
\thispagestyle{empty}

\newpage

\section{Introduction} \label{sec:intro}

\paragraph{Background and Motivation} 
In this paper, we consider two qualitatively similar learning tasks: one in the quantum setting and one in the classical setting. An informal description of each task is given below.
\begin{itemize}[leftmargin=*]
    \item \textbf{Agnostic Tomography:} Given $N$ copies of an $n$-qubit mixed state $\rho$, 
    can we efficiently approximate the best description of the state $\rho$ 
    within a given ``nice'' quantum ansatz class?
    \item \textbf{Robust Distribution Learning:} Given $N$ i.i.d.
    samples from an $n$-dimensional distribution $p$, 
    can we efficiently approximate the best fit to $p$ within a given 
    ``nice'' (classical) distribution family?
\end{itemize}
Both these tasks are of fundamental importance within their respective fields, and indeed, share very similar motivations.
In real-world applications---both quantum 
and classical---complex phenomena are typically modeled 
using simplifying assumptions.  As a result, our ansatz class 
(or distribution family in the classical setting) 
will almost surely fail to precisely 
capture the target quantum state (or data distribution).
Hence, it is important to develop efficient 
learning algorithms that are 
able to tolerate some degree of \emph{model misspecification}. 

In the classical setting, there exists a rich history 
of learning in the presence of adversarial noise, both in the 
supervised~\cite{Valiant:85,Haussler:92,KearnsLi:93,KSS:94} and the 
unsupervised settings~\cite{DKKLMS16, LaiRV16}. The reader is 
referred to~\cite{diakonikolas2023algorithmic} for a recent book
on the topic.

The task of agnostic tomography has received a wave of recent interest from the quantum computing community, see, e.g.,~\cite{grewal2024agnostic, bakshi2025learning,chen2025stabilizer}.
Beyond the model misspecification motivation, 
an additional motivation, largely unique to the quantum setting, 
is that agnostic tomography algorithms may allow us 
to verify the effectiveness of popular empirical approximations 
arising from mean-field theories, 
such as those underlying Hartree-Fock 
theory~\cite{hartree1928wave,fock1930naherungsmethode,slater1928self,bardeen1957theory} and 
density functional theory~\cite{hohenberg1964inhomogeneous,levy1979universal,vignale1987density}.

In contrast to the classical setting, our understanding of agnostic tomography remains quite limited. A particularly important gap is the case of \emph{mixed-state ansatz classes}. Indeed, all prior efficient algorithms for agnostic tomography apply 
only to structured classes of {\em pure-state} ansatz, such as product states~\cite{bakshi2025learning}, stabilizer states~\cite{chen2025stabilizer}, 
and product stabilizer states~\cite{grewal2024agnostic}. Moreover, the algorithmic ideas underlying these results do not appear to extend to the mixed-state setting. 
This gap is especially striking given the central role of mixed states in quantum information theory and the fact that many physically relevant quantum systems are naturally modeled by mixed states. Consequently, obtaining computationally efficient agnostic tomography algorithms for mixed-state ansatz classes has emerged 
as a central open problem. For example, Gibbs states at finite temperature are inherently mixed, and thus existing methods cannot in general be expected to output accurate approximations even for thermal states of simple Hamiltonians, despite the fact that this is among the most basic quantum estimation tasks.

\paragraph{Summary of Contributions}
In this work, we take a first step toward a theory of efficient agnostic tomography for 
mixed-state ansatz classes. Specifically, we give efficient algorithms for (semi-)agnostic 
tomography of {\em mixed product states}, arguably the most basic and fundamental class of mixed-
state ansatz. Our approach is based on a formal reduction from this quantum tomography 
task to a well-studied problem in classical robust distribution learning. To the best of 
our knowledge, this is the first use of a connection between agnostic tomography and 
robust statistics to obtain computationally efficient algorithms for agnostic 
tomography,\footnote{We note that independent work of~\cite{aliakbarpour2025adversarially} 
draws a connection between certain {\em exponential-time} quantum learning tasks under 
worst-case measurement noise and robust learning.} and we believe that this connection may 
be of broader conceptual interest.

As further evidence for the power of this connection, we show that tools from 
classical robust statistics also yield new guarantees for closest {\em pure} product-
state approximation. In particular, we obtain nearly-optimal error guarantees for the 
product-state approximation problem studied in~\cite{bakshi2025learning} in fully-
polynomial time, while using substantially simpler measurements. More broadly, this 
connection reveals an important conceptual distinction between the pure- and mixed-state 
settings in agnostic tomography. Prior work on pure-state agnostic tomography typically 
obtained error guarantees of the form $\opt + \eps$, where $\opt$ denotes the error of 
the best approximation to the target state within the ansatz class. In contrast, by 
leveraging classical lower bounds from robust estimation, we provide strong evidence that 
such guarantees are computationally intractable for mixed-state ansatz classes. Instead, 
in the mixed-state setting, the appropriate goal is a nearly-optimal (or semi-agnostic) 
guarantee, in which the error scales as $f(\opt)$ for some suitably well-behaved function 
$f$.

On the technical side, the main obstacle to efficient agnostic tomography turns out to be 
the design of an improved algorithm for the classical problem of robustly learning 
product distributions over the hypercube, a problem of independent interest in robust 
statistics. This problem already appeared in some of the earliest algorithmic works that 
initiated the field~\cite{DKKLMS16, LaiRV16}. Yet despite sustained interest in this question and its generalizations to other discrete distribution families (see, e.g.,~\cite{cheng2018robust,cheng2021robust,DiakonikolasKSS21}), 
prior to our work no efficient algorithm with nearly-optimal error guarantees was known.

\subsection{Our Results} \label{ssec:results}

\subsubsection{Agnostic Tomography of Product Mixed States}   \label{ssec:quantum-results}

As discussed above, we develop an efficient agnostic tomography algorithm for the class 
of \emph{product mixed states}, i.e., states of the form 
$\pi_1 \otimes \pi_2 \otimes \cdots \otimes \pi_n$, 
where $\pi_1, \ldots, \pi_n$ are arbitrary one-qubit mixed states. 
Beyond being a fundamental class, efficient tomography for product mixed states 
is also relevant to testing widely used  nonzero-temperature variants of mean-field 
approximations, including spin-glass versions of Hartree--Fock--Bogoliubov 
theories~\cite{bardeen1957theory,bogoljubov1958new,valatin1961generalized,bach1994generalized} 
and Kohn--Sham density functional theory~\cite{kohn1965self}. 
Product mixed states also arise as a special case of popular empirical ansatz classes, 
such as neural-network generative ansatz~\cite{carrasquilla2019reconstructing} 
and the product spectrum ansatz~\cite{martyn2019product}. 
More broadly, if one ultimately aims to develop a general theory 
of agnostic tomography for such richer classes, 
then product mixed states constitute a natural and necessary starting point.

This discussion leads to the following open question: 
\begin{center}
    {\it Does there exist a polynomial-time algorithm 
    for agnostic tomography\\ of product mixed states 
    with near-optimal error guarantees?}
\end{center}
As our first main result, 
we essentially resolve this question. Specifically, we 
develop the first efficient learning algorithm for this task 
with near-optimal error tolerance, 
and also establish a nearly-matching computational lower bound.

\paragraph{Formal Setup}
Before we state our main results for the quantum setting,
we provide a definition of the underlying estimation task.

\begin{definition}[(Semi-)Agnostic Learning for Product Mixed States]\label{prob:agnostic-pmix-intro}
    Let $\cM_n=\{\pi_1 \otimes \cdots\otimes \pi_n:\pi_i\in \C^{2\times 2}\}$ 
    be the family of product mixed states over $n$ qubits. 
    Given copies of an arbitrary quantum state $\rho\in \C^{2^n\times 2^n}$,
    and a desired accuracy $\eps>0$, the goal of the learner is to output 
    $\hat{\pi} \in \cM_n$ such that  $\trd(\hat{\pi},\rho)\leq f(\opt)+\eps$,
    where $\trd$ is the trace distance and $\opt = \opt(\cM_n) \eqdef \inf_{\pi \in \cM_n} \trd(\pi,\rho)$.
   Here $f: \R_+ \mapsto \R_+$ is a monotone nondecreasing function such that  
   $\lim_{t \rightarrow 0}f(t)=0$.
\end{definition}

\noindent Definition~\ref{prob:agnostic-pmix-intro} is the natural 
generalization of the standard notion of (semi)-agnostic learning for classical distributions. 
The (exact) agnostic setting corresponds to the special case where $f(\opt) = O(\opt)$. 
We use the term ``semi-agnostic'' for the setting where $f(\opt)$ 
is a nondecreasing function only of $\opt$---independent of the dimension $n$, 
which satisfies $\lim_{t \to 0} f(t) = 0$.

A few additional remarks are in order. 
First, prior work on agnostic tomography~\cite{grewal2024agnostic,chen2025stabilizer,bakshi2025learning} largely focused on obtaining optimal approximations in (in)fidelity.
However, in this mixed state setting, trace distance is in many ways the more natural measure of closeness between states.
For instance, while (in)fidelity governs the optimal distinguishability between pure states, trace distance governs the optimal distinguishability between mixed states.
Second, while obtaining a constant-factor optimal agnostic guarantee 
(i.e., with error $2 \opt+\eps$) is feasible with $\poly(n/\eps)$  
many copies,  
the standard methods to achieve this~\cite{buadescu2021improved} 
require exponential time in general. 
Moreover, such a computational limit may be inherent (as we show for our problem). 
This motivates research into algorithms that, like the one we present, 
have higher error rates 
and run in polynomial time. 
Third, we note that efficiently achieving error $n \, \opt+\eps$ 
is straightforward, as we can consider each qubit independently. As in the relevant classical 
learning theory literature, the challenge is to obtain {\em dimension-independent} error 
guarantee in polynomial time. 

\medskip

We are now ready to state our main positive result in this setting (see  Corollary~\ref{cor:quantum-main}). 

\begin{theorem}[Computationally Efficient Semi-Agnostic Learner for Product Mixed States] \label{thm:q-main-inf}
There exists a semi-agnostic learner for product mixed states 
that draws $N = \poly (n, 1/ \eps)$ copies, 
uses only single-qubit, unentangled measurements, 
runs in $\poly(N)$ time, and outputs a $\widehat{\pi} \in \cM_n$ 
so that with high probability, $\trd(\widehat{\pi}, \rho)  = O(\opt \log(1/\opt))+\eps, $
where we denote $\opt =  \opt(\cM_n)$. 
\end{theorem}

Theorem~\ref{thm:q-main-inf} gives the first computationally efficient semi-agnostic learner for product mixed states with dimension-independent error guarantees. Additionally, as 
a feature of our approach, our algorithm only uses very simple classes of measurements. 
This stands in contrast to prior work~\cite{bakshi2025learning,chen2025stabilizer}  
which leverages highly entangled measurements across the different qubits. 

It is natural to ask whether the relaxed, semi-agnostic error guarantee attained in Theorem~\ref{thm:q-main-inf} 
is inherent for computationally efficient algorithms. 
In Theorem~\ref{thm:qsq-intro}, we provide formal evidence that this is indeed the case for the class of Quantum SQ algorithms. 

\vspace{-0.3cm}

\paragraph{Discussion} 
Beyond the statement itself, the underlying approach to establish Theorem~\ref{thm:q-main-inf} is of independent interest and we believe may have broader implications. Specifically, 
to design our efficient quantum tomography algorithm, 
we establish a formal connection between agnostic tomography 
and classical robust statistics. 
In more detail, we show that agnostic tomography 
of product mixed states is \emph{equivalent}, up to constant factors, to the classical task of robustly learning 
a binary product distribution. 
Importantly, this equivalence preserves sample/copy complexity and computational complexity. 
To establish our quantum upper bound (Theorem~\ref{thm:q-main-inf}), 
we leverage this connection together with a new near-optimal algorithm 
for robustly learning product distributions (see Theorem~\ref{thm:robust-main-inf}). Interestingly, we show that this connection also holds in the opposite direction. 
This gives us our Quantum SQ lower bound for agnostic tomography, by leveraging
a classical SQ lower bound for robustly learning binary products that we establish (Theorem~\ref{thm:sq-informal}). 

\medskip

As an additional interesting implication of Theorem~\ref{thm:q-main-inf}, we obtain 
a new fully-polynomial time semi-agnostic learner for {\em pure} product states. 
Specifically, we establish the following (see Theorem~\ref{thm:agnostic-pure}).

\begin{theorem}[Computationally Efficient Semi-Agnostic Learner for Pure Product States]\label{thm:pure-intro} 
There is a non-adaptive algorithm, using only single-copy, single-qubit measurements, with the following performance guarantee. The algorithm draws $N=\tilde{O}(n/\eps^2)$ copies of an arbitrary quantum state $\rho\in \C^{2^n\times 2^n}$, runs in $\poly(N)$ time, 
and with high probability outputs a description of a pure product state $\ket{\hat \pi}$ such that $\trd(\rho,\ketbra{\hat \pi}{\hat \pi}) = O(\opt \sqrt{\log(1/\opt)} )+\eps$,  
where $\opt = \opt(\Pi_n) \eqdef \inf_{\pi \in \Pi_n} \trd(\pi,\rho)$ and $\Pi_n$ is the class of pure product states.
\end{theorem}

\vspace{-0.3cm}

\paragraph{Discussion}
The most direct comparison is the guarantee of~\cite{bakshi2025learning}, 
which gives an algorithm that outputs a pure product state $\ket{\hat \pi}$ 
so that the {\em fidelity} between $\ket{\hat \pi}$ and $\rho$ is at least $\opt_F - \eta$, 
where $\opt_F$ is the best fidelity achievable by any pure product state. 
That algorithm runs in polynomial time (in $n$ and $1 / \eta$), 
but requires that $\opt_F$ is sufficiently large.
In contrast, our guarantees are for trace distance, 
and we achieve a weaker semi-agnostic guarantee.
In the special case where $\rho$ is also promised to be pure, 
we can obtain a fidelity guarantee as well. 
Namely, if $\opt_F = 1 - \delta$, then our algorithm outputs 
a pure state $\ket{\widehat{\pi}}$ so that the fidelity 
between $\rho$ and $\ket{\widehat{\pi}}$ is at least $1 - O(\delta \sqrt{\log 1 / \delta})$.
In other words, our results match those of~\cite{bakshi2025learning} 
as long as $\eta = \Omega(\delta \sqrt{\log 1 / \delta})$, 
but cannot go below this threshold. 
While our error guarantee is quantitatively somewhat worse, 
a major advantage of our approach is that 
the measurements we need are simpler 
than the ones required by the algorithm of~\cite{bakshi2025learning}, 
which requires many rounds of adaptively chosen measurements.
In contrast, our algorithm only uses non-adaptively chosen, single-qubit measurements.
We believe this algorithm is mostly of conceptual interest, as it gives a completely different approach to agnostic tomography of pure states than previous methods, and we believe it can generalize to different settings beyond the ones we currently understand to date.

We now move to present our lower bounds. We start by establishing an information-computation tradeoff in the Quantum SQ model, providing rigorous evidence that the error guarantee achieved
by our algorithm of Theorem~\ref{thm:q-main-inf} is nearly best possible within the class of polynomial-time algorithms. For concreteness, we define the Quantum Statistical Query (QSQ) model, introduced in~\cite{arunachalam2020quantum}. A Quantum SQ algorithm 
is an algorithm whose objective is to learn some information about an unknown state $\rho$, 
by making adaptive calls to the following QSTAT oracle.

\begin{definition}[QSTAT Oracle] \label{def:qsq-oracle}
Let $\rho$ be an $n$-qubit mixed state.
A Quantum Statistical Query is an observable $O \in \C^{2^n} \times \C^{2^n}$ satisfying $\norm{O}_2 \leq 1$.
For $\tau > 0$, the $\QSTAT (\tau)$ oracle responds to the query $O$ with a value $v$ such that
$\left| v - \tr (O \rho) ] \right| \leq \tau$.
We call $\tau$ the tolerance of the Quantum Statistical Query.
\end{definition}
As in the classical setting, the parameter $\tau$ plays the role 
of the proxy for the copy complexity of the problem, 
and the number of calls to the QSQ oracle plays the role of the runtime of the algorithm.

\smallskip

With this setup, our QSQ lower bound is stated below (see Theorem~\ref{thm:qsq-body})
for a more detailed statement).

\begin{theorem}[QSQ Lower Bound for Agnostic Product Mixed State Tomography] \label{thm:qsq-intro}
Any Quantum SQ algorithm that learns a product mixed state $\pi$ on $n$ qubits 
to trace distance $o(\opt \log (1/ \opt) / \log \log (1 / \opt))$, 
given QSQ access to a state $\rho$ satisfying $\trd (\rho, \pi) = \opt$, 
where $\opt$ is known to the learner, 
either requires $n^{\omega (1)}$ many Quantum SQs, or must make at least one query 
of tolerance $n^{-\omega(1)}$.
\end{theorem}
In summary, any QSQ algorithm for agnostic product mixed state tomography that achieves error 
slightly smaller than the error guarantee achieved by our algorithm requires super-polynomial 
complexity. This in particular implies that even achieving any constant factor approximation to the optimal error of $\opt$ requires super-polynomial time in this model.

\vspace{-0.3cm}

\paragraph{Discussion}
While our formal lower bound result is for the specific QSQ model, 
there is a strong sense in which the computational complexity 
of agnostic product mixed state tomography is closely tied 
to the complexity of classical robust estimation.
This is because we can always embed a classical distribution over the hypercube as a diagonal mixed state, and a product distribution becomes a diagonal product mixed state.
Therefore, any better algorithm for the quantum estimation problem immediately implies an improved (quantum) algorithm for robustly learning product distributions.
Unfortunately, the literature on information-computation gaps for classical statistical tasks 
does not typically provide strong evidence of hardness against quantum algorithms\footnote{For some notable exceptions, see~\cite{bruna2021continuous,gupte2022continuous,diakonikolas2022cryptographic,tiegel2023hardness,bangachev2025near} for lattice-based cryptographic hardness.}. That said, 
for the broad class of QSQ algorithms, we formally establish that clasical SQ 
lower bounds directly translating to our setting, yielding Theorem~\ref{thm:qsq-intro}.

\medskip

Our second lower bound concerns the necessity for adaptivity in Theorem~\ref{thm:q-main-inf}. We remark that our algorithm establishing Theorem~\ref{thm:q-main-inf} 
crucially uses a single step of adaptivity to alter its measurement basis for every qubit.
We conjecture that this is in fact necessary for any efficient algorithm that 
only uses single-copy measurements. As a first step towards showing this, 
we demonstrate that fewer than sub-exponentially many non-adaptive, $2$-outcome, single-qubit measurements---like the ones considered in~\cite{chen2025information}---\emph{information-theoretically} do not suffice for this problem. Namely, we show the following (see  Theorem~\ref{thm:lb-product-basis-meas} for a more detailed statement). 

\begin{theorem}[Lower Bound against Nonadaptive Algorithms] \label{thm:quantum-na-lb-inf}
    Any algorithm that solves the agnostic tomography for product mixed states problem to non-trivial error using only non-adaptively chosen single-qubit, two-outcome projective measurements requires a sub-exponential number of copies. 
\end{theorem}

Interestingly, in contrast to the result of~\cite{chen2025information}, which only 
proved {\em computational} lower bounds for algorithms 
using these types of measurements (based on the low-degree likelihood heuristic~\cite{barak2019nearly,hopkins2018statistical,kunisky2019notes,wein2025computational}), 
our lower bound is {\em unconditional}. 
Prior to our work, the only other information theoretic lower bound of this sort was for state certification~\cite{gupta2025few}, a natural quantum testing problem; we show that such measurements are also insufficient for this natural learning task.

\subsubsection{Robustly Learning Product Distributions} \label{ssec:classical-results}

As already mentioned, 
a key ingredient of our agnostic tomography upper bounds is 
a new efficient algorithm, with near-optimal error guarantee,
for the classical task of robustly learning a binary product distribution. 
In this task, we are given samples from a (potentially arbitrary) 
distribution $p$ and the goal is to compute a product distribution $\widehat{q}$  
(i.e., a distribution whose coordinates are mutually independent) 
 whose total variation distance to $p$ is competitive 
to that of the ``best fit'' product distribution. Formally, we have the following 
definition of semi-agnostic learning of product distributions.

\begin{definition}[(Semi)-Agnostic Learning of Product Distributions] \label{def:agnostic-product-intro} 
Let $\cP_n$ be the class of product distributions over $\{0, 1\}^n$. Given access to i.i.d.\ samples from an arbitrary distribution $p$ over $\{0, 1\}^n$ and a desired accuracy $\eps>0$, the goal of the learner is to output $\widehat{q} \in \cP_n$ such that 
$\tvd(\widehat{q}, p) \leq f(\opt)+\epsilon$, where $\tvd$ is the total variation distance and 
$\opt = \opt(\cP_n) \eqdef \inf_{q \in \cP_n} \tvd(q, p)$. 
\end{definition}

The first algorithmic work in high-dimensional robust statistics~\cite{DKKLMS16} gave 
a polynomial sample and time algorithm for this task 
with error guarantee $\tilde{O}(\sqrt{\opt})+\eps$. 
Perhaps surprisingly, despite extensive work on robust statistics 
over the past decade, this error bound
had remained the best known. We note that a near-optimal error robust algorithm for this task is 
a prerequisite to obtain similarly optimal robust 
algorithms for broader models of interest, including mixtures of product distributions 
and graphical models.
In summary, we ask the following open question: 
\begin{center}
    {\it Does there exist a polynomial-time algorithm 
    for robustly learning\\ product distributions 
    with near-optimal error guarantees?}
\end{center}
As our second main algorithmic contribution, we resolve
this question in the affirmative 
(see Theorem~\ref{thm:robust-main} for a more detailed statement). 

\begin{theorem}[Computationally Efficient Semi-Agnostic Learner for Product Distributions] \label{thm:robust-main-inf}
There exists a semi-agnostic learner for binary product distributions 
with sample complexity $N = \poly (n, 1/ \eps)$, computational complexity $\poly(N)$, 
that outputs a $\widehat{q} \in \cP_n$ so that with high probability,
$\tvd(p, \widehat{q}) \leq O(\opt \log(1/\opt))+\eps$, 
where  $\opt = \opt(\cP_n) \eqdef \inf_{q \in \cP_n} \tvd(q, p)$ .
\end{theorem}

We note that our algorithm also works 
in the stronger \emph{$\eps$-corruption model} from robust statistics, 
where an $\eps$-fraction of the samples (where the value of $\eps$ is unknown to the algorithm)
are adversarially corrupted post-hoc; see Definition~\ref{def:corruption}.

Recall that the information-theoretically optimal error for 
robustly estimating a binary product in total variation distance is $O(\opt)+\eps$, 
while our algorithm achieves the weaker semi-agnostic error guarantee of 
$O(\opt \log(1/\opt))+\eps$. We provide rigorous evidence of an information-computation tradeoff, namely showing that the extra logarithmic 
factor is essentially best possible 
within the class of efficient Statistical Query (SQ) 
algorithms. 

For concreteness, we define the family of SQ algorithms below.
The Statistical Query (SQ) model~\cite{Kearns:98,FGR+13} 
considers algorithms that, 
instead of drawing individual samples from the target distribution, 
have indirect access to the distribution using the following oracle.

\begin{definition}[STAT Oracle] \label{def:stat-oracle-intro}
Let $D$ be a distribution on $\R^n$. 
A Statistical Query is a bounded function $F: \R^n \to [-1,1]$. 
For $\tau > 0$, the $\mathrm{STAT}(\tau)$ oracle responds to the query $F$ 
with a value $v$ such that $|v - \E_{X \sim D}[F(X)]| \leq \tau$. 
We call $\tau$ the tolerance of the statistical query.
\end{definition}

We note that the parameter $\tau$ is a proxy of the algorithm's
simulation sample complexity while the total number of queries 
is viewed as a measure of the algorithm's running time. 

\medskip

With this setup, we establish the following SQ lower bound (see Section~\ref{sec:sq}):

\begin{theorem}[SQ Lower Bound for Robustly Learning Binary Products] \label{thm:sq-informal}
Any SQ algorithm that learns a product distribution over $\{0, 1\}^n$, 
given SQ access to a distribution $p$ with total variation distance $\opt$ to an unknown product distribution, within total variation error $o(\opt \log(1/\opt)/\log\log(1/\opt))$, even if $\opt$ is known to the learner,  
either requires $n^{\omega(1)}$ many Statistical Queries or must make at least one query 
of tolerance $n^{-\omega(1)}$.
\end{theorem}

As an immediate corollary, it follows that 
the error guarantee achieved by our 
algorithm of Theorem~\ref{thm:robust-main-inf} 
is essentially optimal within the class of efficient 
SQ algorithms. 
An interesting conceptual implication of 
Theorem~\ref{thm:sq-informal} is that robustly learning 
a binary product 
under total variation distance is computationally harder than 
robustly learning with respect to the $\ell_2$-norm (where the 
best error rate achievable in polynomial time is $\Theta(\opt \sqrt{\log(1/\opt)})$~\cite{DKKLMS16, diakonikolas2022optimal}). 
This is in sharp contrast to the related task of robustly learning a spherical Gaussian, where the two 
notions of learning are equivalent up to constant factors. 

\subsection{Our Techniques} \label{ssec:techniques}

We now give a high-level technical overview of our results. 

\subsubsection{From Agnostic Tomography to Robust Statistics and Proof of Theorem~\ref{thm:q-main-inf}} \label{ssec:tech-reduction}

We first describe a formal reduction from
the problem of agnostic tomography 
of product mixed states to that of robustly learning a binary product distribution.
In fact, we give a \emph{black-box} reduction, i.e., 
we show how to take \emph{any} classical efficient algorithm 
that achieves non-trivial 
statistical rates for robust density estimation of a binary product distribution
and use it as a subroutine to obtain 
an efficient algorithm for agnostic tomography 
{\em with the same error guarantees}, within a constant factor.
We do this in two steps. 

Our first step is motivated by the following observation: 
if we measure a product state $\pi = \pi_1 \otimes \ldots \otimes \pi_n$ 
in any Pauli basis---i.e., we measure each qubit using the POVM 
$\{\tfrac{I + P}{2}, \tfrac{I - P}{2}\}^{\otimes n}$ 
for $P \in \{X, Y, Z\}$---then the resulting distribution 
is a binary product distribution whose mean allows 
us to recover the Bloch coefficients of $\pi_1, \ldots, \pi_n$.
For instance, if the state $\pi$ was diagonal and 
we measured in the computational basis, the resulting outcome 
would be a sample from a binary product distribution 
whose mean exactly specifies the diagonal entries of $\pi$.
But since we are measuring a state $\rho$ that has trace distance 
at most $\opt$ from some product mixed state, 
when we measure in this Pauli basis, we obtain samples from a classical distribution that has total variation distance at most $\eps$ 
from this binary product distribution (Lemma~\ref{lemma:tvd-leq-trd}).
Therefore, running a classical robust mean estimation algorithm allows us to achieve a fairly high quality approximation of the Bloch coefficients of the best product mixed state approximation!

Unfortunately, this step alone is insufficient for the following reason. 
At this stage, the best guarantee that any robust mean estimation algorithm can 
provide is an approximation 
$\widehat{\pi} = \widehat{\pi}_1 \otimes \ldots \otimes \widehat{\pi}_n$ 
such that $\sum_{i = 1}^n \norm{\pi_i - \widehat{\pi}_i}_F$ is small, 
where $\norm{\cdot}_F$ denotes the Frobenius norm.
However, if some of the $\pi_i$'s are close to pure, 
this sort of approximation is insufficient to ensure any 
nontrivial bound in trace distance---the natural and standard notion 
of distance in agnostic tomography.
Along such nearly-pure qubits, it turns out that one needs to 
learn to good \emph{relative} error.
This issue is a quantum manifestation of
the main difficulty in robustly learning binary product distributions in total variation distance (in the classical setting), 
where the key technical challenge arises from coordinates 
whose true means $p_i$ are very close to $0$ or $1$.

However, not all hope is lost. This is because while this initial approximation 
is insufficient for learning the $\pi_i$'s, 
we demonstrate (see Lemma~\ref{lemma:F-rho-rhodiag-taylor}) 
that it does yield a sufficiently high quality approximation 
to the \emph{eigenvectors} of each $\pi_i$.
Specifically, we show that the best approximation, $\pi$, 
is approximately diagonal in the product basis formed by these eigenvectors.
Therefore, it suffices to learn the measurement outcomes 
of $\rho$ when we measure in this learned basis!
Since the measurement outcome distribution is $\opt$-close in total variation distance 
to a binary product distribution that would exactly determine the coefficients 
of $\pi$ in the same basis, it suffices to do a second round 
of robust estimation (in total variation distance) 
to compute the best product approximation in this learned basis. This second round concludes our black-box reduction.

\subsubsection{Semi-Agnostic Tomography of Pure Product States: Proof of Theorem~\ref{thm:pure-intro}} \label{ssec:tech-pure}

In the special case
of 
pure product state approximation to our unknown state, 
we demonstrate that a slight---arguably even simpler---variant of the aforementioned 
also yields a black-box reduction to robustly learning a binary product distribution.
The key insight is that, in the previous reduction, we already demonstrated 
that by measuring in the Pauli basis and using robust estimation to robustly learn the mean of the resulting distribution, we can identify the eigenvectors 
of the best mixed product state approximation to sufficiently good accuracy.
The same property remains true if the product state is pure.
Previously, we then had to perform a second round of estimation 
to learn the eigenvalues of the mixed state.
But now, if the state is pure, we do not need to estimate the eigenvalues!
Instead, we show that it suffices to simply take the qubit-wise estimated mixed states, 
and round them to be pure states. It turns out that this will only incur a constant factor loss in trace distance.

Crucially, this reduction only needs the first round of robust estimation 
(used in the previous reduction).
This presents two conceptual advantages over the previous reduction.
First, now we only need an $\ell_2$-accurate estimate of the mean of the product distribution, 
which is a much simpler robust estimation task. In Appendix~\ref{app:sample-opt-l2}, 
we show that this task can be solved with slightly better accuracy (namely, 
$O(\opt \sqrt{\log 1 / \opt})$ rather than $O(\opt \log 1 / \opt)$), 
and with a {\em nearly-linear} sample complexity.
Consequently, our semi-agnostic tomography algorithm for pure product states 
achieves better accuracy and copy complexity as well.
Second, our measurements can be chosen {\em fully non-adaptively}, whereas
the measurements for the previous setting 
are chosen adaptively---and, indeed, as our lower bound (Theorem~\ref{thm:quantum-na-lb-inf}) shows  
{\em must} be chosen adaptively.

Although Theorem~\ref{thm:pure-intro} follows as a simple corollary of our more general reduction, we 
believe it is conceptually significant. Specifically, it shows that techniques from robust estimation can 
also be used to derive new algorithmic results for agnostic tomography of pure states. This naturally 
leads to the question of whether analogous methods can be used to obtain interesting guarantees for 
agnostic tomography of stabilizer states.

\subsubsection{Quantum SQ Lower Bound: Proof of Theorem~\ref{thm:qsq-intro}} \label{ssec:tech-qsq}

We now turn to our QSQ lower bound for agnostic tomography of product mixed states. 
As discussed earlier, there is a straightforward but important reduction in the reverse direction, 
showing that learning product distributions over the hypercube reduces to agnostic tomography 
of product mixed states via an embedding into diagonal density matrices. 
Because all relevant states in this reduction are diagonal, 
it suffices to consider diagonal measurements as well. 
It follows that any QSQ algorithm immediately induces an SQ algorithm 
for the corresponding classical learning problem, 
by viewing each diagonal measurement as a function on the hypercube in the standard way.
This allows us to essentially transfer the SQ lower bound of Theorem~\ref{thm:sq-informal} 
to the quantum setting. 

\subsubsection{Lower Bound for Non-Adaptive Agnostic Tomography: Proof of Theorem~\ref{thm:quantum-na-lb-inf}} \label{ssec:tech-quantum-lb}
We now describe our lower bound against non-adaptive, single-qubit measurements that are two-element projection-valued measures (PVMs).
The high level intuition in our proof of Theorem~\ref{thm:quantum-na-lb-inf} is that for such a measurement to succeed, it must effectively guess the dominant eigenvector in all qubits where the best product mixed state approximation is very close to pure. However, as long as we take this direction to be random, this event is
very unlikely.

The key idea is to embed a \emph{moment-matching} construction into the product mixed state tomography problem. In Proposition~\ref{prop:main-lower-bound-prop}, we show that there exists an ensemble of pairs of $n$-qubit mixed states that are each close to product mixed states, constantly far away from each other in trace distance (see Lemma~\ref{lemma:lb-states}), yet have random eigenvectors for each qubit with the distributions over their eigenvalues matching many moments (see Lemma~\ref{lemma:lb-moment-match}).
The point is that for any fixed measurement of consideration, the measurement will not align with the true eigenvector in all but a small fraction of the qubits, which we show can be ignored via Lemma~\ref{lemma:lb-bad}. In Lemma~\ref{lemma:lb-good} and the proof of Proposition~\ref{prop:main-lower-bound-prop}, we proceed to show that this failure in alignment in the remaining qubits causes the likelihood of any measurement outcome to be close to a low-degree polynomial in the eigenvalues. However, since the moments of the distributions of the eigenvalues match, this shows that the distribution of the measurement outcomes under the two product states is statistically indistinguishable.

\subsubsection{Near-Optimal Robust Learner for Product Distributions: Proof of Theorem~\ref{thm:robust-main-inf}}

\label{ssec:tech-product-alg}

As mentioned in the earlier discussion, 
\cite{DKKLMS16} gave an algorithm for robustly learning product distributions that 
achieves total variation error $\tilde{O}(\sqrt{\opt})$. 
Whether this error bound can be improved to a near-optimal 
bound of $\tilde{O}(\opt)$ has remained a basic open 
question in robust statistics 
that we fully resolve here. To 
achieve this, we need to develop several new technical 
ingredients that we summarize in the subsequent discussion.

\vspace{-0.3cm}

\paragraph{New TV Distance Characterization between Products}
The first fundamental obstacle 
lies in developing a tighter characterization of the total variation 
distance between two product distributions. 
Such a step is necessary but not sufficient, as
once we have such a characterization, we need a 
method to exploit it algorithmically.   
In particular, suppose that our algorithm approximates 
a product $p$ by another product $q$. Then, 
we will need our analysis to {\em certify} 
that the total variation distance between them is small. 
To set up notation, let $\mu$ be the mean of $p$ 
and $\nu$ be the mean of $q$. 
By flipping any coordinates with mean close to $1$, we will assume without loss of generality throughout that each coordinate 
$\mu_i$ and $\nu_i$ is bounded away from $1$ 
for all $i$. 
A simple and standard way to bound 
the total variation distance between $p$ and $q$ is via
the Hellinger distance:
$\Theta(\sqrt{\sum_i |\mu_i - \nu_i|^2 / (\nu_i+ \mu_i)})$. 
In fact, the prior algorithm of~\cite{DKKLMS16} relies on this upper bound to achieve the weaker error guarantee of $\tilde{O}(\sqrt{\opt})$.
Unfortunately, just relying on the Hellinger distance cannot provide a better guarantee 
due to an {\em integrality gap}. 
In particular, there exist pairs of product 
distributions $p$, $q$ with total variation distance $\opt$, whose Hellinger distance is on the order of $\sqrt{\opt}$. 
Since $q$ can be thought of as a version of $p$ 
with $\opt$-corruption, 
an algorithm observing samples from $p$ 
could not tell whether $p$ or $q$ 
is the ``true'' distribution. Thus, we
cannot hope to robustly learn 
to Hellinger distance better than about $\sqrt{\opt}$.

To circumvent this obstacle, we develop 
a new measure tightly characterizing 
the total variation distance between two product distributions in terms of their means (Theorem~\ref{thm:product-tv-bound}). 
Intuitively, our new measure smoothly interpolates between 
the $\ell_1$-distance and a $\chi^2$-divergence-type object, 
allowing us to tightly witness the contribution 
to the total variation distance both on balanced and unbalanced coordinates.
We view this as a basic structural result 
of broader applicability. We start by observing 
that there is another useful upper bound on the total variation distance between products, 
given by $\|\mu - \nu\|_1$, 
the $\ell_1$-norm between the mean vectors. 
In fact, we can combine this bound with the Hellinger bound 
to show the following: 
for any subset $A \subseteq [n]$ of coordinates, 
the total variation distance between $p$ and $q$ 
is bounded above by 
$O\left(\sum_{i \in A}  |\mu_i - \nu_i| + \sqrt{\sum_{i \not\in A}  |\mu_i -\nu_i|^2/\mu_i}\right)$.
It turns out that {\em this} characterization 
is nearly tight. 
In particular, we define a convex body $\cT_\mu$ and corresponding dual norm $\|\mu-\nu\|_{\mu}$ along these lines (Definition~\ref{def:norm}) and prove that it bounds $\tvd(p,q)$ from above.

\vspace{-0.3cm}

\paragraph{Filtering Algorithm via Novel Convex Relaxation}
Armed with this characterization, 
we can attempt a filtering style algorithm~\cite{DKKLMS16,dong2019quantum,diakonikolas2019recent,diakonikolas2023algorithmic} to solve our problem.
In more detail, given a sample set $S$, let $\nu$ 
be the empirical mean 
and let $q$ be the corresponding product distribution. 
We would like to show that the true distribution $p$ satisfies $\tvd(p,q) = \tilde{O}(\opt)$. 
By our above characterization, 
it is sufficient to show that $x \cdot (\mu-\nu) = \tilde{O}(\opt)$ for every $x \in \cT_\mu$ . 
On the other hand, it is not hard to show 
that for any such $x$ the distribution of $x \cdot p$ 
is tightly concentrated (Lemma~\ref{lem:quadratic-tail-bound}). 
This means that if a small fraction of outliers 
change the mean by a lot, 
it will cause the variance of $x \cdot p$ to be substantially more than expected. 
This scenario is detectable, as follows: 
if we let $\Sigma_0$ be the empirical covariance matrix (with the diagonal zeroed out to account for the anticipated covariance), 
then we only need to worry about points $x$ 
with $\|x\|_{\nu}^* \leq 1$ and $x^{\top}(\Sigma_0) x$ large. 
Furthermore, if we can find such an $x$ we can find samples 
that are most likely outliers in the $x$-direction 
and remove them via an appropriate filtering step.

The catch is that the method we just described 
does not lead to a polynomial-time algorithm, as the computational problem 
of maximizing $x^{\top} A x$ 
subject to $\|x\|_{\nu}^* = 1$ is computationally intractable in general. To obtain a polynomial-time algorithm, we instead consider a natural convex relaxation of this objective (\Cref{eq:st-mu}).
Note that the above is equivalent 
to maximizing $\tr(A(xx^{\top}))$. 
We relax this objective by replacing 
$xx^{\top}$ by any positive semi-definite matrix $H$ that 
has $\|H\|_{\nu}^* \leq 1$, 
where $\|H\|_{\nu}^*$ is obtained 
from $H$ by first taking the $\|\cdot \|_{\nu}^*$ 
norm of each row 
and then taking the $\|\cdot\|_{\nu}^*$ norm of the resulting vector. 
It turns out that this convex relaxation can be optimized 
efficiently. Moreover, we can show 
that for any such $H$, $p^{\top}Hp$ 
is tightly concentrated about its mean. 
This allows us to build a somewhat more complicated 
filtering algorithm that suffices for our purposes.

To establish the correctness of our algorithm, 
we additionally need to establish 
tight tail bounds for the types of quadratic polynomials 
encountered by our filtering procedure, which  
we handle
by leveraging certain decoupling lemmas from 
the Boolean analysis literature. 
An additional hurdle is that one must first 
perform a number of pre-processing steps 
to ensure that the product distribution in question 
is of the right form. 

\vspace{-0.3cm}

\paragraph{New Tools for Stability and Near-optimal Sample Complexity}
Finally, it is worth highlighting an additional  
technical aspect of our contributions, leading 
to a nearly tight sample complexity bound. 
In order for our analysis to work, 
we need that with high probability 
our sample set $S$ satisfies the following {\em stability condition}:  
For every subset $S'\subseteq S$ of size $(1-\eps)|S|$, 
if $\Sigma_0(S')$ is the empirical expectation 
of $(x-\mu)(x-\mu)^{\top}$ 
with zeroed-out diagonal, then for all $H$ 
satisfying the desired properties (outlined above) 
we have $|\tr(H \Sigma_0(S'))| = \tilde{O}(\eps)$. 
Prior work in robust statistics typically establishes such stability 
conditions by finding an approximate 
cover of such $H$ and proving high probability bounds 
for each. While this approach is feasible in our 
setting, it would inherently lead to a sample complexity 
bound of $\Omega(n^4)$,
due to the relatively weak 
tail bounds on $p^{\top}Hp$
. While such a sample bound is 
still polynomial, it is unsatisfying as it would lead to a 
highly impractical algorithm. Instead, 
we develop a new technique for analyzing the sample complexity 
that leads to a bound of $\tilde{O}(n^2/\epsilon^2)$ and
may be applicable in other contexts.

Our new analysis works in two steps:
To prove an upper bound on $\tr(H \Sigma_0(S'))$, 
we first show that since $\Sigma_0(S)$ is likely 
close to $0$, 
we can show that  $|\tr(H \Sigma_0(S))|$ 
is likely small for all $H$. 
Since removing elements from $S$ 
cannot make $\Sigma_0(S)$ much larger, 
this proves the lower bound. 
For the upper bound, we use the VC-inequality 
to show that with high probability over $S$ 
the empirical distribution of $p^{\top}Hp$ 
is close to the true distribution 
in Kolmogorov distance for all $H$. 
This suffices to show that the empirical expectation 
of $\tr(H(x-\mu)(x-\mu)^T)$ 
with its $\epsilon$-tails removed, 
which is the smallest that  $\tr(H \Sigma_0(S'))$ could be,  
is close to the population average 
with the same truncated tails.

We put all these pieces together to complete the proof of Theorem~\ref{thm:robust-main-inf} 
in Section~\ref{sec:robust-product}.

\subsubsection{SQ Lower Bound for Robustly Learning Binary Products: Proof of Theorem~\ref{thm:sq-informal}} \label{ssec:tech-sq}

Recall that Theorem~\ref{thm:sq-informal} establishes 
an SQ lower bound 
providing rigorous evidence that no polynomial-time classical algorithm 
can obtain total variation distance error 
$\delta := o(\opt \log(1/\opt)/\log\log(1/\opt))$
for robust learning of binary products. 

As is standard~\cite{FGR+13}, establishing an SQ lower bound in our setting 
essentially boils down to constructing large families of 
$\opt$-corrupted 
binary product distributions that have pairwise small correlation, 
i.e., $\chi^2$-squared inner product,  
with respect to some given base distribution. To achieve this,  
we construct a pair of product distributions $D$ and $D_0$,
supported in a lower dimensional space, 
that are $\delta$-far from each other in total variation distance, 
but for which an $\opt$-corrupted version $D'$ (in total variation distance)
of $D$ matches many low-degree moments with $D_0$. 
We then embed this instance into a higher dimensional product distribution 
(with marginal probabilities agreeing with those in $D_0$) 
and roughly speaking show that it is SQ-hard to find these hidden coordinates.
This high-level approach to establish SQ lower bound has been leveraged 
in prior work 
for Gaussian-like settings~\cite{DKS17-sq, DKRS23} and for a discrete setting~\cite{diakonikolas2022optimal},   
as the one we consider here. These similarities notwithstanding, 
there are two major aspects that require substantial conceptual and technical innovations.

The first issue concerns the choice of reference distribution $D_0$. 
A generic SQ lower bound of the aforedescribed form 
is only known when $D_0$ is the {\em uniform distribution} on the hypercube~\cite{diakonikolas2022optimal}. 
Indeed, the uniform distribution was used as the reference distribution 
in~\cite{diakonikolas2022optimal} to prove a super-polynomial 
SQ lower bound for robustly learning binary products to error $o(\opt \sqrt{\log(1/\opt)})$ 
{\em with respect to the $\ell_2$-norm}. This prior SQ lower bound already implies a similar SQ lower bound
under the total variation distance, since the total variation distance between two binary products is at least 
a constant multiple of their $\ell_2$ distance. Interestingly,~\cite{DKKLMS16} gave an efficient algorithm
matching this $\ell_2$ distance bound. Hence, to prove our near-optimal SQ lower bound under the total variation distance, 
a new approach is needed.

This obstacle can be circumvented by selecting $D_0$ to be a ``highly unbalanced'' 
product distribution. This requires a {\em new generic} SQ lower bound result which translates 
moment matching to SQ hardness. We establish such a result (Proposition~\ref{prop:generic-SQ}) 
for any reference distribution $D_0$ (independent of the bias of its coordinates)
via Fourier-analytic ideas. While our Fourier approach 
can be viewed as a generalization of the analogous analysis in~\cite{diakonikolas2022optimal} 
for the special case of the uniform distribution, the fact that the approach can be carried through
for any biased product {\em with no dependence on the bias} is a notable new result.

The second issue has to do with our moment-matching construction in low dimensions.
This is the technically most novel and challenging aspect of our proof. 
In more detail, to carry out our overall strategy, 
we need to show that there is a distribution $A$ on $\{0, 1\}^m$---for a carefully selected 
value of $m \ll n$---that is $\opt$-close in total variation distance to $U_{(1+\delta)/m}^m$ 
yet also matches 
its first $\omega(1)$ many moments with $U_{1/m}^m$, where $U_{b}^m$ denotes the binary product distribution on $\{0,1\}^m$ with bias $1/m$ on each coordinate (Proposition~\ref{prop:final-A}).  
As both these products are symmetric, it suffices to consider the distribution over weights 
(i.e., the one-dimensional distributions corresponding to the sums of the coordinates). 
Related to this, note that the $k$th moments of $\Bin(m,(1+\delta)/m)$ and $\Bin(m,1/m)$ 
differ by $O_k(\delta)$. Furthermore, as both distributions have pmf values 
at least $\opt^{1/2}$ over $\{0,1,\ldots,T\}$ for some $T \gg \log(1/\opt)/\log\log(1/\opt)$,  
we will attempt to modify the distribution $\Bin(m,(1+\delta)/m)$ on $\{0,1,\ldots,T\}$, 
so that it matches its $k$ low-degree moments with $\Bin(m,1/m)$. 

Recalling that the pmf values of our two products are non-trivially large on $\{0,1,\ldots,T\}$, 
we should expect that we will only need to change the distribution by roughly 
$O_k(\delta/T) \ll \opt$ in order to match $k$ moments. Now,   
if these were {\em continuous} distributions supported on $[0,T]$, 
established techniques (see, e.g., Chapter 8 of~\cite{diakonikolas2023algorithmic}) 
should suffice to deal with this. 
Unfortunately, in our discrete setting, we have to modify our original distribution 
to match moments {\em while only changing its density at integer values}. 
Intuitively speaking, if the parameter $T$ is large enough relative to $k$, 
these integers are sufficiently finely spaced that this restriction should not matter. 
Formally, we can prove this statement (Proposition~\ref{prop:int-mm}) by formulating 
the moment-matching requirements as a linear programming problem 
and comparing {\em the dual versions} of the integer and continuous versions. 
At a high-level, we can show that a feasible solution can be transferred from one to the other, 
so long as for any low-degree polynomial $p$, the maximum of $|p|$ on $[0,T]$ 
is not too far from its maximum on $\{0,1,\ldots,T\}$. 
Fortunately, this statement (Lemma~\ref{lem:disc-cont}) can be shown by leveraging 
a technical result in~\cite{KaneKP17}. 

\vspace{-0.2cm}

\subsection{Discussion and Open Problems} \label{ssec:open}

Our work leaves open a number of interesting directions for future research. 
The connection we uncover between agnostic tomography and robust statistics 
appears quite promising, and we expect that it extends well beyond the settings studied here. 
On the one hand, it is plausible that tools from robust statistics can be used to obtain 
(semi-)agnostic guarantees for many other physically relevant classes of quantum states. 
On the other hand, the quantum estimation problems considered here 
may in turn motivate the development of new robust estimation algorithms 
in regimes that have not previously been studied in the classical literature. 
We conclude by highlighting a few concrete open problems.

First, our algorithm applies in the regime where the optimal error 
is relatively small, namely when $\opt$ is at most a sufficiently small constant. 
It is an interesting open question whether these results can be extended 
to the more general setting in which $\opt$ is close to $1$. 
We conjecture that such an extension may be possible 
using techniques from classical \emph{list learning}~\cite{charikar2017learning,DiakonikolasKS18}.

Second, it is natural to ask whether our results can be extended 
to other classes of mean-field approximations, 
for example those arising in fermionic or bosonic systems, 
i.e., agnostic versions of algorithms such as~\cite{aaronson2021efficient,bittel2025optimal} in the fermionic setting, or~\cite{mele2025learning,bittel2025energy,bittel2025optimal,chen2026towards} in the bosonic setting, among many others. 
Progress in this direction would likely require new robust estimation algorithms 
for distribution families such as determinantal point processes~\cite{kulesza2012determinantal} and their less well-understood bosonic analogues.

Finally, as discussed in the introduction, mixed-state ansatz classes 
arise naturally in quantum estimation problems involving thermal states. 
Recent work has given new non-agnostic algorithms for this problem for thermal states 
of structured classes of Hamiltonians, such as geometrically local Hamiltonians~\cite{chen2025learning}.
Given that these ansatz are unlikely to be exact in most realistic settings, it would be particularly interesting if one obtain similar results in the agnostic regime.

\vspace{-0.2cm}

\subsection{Related Work}
\label{sec:related-work}

\paragraph{Independent Work} Prior to the dissemination of this work, we were made aware of independent work 
\cite{aliakbarpour2025adversarially} which draws a similar conceptual connection to the one we make here between quantum learning with outliers and robust statistics.
In~\cite{aliakbarpour2025adversarially}, they consider the problem of learning an arbitrary $n$-qubit mixed state with single copy measurements, but where an $\epsilon$-fraction of these measurements are potentially corrupted.
They demonstrate an inefficient algorithm which achieves error $O(2^{n / 2} \eps)$ with non-adaptive measurements and 
show that this error is optimal for algorithms with non-adaptive measurements.
In contrast, we demonstrate efficient algorithms for agnostic tomography of $n$-qubit product mixed states that do not suffer {\em any} dimension-dependent loss, but which are necessarily adaptive via an
$\exp (n)$ sample complexity lower bound for $2$-outcome, non-adaptive measurements. Beyond the conceptual connection to robust statistics,
these works use entirely different techniques at the technical level.
We view these contributions as complementary to each other: 
the result of \cite{aliakbarpour2025adversarially} demonstrates that without any structure on the mixed state, agnostic tomography is information-theoretically hard. 
In contrast, we show that under natural structural assumptions, we can circumvent these lower bounds and obtain dimension-independent error in polynomial time.

\vspace{-0.3cm}

\paragraph{Agnostic Tomography}
Agnostic tomography was introduced by~\cite{grewal2024agnostic}, although qualitatively similar notions were considered previously by~\cite{buadescu2021improved} and in the PAC learning setting by
\cite{anshu2024survey}.
Subsequently, efficient algorithms for agnostic tomography were developed for product states~\cite{bakshi2025learning} and stabilizer states~\cite{chen2025stabilizer}. 
Prior to our work, no efficient agnostic tomography algorithms were known for any class of 
mixed state ansatz.

\vspace{-0.3cm}

\paragraph{Robust Statistics}
In a range of machine learning 
scenarios, the standard i.i.d.\ assumption 
does not accurately represent the underlying 
phenomenon. To address such settings, robust statistics \cite{Huber09,diakonikolas2023algorithmic} 
aims to develop accurate 
estimators in the presence of adversarial outliers 
or model misspecification. 
The field originates from the pioneering 
works of Tukey and Huber \cite{tukey1960survey, Huber64} in the 1960s. 
Early work in statistics determined the sample complexity of 
robust estimation for various basic tasks, 
including mean estimation. 
Alas, the multivariate versions of these estimators incurred exponential 
runtime in the dimension. A recent line of work in computer science, starting 
with \cite{DKKLMS16, LaiRV16}, has led to a revival of robust statistics from 
an algorithmic standpoint, 
providing the first robust estimators 
in high dimensions with polynomial sample and time complexity. Since the dissemination of these works, 
there has been an explosion of results 
providing computationally 
efficient robust estimators and associated statistical-computational 
tradeoffs for a wide range of tasks. 
See \cite{diakonikolas2023algorithmic} for a textbook overview of this field. 

The task of robustly learning binary product distributions was one of the 
first problems studied in the field. Specifically,~\cite{DKKLMS16} 
gave an efficient 
algorithm that approximates the underlying distribution within error 
$\tilde{O}(\sqrt{\opt})$ in total variation distance. While one can achieve $O(\opt)$ error information-theoretically,
known Statistical Query lower bounds~\cite{diakonikolas2022optimal} 
rule out efficient algorithms with error better than 
$\Omega(\opt \sqrt{\log(1/\opt)})$. 
This near-quadratic gap between the known upper and lower 
bounds has remained a basic open question in the field. 
Our \Cref{thm:robust-main-inf} 
gives an efficient algorithm which matches the SQ lower bound 
up to a $\log\log(1/\opt)$ factor.

\vspace{-0.2cm}

\subsection{Organization} 
In Section~\ref{sec:bg}, we provide the necessary technical background. 
Section~\ref{sec:reduction} presents our efficient reduction from agnostic tomography to robust estimation. Section~\ref{sec:robust-product} gives 
our near-optimal efficient algorithm for robustly learning binary products, thereby establishing Theorem~\ref{thm:robust-main-inf}. 
Theorem~\ref{thm:q-main-inf} follows by combining the results of these two sections. 
Section~\ref{sec:na-lb} proves our information-theoretic lower bound (Theorem~\ref{thm:quantum-na-lb-inf}) for quantum tomography with non-adaptive measurements. 
Section~\ref{sec:sq} establishes our super-polynomial SQ lower bound, showing that the error guarantee of our robust product learner is essentially the best possible. 
Finally, Section~\ref{sec:qsq} establishes our quantum SQ hardness for agnostic tomography of product mixed states.

\section{Preliminaries}\label{sec:bg}

\paragraph{Notation}
We use the notation $f \lesssim g$ to indicate that $f \leq C g$ 
for some universal constant $C$.
For two $n$-qubit mixed states $\rho,\sigma\in \C^{2^n\times 2^n}$, we let 
$\trd (\rho, \sigma) = \tfrac{1}{2} \norm{\rho - \sigma}_1$ 
denote the trace distance and
$F(\rho, \sigma) = \tr \left(\sqrt{ \sqrt{\rho} \sigma  \sqrt{\rho} } \right)^2$ 
denote the fidelity between the two states.
For two classical distributions $p, q$, we use 
$\tvd (p, q) = \frac{1}{2} \norm{p - q}_1$ to denote the
total variation distance between them.

For a distribution $D$ and a function $f$, 
we let $\E [f(D)] = \E_{X \sim D} [f(X)]$; 
for a multiset $S$, we let $\E [f(S)]$ denote the expectation of $f$ 
over the uniform distribution of the points in $S$.
We also let $\mu(S) = \E [S]$ denote the empirical mean of $S$.

\paragraph{(Semi-)Agnostic Tomography}

Formally, we study the following problem.

\begin{problem}[Agnostic Learning for Product Mixed States]\label{prob:agnostic-pmix}
    Let $\cM_n=\{\pi_1 \otimes \cdots\otimes \pi_n:\pi_i\in \C^{2\times 2}\}$ 
    be the family of product mixed states over $n$ qubits. 
    Given copies of an arbitrary quantum state $\rho\in \C^{2^n\times 2^n}$
    output $\ket{\hat{\pi}} \in \cM_n$ such that with probability $1 - \delta$, we have
    \[
    \trd (\ketbra{\hat{\pi}}{\hat{\pi}}, \rho) \leq f (\opt) + \eps \; ,
    \]
    where $\opt = \inf_{\ket{\pi} \in \cM_n} \trd (\ketbra{\pi}{\pi}, \rho)$.
\end{problem}

\noindent A couple of remarks are in order: First, as is standard in this literature, 
the desired accuracy guarantee is measured with respect to the trace distance. 
Second, the function $f: \R_+ \mapsto \R_+$ quantifying the final error ought to 
satisfy $\lim_{t \rightarrow 0}f(t)=0$. While information-theoretically one can 
achieve $f(\eps) = O(\eps)$ with polynomial number of copies 
(e.g., via shadow tomography~\cite{buadescu2021improved}), 
no non-trivial error guarantee was previously known for polynomial-time algorithms.

\paragraph{Robust Statistics}
We now record the basic setup of clasical robust statistics.
We will restrict ourselves to the background that is necessary for this work.
The interested reader is referred to~\cite{li2018principled,diakonikolas2019recent,diakonikolas2023algorithmic} 
for an in-depth treatment of the topic.

We first recall the standard contamination model of $\eps$-corruption 
from robust statistics:
\begin{definition}[$\eps$-corruption,~\cite{DKKLMS16}] 
\label{def:corruption}
    We say that a multi-set $S$ of $n$ points is an \emph{$\eps$-corrupted} set of samples from a distribution $p$ if we can write $S = S_g \cup S_b \setminus S_r$, where:
    \begin{itemize}
        \item $S_g$ is a set of $n$ i.i.d. samples from $p$,
        \item $S_r \subset S_g$, and $|S_r| = |S_b| = \eps n$. 
    \end{itemize}
\end{definition}
The above contamination model is closely related to the more traditional 
statistical notion of gross (total variation distance) corruption:
\begin{definition}[$\eps$-general, non-adaptive contamination]
   We say that a set $S$ of $n$ points is an \emph{$\eps$-contaminated} 
   set of samples from a distribution $p$ if $S$ consists of 
   $n$ i.i.d.\ samples from some distribution $q$ satisfying $\tvd(p, q) \leq \eps$.
\end{definition}

The following standard fact (see, e.g.,~\cite{DKKLMS16}) 
relates the two models:
\begin{fact}
\label{fact:contamination-to-corruption}
    Let $S$ be an $\eps$-contaminated set of samples from $p$.
    Then, for any $c > 0$, with probability $1 - \exp (- O(c \eps n))$, we have that $S$ is an $(1 + c) \eps$-corrupted set of samples from $p$.
\end{fact}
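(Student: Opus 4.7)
The plan is to exhibit the $(1+c)\eps$-corruption decomposition directly, using the coupling characterization of total variation distance. Since $\tvd(p, q) \le \eps$, there exists a joint distribution (the maximal coupling) on $(X, Y)$ with marginals $X \sim p$ and $Y \sim q$ such that $\Pr[X \ne Y] = \tvd(p, q) \le \eps$. I would draw $n$ i.i.d.\ pairs $(X_1, Y_1), \ldots, (X_n, Y_n)$ from this coupling. By construction, the multi-set $S_g := \{X_1, \ldots, X_n\}$ is $n$ i.i.d.\ samples from $p$, and the given contaminated multi-set $S = \{Y_1, \ldots, Y_n\}$ can be taken to be the $Y$-marginals of the same pairs.

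The next step is to control the disagreement set $I := \{i : X_i \ne Y_i\}$. Its size is stochastically dominated by $\mathrm{Bin}(n, \eps)$, so $\E[|I|] \le \eps n$, and a standard multiplicative Chernoff bound gives
\[
\Pr\bigl[\, |I| > (1+c)\eps n \,\bigr] \;\le\; \exp(-\Omega(c\eps n)),
\]
where the hidden constant absorbs the two Chernoff regimes $c \in (0,1]$ and $c \ge 1$ (which respectively give $\exp(-c^2 \eps n / 3)$ and $\exp(-c\eps n / 3)$). On this high-probability event, pick any $I' \supseteq I$ with $|I'| = (1+c)\eps n$ and set $S_r := \{X_i : i \in I'\}$ and $S_b := \{Y_i : i \in I'\}$. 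Both have size $(1+c)\eps n$ and $S_r \subseteq S_g$ by definition. Since $X_i = Y_i$ for every $i \notin I' \supseteq I$, one checks that $(S_g \setminus S_r) \cup S_b = \{Y_i : i \notin I'\} \cup \{Y_i : i \in I'\} = S$, which is exactly the decomposition required by Definition~\ref{def:corruption}.

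I do not expect a substantive obstacle: the entire argument is essentially a single Chernoff estimate wrapped in the bookkeeping of the corruption model. The only mild care point is matching the exponent $\exp(-O(c\eps n))$ claimed in the statement; this is immediate for $c = \Omega(1)$, and for small $c$ it follows from the $\exp(-c^2 \eps n / 3)$ form of Chernoff with the $O(\cdot)$ interpreted asymptotically in $\eps n$ with $c$ held fixed (which is the only regime in which the fact is applied downstream).
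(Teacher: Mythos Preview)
Your argument is correct and is the standard one: couple $q$ to $p$ via the maximal coupling, bound the size of the disagreement set with a multiplicative Chernoff bound, and read off the corruption decomposition. The paper does not supply a proof of this fact at all (it is stated as a ``standard fact'' and left unproved), so there is nothing to compare against; your write-up is exactly the kind of justification the paper is implicitly invoking. Your caveat about the exponent for small $c$ is also accurate and appropriately flagged.
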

In other words, up to sub-constant factors in $\eps$ 
(which do not affect our guarantees), 
the setting of $\eps$-corruption is strictly more general 
than general non-adaptive contamination.

The class of distributions we will be concerned with is the 
set of product distributions over the binary, $n$-dimensional hypercube $\{0,1\}^n$.
Denote the set of such distributions $\cP_n$.
Note that any such distribution is determined by its mean vector.
Consequently, there are two natural choices for estimands: 
the mean and the underlying density.
\begin{problem}[Robust Mean Estimation for Binary Product Distributions]\label{prob:robust-mean}
    Given $N$ $\eps$-contaminated samples from a distribution $p \in \cP_n$ 
    with mean $\mu$, output an estimate $\hat{\mu}\in \R^n$ such 
    that $\|\hat{\mu}-\mu\|_2 \leq f_{\rm mean}(\epsilon)$ 
    with probability $1 - \delta$.
\end{problem}

\begin{problem}[Robust Density Estimation for Binary Product Distributions]\label{prob:robust-density}
Given $N$ $\eps$-contaminated samples from a distribution $p \in \cP_n$, 
a binary product distribution $\hat{p}$ such that $\tvd(\hat{p},p)\leq f_{\rm density}(\epsilon)$ with probability $1 - \delta$. 
\end{problem}

It is worth mentioning that the total variation distance guarantee in 
Problem~\ref{prob:robust-density} is stronger than the $\ell_2$ mean estimation 
guarantee in Problem~\ref{prob:robust-mean}. Indeed, since the total variation 
distance between two binary products is at least proportional to the $\ell_2$-distance between their means, any algorithm for Problem~\ref{prob:robust-density} 
immediately gives an algorithm for Problem~\ref{prob:robust-mean} with essentially 
the same error. Unfortunately, the other direction does not hold, as 
the $\ell_2$-distance between the means of two binary products can be much smaller than their total variation distance, particularly when the distribution is unbalanced.

Here we are interested in developing an algorithm for 
Problem~\ref{prob:robust-density} that incurs polynomial sample and computational 
complexity, and yields a near-optimal error rate. The best known algorithmic guarantee from \citep{DKKLMS16} achieves error 
$f_{\rm density}(\epsilon)=O( \sqrt{\epsilon \log(1/\epsilon)})$. On the lower bound side, the SQ lower bound of \cite{diakonikolas2022optimal} still applies. Consequently, prior to this work, there was a near-quadratic gap between the best known upper and lower bounds for Problem~\ref{prob:robust-density}. 

\paragraph{Robust Hypothesis Selection}

We will require the following robust hypothesis selection routine \cite{DKKLMS16}:
\begin{lemma}[Robust Hypothesis Selection]
\label{lem:robust-hypothesis-selection}
    Let $\mathcal{C}$ be a class of distributions, and let $\mathcal{M}$ be a finite set of distributions.
    Suppose that for some $N$ and $\eps > 0$, there is an algorithm which, given a set of $\eps$-corrupted samples from a distribution $p$ of size $N$, outputs in time $T_{\mathrm{alg}}$ a list of $M$ distributions $q_1, \ldots, q_L$ so that all the $q_i$ can be sampled from in time $T_{\mathrm{samp}}$.
    Suppose further that these $q_i \in \mathcal{M}$ for all $i = 1, \ldots, L$ deterministically, and that $\tvd (q_i, p) \leq \eps$ for some $i = 1, \ldots, L$.
    Then, there is an algorithm which takes $O(N + \tfrac{\log |\mathcal{M}| + \log 1 / \delta}{\eps^2})$ samples, and which outputs $q_j$ so that $\tvd (q_j, p) \leq O(\eps)$.
\end{lemma}

\begin{remark}
As is standard, Lemma~\ref{lem:robust-hypothesis-selection} allows us to efficiently reduce to the case where the fraction of outliers $\eps$ (in Definition~\ref{def:corruption}), 
or the value of $\opt$ (denoting the total variation distance of the best fit product distribution   
in the formalism of Definition~\ref{def:agnostic-product-intro}), is known to the algorithm. 
See Section~\ref{ssec:robust-prep} for a formal proof of this translation.
\end{remark}

\section{Reduction from Agnostic Tomography to Robust Estimation} \label{sec:reduction}
Our main result in this section is the following efficient reduction:
\begin{theorem}\label{thm:reduction}
    Given algorithms for Problems~\ref{prob:robust-mean} and \ref{prob:robust-density} that run in time $T_{\rm mean}$ and $T_{\rm density}$, achieve error rates $f_{\rm mean}$ and $f_{\rm density}$, have sample complexities $N_{\rm mean}$ and $N_{\rm density}$, and failure probabilities $\delta / 2$, there exists an algorithm for Problem~\ref{prob:agnostic-pmix} that with probability $1 - \delta$ achieves error
    \[
    f(\opt)\leq \sqrt 6 \; f_{\rm mean}(\opt)+f_{\rm density}(\opt) \; .
    \]
    Moreover, the algorithm uses $3N_{\rm mean} + N_{\rm density}$ single-copy single-qubit measurements and runs in time $T_{\rm mean} +T_{\rm density}$.
\end{theorem}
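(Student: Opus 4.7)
The plan is a two-phase algorithm that converts agnostic product mixed state tomography into two sequential calls to the classical robust primitives, using an adaptively chosen product basis between the phases. Phase~1 uses robust mean estimation to recover approximate eigenvectors for each single-qubit factor $\pi^*_i$; Phase~2 then measures in the recovered product eigenbasis and uses robust density estimation to learn what is now an essentially diagonal product state to near-optimal trace distance.

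For Phase~1, I would split the $N_{L^2}$ copies into three groups of equal size; on group $d \in \{X, Y, Z\}$, measure each qubit in the Pauli-$d$ basis, yielding samples from a distribution $p^{(d)}$ over $\{0,1\}^n$. Since measurement cannot increase trace distance, $\tvd(p^{(d)}, q^{(d)}) \le \eps$, where $q^{(d)}$ is the binary product distribution obtained by measuring $\pi^*$ in this basis; its mean records the Pauli-$d$ Bloch coefficients of $\pi^*_1, \ldots, \pi^*_n$. Invoking the algorithm for Problem~\ref{prob:robust-mean} on each group returns an estimate $\widehat{r}^{(d)}$ with $\ell_2$ error at most $f_{\rm mean}(\eps)$. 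Stitching the three directions together gives approximate single-qubit states $\widehat{\pi}^{(1)}_i = (I + \widehat{r}_i \cdot \vec{\sigma})/2$, which I diagonalize to extract a product eigenbasis $\widehat{U} = \bigotimes_i \widehat{U}_i$.

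In Phase~2, I use the remaining $N_{\rm density}$ copies, measuring qubit $i$ in $\widehat{U}_i$. The outcome distribution $p'$ is $\eps$-close in TV to the binary product distribution $q'$ obtained by measuring $\pi^*$ in $\widehat{U}$, whose $i$-th marginal is the diagonal of $\pi^*_i$ in $\widehat{U}_i$. The algorithm for Problem~\ref{prob:robust-density} then returns $\widehat{q}$ with $\tvd(q', \widehat{q}) \le f_{\rm density}(\eps)$. I output $\widehat{\pi} = \bigotimes_i \widehat{\pi}_i$, where $\widehat{\pi}_i$ is the single-qubit state diagonal in $\widehat{U}_i$ with entries $\widehat{q}_i$. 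Correctness follows by introducing the auxiliary state $\widetilde{\pi} = \bigotimes_i \widetilde{\pi}_i$ with $\widetilde{\pi}_i$ the diagonal of $\pi^*_i$ in $\widehat{U}_i$, and invoking the triangle inequality: Lemma~\ref{lemma:F-rho-rhodiag-taylor} yields $\trd(\pi^*, \widetilde{\pi}) \lesssim f_{\rm mean}(\eps)$, while $\trd(\widetilde{\pi}, \widehat{\pi}) = \tvd(q', \widehat{q}) \le f_{\rm density}(\eps)$ since $\widetilde{\pi}$ and $\widehat{\pi}$ share the same product eigenbasis.

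The main obstacle is the first bound. When $\pi^*_i$ is nearly pure its eigenvectors are ill-conditioned, and a Bloch approximation good only in Frobenius norm could in principle produce eigenvectors far from the true ones in operator norm, so a naive argument fails to give trace-distance control on $\trd(\pi^*, \widetilde{\pi})$. The role of Lemma~\ref{lemma:F-rho-rhodiag-taylor} is precisely to show that this pathology does not cost trace distance: the off-diagonal mass of $\pi^*_i$ in the perturbed basis vanishes quadratically in the eigenvector perturbation, and this cancellation is calibrated exactly to the near-purity regime. Everything else is routine bookkeeping: converting TV-contamination to $\eps$-corruption for the robust algorithms via the cited fact, the sample splitting across the three Pauli groups (absorbed into constant factors), and a union bound over the two invocations to maintain overall failure probability $\delta$, yielding the claimed sample count $N_{L^2} + N_{\rm density}$ and runtime $T_{L^2} + T_{\rm density}$.
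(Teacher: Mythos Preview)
Your proposal is correct and follows essentially the same approach as the paper's proof: the same two-phase algorithm (Pauli-basis robust mean estimation to recover an approximate eigenbasis, then robust density estimation in that basis), the same auxiliary state $\widetilde{\pi}$ (the paper's $\pi'$), the same use of Lemma~\ref{lemma:F-rho-rhodiag-taylor} together with fidelity tensorization and Fuchs--van de Graaf to control $\trd(\pi^*,\widetilde{\pi})$, and the same identification $\trd(\widetilde{\pi},\widehat{\pi})=\tvd(q',\widehat{q})$ for the diagonal term. You are in fact slightly more explicit than the paper about the TV-to-corruption conversion and the three-way sample split for the Pauli directions.
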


\begin{remark} \label{rem:only-density}
{\em Since Problem~\ref{prob:robust-density} subsumes Problem~\ref{prob:robust-mean}, 
we could have alternatively simply used an algorithm 
for Problem~\ref{prob:robust-density}, incurring a final error 
$f(\opt)\leq (1+\sqrt 6)\; f_{\rm density}(\opt)$} (within 
a factor of $2$ of the above guarantee). We chose to phrase Theorem~\ref{thm:reduction} in this way to illustrate our two-phase reduction approach (see the pseudo-code given in Algorithm~\ref{alg:reduction}).
\end{remark}

\begin{algorithm}
\caption{Black-box Reduction from Agnostic Tomography to Robust Estimation}
\label{alg:reduction}
\KwIn{
$N=3N_{\rm mean}+N_{\rm density}$ copies of an $n$-qubit mixed state $\rho\in \C^{2^n\times 2^n}$. Oracle access to algorithms $\mathcal{A}_{\rm mean}$ and $\mathcal{A}_{\rm density}$ which solve Problems~\ref{prob:robust-mean} and \ref{prob:robust-density}. }
\KwOut{A product mixed state $\hat{\pi}$}

\ForEach{$P \in \{X, Y, Z\}$}{
    Define the POVM $\mathcal{M}_P = \bigotimes_{j=1}^{n}\{ (I_j+P_j)/2,\ (I_j-P_j)/2 \}$ where $I_j,P_j$ are Pauli operators on the $j$-th qubit. \\
    Take $N_{\rm mean}$ copies of $\rho$ and measure $\mathcal{M}_P$ on each copy to get outcomes $s_{i,P}\in \{0,1\}^{n}$ for $1\leq i\leq N_{\rm mean}$. \\
    $\Tilde{\mu}_P\leftarrow \mathcal{A}_{\rm mean}(\{s_{i,P}\}_{i=1}^{N_{\rm mean}})$ where $\hat{\mu}_P \in [0,1]^n$. 
}
\ForEach{$j\in [n]$}{
    Let $\Tilde c_{j, P}=1-2\Tilde{\mu}_{j,P}$ for $P\in \{X,Y,Z\}$ and $1\leq j\leq n$. Let $\Tilde{c}_{i}=(\Tilde c_{i,X}, \Tilde c_{i,Y},\Tilde c_{i,Z})$. \\
    Construct $\Tilde \pi_j=\frac{1}{2}(I+\Tilde{c}_{j}\cdot \vec{\sigma}_j)$ where $\sigma_j=(X_j,Y_j,Z_j)$ are the Pauli operators for the $j$-th qubit. \\
    Diagonalize $\Tilde \pi_j$ to get eigenvectors $\{\ket{\Tilde u_j},\ket{\Tilde v_j}\}$ ordered by decreasing eigenvalue magnitude. 
}
Define the POVM 
$\mathcal{M}_{\mathrm{learned}} = 
   \bigotimes_{j=1}^{n}\{ \ketbra{\tilde{u}_j}{\tilde{u}_j}, 
                           \ketbra{\tilde{v}_j}{\tilde{v}_j} \}$.

Take $N_{\rm density}$ copies of $\rho$ and measure $\mathcal{M}$ on each copy to get outcomes $s_i\in \{0,1\}^n$ for $1\leq i\leq N_{\rm density}$. \\ 
$\hat{\lambda}\leftarrow \mathcal{A}_{\rm density}(\{s_{i}\}_{i=1}^{N_{\rm density}})$ where $\hat{\lambda}\in [0,1]^n$. \\
\textbf{Return} $\hat{\pi}=\bigotimes_{j=1}^{n}\hat\pi_j$ where $\hat\pi_j=(1-\hat\lambda_j)\ketbra{\Tilde u_j}{\Tilde u_j}+\hat\lambda_j\ketbra{\Tilde v_j}{\Tilde v_j}$.
\end{algorithm}

Combining Theorem~\ref{thm:reduction} with Theorem~\ref{thm:robust-main-inf} (see Theorem~\ref{thm:robust-main} for a more detailed statement), we obtain the following corollary 
, which is a detailed statement of Theorem~\ref{thm:q-main-inf}.

\begin{corollary}
\label{cor:quantum-main}
Let $\eps, \delta > 0$.
    Let $\rho$ be an $n$-qubit density matrix.
    There is an algorithm, using only single-copy, single-qubit measurements, 
    which given $N \geq N_0$ copies of $\rho$, where 
    $N_0 = \widetilde{O} \left( \frac{n^2 \log(1/\delta)}{\eps^2} \right)$, 
    runs in $\poly(N)$ time, and outputs with probability at least $1-\delta$ 
    a description of a product state $\widehat{\pi}$ such 
    that $\trd(\rho, \hat{\pi}) \lesssim \opt \cdot \log 1 / \opt + \eps$, where $\opt = \inf_{\pi \in \cM_n} \trd (\pi, \rho)$.
\end{corollary}

\subsection{Setup} 
Before describing and analyzing our reduction, 
we first establish a formal connection 
between the corruption models in the quantum and classical settings. The following fact follows from the variational characterization of trace distance:
\begin{lemma}\label{lemma:tvd-leq-trd}
    Let $\rho,\sigma\in \C^{2^n\times 2^n}$ be two density matrices, and let $p$ and $q$ denote the corresponding distributions over measurement outcomes obtained by measuring $\rho$ and $\sigma$ with a POVM $\cM$. Then, 
    $\tvd(p,q)\leq \trd(\rho,\sigma) \;$.
\end{lemma}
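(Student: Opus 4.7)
The plan is to recall the variational characterization of trace distance and exhibit an explicit operator realizing the TV distance between $p$ and $q^*$ as a feasible point in that maximization. Specifically, a convenient form is
\[
\trd(\rho,\pi^*) \;=\; \max_{0 \preceq M \preceq I} \; \tr\!\bigl(M(\rho-\pi^*)\bigr),
\]
which follows from the standard Jordan-decomposition proof: write $\rho - \pi^*$ as its positive and negative parts and observe that the projector onto the positive part achieves the maximum, while any $M$ with $0 \preceq M \preceq I$ is a valid test.

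Next I would use the analogous variational formula for total variation distance. Writing $\tvd(p,q^*) = \max_{T \subseteq \mathrm{outcomes}} \bigl(p(T) - q^*(T)\bigr)$, we can take $T$ to be the set of outcomes $x$ on which $p(x) \geq q^*(x)$. With this choice, and using $p(x) = \tr(M_x \rho)$ and $q^*(x) = \tr(M_x \pi^*)$, we get
\[
\tvd(p,q^*) \;=\; \sum_{x \in T} \tr\!\bigl(M_x(\rho-\pi^*)\bigr) \;=\; \tr\!\bigl(M_T (\rho-\pi^*)\bigr), \qquad M_T := \sum_{x \in T} M_x.
\]

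The key observation is then that $M_T$ is feasible for the trace-distance maximization. Since $\{M_x\}$ is a POVM, each $M_x \succeq 0$, so $M_T \succeq 0$; and since $I - M_T = \sum_{x \notin T} M_x \succeq 0$, we also have $M_T \preceq I$. Hence $0 \preceq M_T \preceq I$, and plugging into the variational formula for trace distance yields
\[
\tvd(p,q^*) \;=\; \tr\!\bigl(M_T (\rho-\pi^*)\bigr) \;\leq\; \max_{0 \preceq M \preceq I} \tr\!\bigl(M(\rho-\pi^*)\bigr) \;=\; \trd(\rho,\pi^*),
\]
which is the desired inequality.

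There is essentially no main obstacle here; this is a textbook data-processing-type fact, and the only subtlety worth double-checking is that the variational form of trace distance I use is stated over all $M$ with $0 \preceq M \preceq I$ (not just projectors), which is the convenient version for this argument and is standard. If one prefers the projector-only form, the same bound still follows because the general-$M$ maximum equals the projector maximum via the Jordan decomposition, so either formulation suffices.
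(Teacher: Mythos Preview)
Your argument is correct and is precisely the standard variational-characterization proof the paper alludes to; the paper itself does not spell out a proof, simply remarking that the lemma ``follows from the variational characterization of trace distance.'' Your write-up supplies exactly the details behind that remark.
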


Considering the trace distance guarantee in the setup of 
Problem~\ref{prob:agnostic-pmix}, if $\rho$ is $\epsilon$-close in 
trace distance to a product mixed state $\pi$, measuring $\rho$ 
with a POVM $\cM$ and seeing some outcome can be viewed as 
measuring $\pi$ with $\cM$ and seeing an $\epsilon$-contaminated 
outcome. Since applying any set of single-qubit measurements to 
$\rho$ gives us an $\epsilon$-contaminated draw from a product 
distribution, we are able to apply our robust learners from 
Problems~\ref{prob:robust-mean} and \ref{prob:robust-density} 
on the outcomes of such measurements. 

Naively, we begin by trying to directly learn the Pauli 
coefficients of each qubit by measuring in the $X^{\otimes n},Y^{\otimes n},$ and $Z^{\otimes n}$
Pauli bases and applying robust mean estimation for each of the 
three. Since Problem~\ref{prob:robust-mean} 
gives a guarantee in the $\ell_2$-norm, 
this allows us to construct a matrix of Pauli coefficients 
$\Tilde{c}\in \C^{n\times 3}$ such that 
$\|\Tilde{c}_{:,P}-c_{:,P}\|_2\lesssim f_{\rm mean}(\opt)$ 
for each coordinate $P\in \{X,Y,Z\}$, 
where $c_{:,P}=(c_{1,P},\dots,c_{n,P})$ 
are the respective Pauli coefficients of $\pi$. 
Then the corresponding product mixed state $\Tilde{\pi}=\bigotimes_{j=1}^{n}\Tilde{\pi}_j$ defined 
by these coefficients is close to the best product mixed 
state approximation in the sum of the Frobenius norm 
across each qubit, i.e., 
$\sum_{j=1}^{n}\|\Tilde\pi_j-\pi_j\|_F\lesssim f_{\rm mean}(\opt)$. 
If each qubit was sufficiently mixed, 
i.e., $\|\Tilde{c}_j\|_2\leq 1-\delta$ for all $1\leq j\leq n$ for some constant $\delta$, 
the naive approach would suffice to bound the trace distance. 
However, when certain qubits of the best product mixed state 
approximation are close to pure, that is $\|\Tilde{c}_j\|_2$ 
is close to $1$, the eigenvalues of $\pi_j$ are very unbalanced 
causing the 
bound to fail. 
This is analogous to the reason why $\ell_2$ control 
of the mean of a product distribution fails 
to bound the total variation distance 
when coordinates of the mean are very close to $0$ or $1$. 

Fortunately, we show in Section~\ref{sec:pf:thm:reduction} that our rough 
estimate in $\ell_2$-norm can be refined with another application of 
robust estimation. Specifically, we can salvage the estimate 
$\Tilde{\pi}$ by observing that $\pi$ is approximately diagonal 
in the eigenbasis of $\Tilde{\pi}$. 
Since $\sum_{j=1}^{n}\|\Tilde\pi_j-\pi_j\|_F\lesssim f_{\rm mean}(\opt)$ from the $\ell_2$ guarantee of robust mean estimation, 
the sum of the magnitudes of the off-diagonal terms of each $\pi_j$ 
in this eigenbasis is on the order of $f_{\rm mean}(\opt)$ as well. Because these off-diagonals are controlled, 
the second step of Algorithm~\ref{alg:reduction} learns 
the diagonal terms of $\pi$ using robust density estimation. 
Since Theorem~\ref{alg:main} holds for unbalanced product distributions, this handles the case when qubits of $\pi$ are close to pure. 

Of course, the analysis of Algorithm~\ref{alg:reduction} requires 
a more careful treatment of the off-diagonal terms 
to justify the fact that they can be ignored. 
In particular, we would like to control the off-diagonal terms 
of each qubit separately. While trace distance does not tensorize, 
we apply the Fuchs-van de Graaf inequality to upper bound 
by fidelity, which does tensorize. Then, it suffices to show that 
the off-diagonal contributions for each qubit only quadratically 
decrease the fidelity. We begin by doing so 
in Lemma~\ref{lemma:F-rho-rhodiag-taylor}, 
and then proceed to proving Theorem~\ref{thm:reduction} in Section~\ref{sec:pf:thm:reduction}.

\begin{lemma}\label{lemma:F-rho-rhodiag-taylor}
    Consider the following approximately-diagonal mixed state:
    $$\rho=\begin{bmatrix}
        \sigma_1 & a\\ \overline{a} & \sigma_2 
    \end{bmatrix}=\underbrace{\begin{bmatrix}
        \sigma_1 & 0\\ 0 & \sigma_2 
    \end{bmatrix}}_{\rho_{\rm diag}}+\begin{bmatrix}
        0 & a\\ \overline{a} & 0 
    \end{bmatrix} \;,$$
    where $t=|a|\ll 1$ and $\sigma_1>\sigma_2$. Then, $F(\rho,\rho_{\rm diag})\geq 1-2|a|^2$.
\end{lemma}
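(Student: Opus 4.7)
The plan is to invoke the standard closed-form expression for fidelity between two qubit density matrices, namely the Jozsa-type identity
\[
F(\rho,\sigma) \;=\; \tr(\rho\sigma) + 2\sqrt{\det(\rho)\,\det(\sigma)} \,,
\]
which holds for any two $2\times 2$ states. This identity follows from the fact that for a $2\times 2$ PSD matrix $A$ with eigenvalues $\lambda_1,\lambda_2$, one has $(\tr \sqrt{A})^2 = \lambda_1 + \lambda_2 + 2\sqrt{\lambda_1\lambda_2} = \tr A + 2\sqrt{\det A}$, applied to $A = \sqrt{\rho}\,\sigma\,\sqrt{\rho}$ together with the cyclic identities $\tr A = \tr(\rho\sigma)$ and $\det A = \det(\rho)\det(\sigma)$. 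The identity reduces the computation of $F(\rho,\rho_{\rm diag})$ to elementary calculations: $\tr(\rho\,\rho_{\rm diag}) = \sigma_1^2 + \sigma_2^2$, $\det(\rho_{\rm diag}) = \sigma_1\sigma_2$, and $\det(\rho) = \sigma_1\sigma_2 - |a|^2$.

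Substituting $\sigma_1^2 + \sigma_2^2 = 1 - 2\sigma_1\sigma_2$ (from unit trace $\sigma_1 + \sigma_2 = 1$) and writing $s := \sigma_1\sigma_2$, the fidelity becomes
\[
F(\rho,\rho_{\rm diag}) \;=\; 1 - 2s + 2\sqrt{s\,(s - |a|^2)} \,.
\]
The desired bound $F \ge 1 - 2|a|^2$ then collapses to the single inequality $\sqrt{s(s-|a|^2)} \ge s - |a|^2$. Positivity of $\rho$ forces $\det(\rho) = s - |a|^2 \ge 0$, so both sides are nonnegative; squaring gives $s(s - |a|^2) \ge (s - |a|^2)^2$, which rearranges to $|a|^2 (s - |a|^2) \ge 0$ and is immediate. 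This delivers the lemma with $C = 2$. A quick Taylor expansion in $|a|^2/s$ in fact shows $F \approx 1 - |a|^2 - |a|^4/(4s)$, so the leading-order constant in $|a|^2$ is really $1$ and the stated constant $C = 2$ has comfortable slack in the $|a| \ll 1$ regime.

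The only step that does any real work is invoking the closed-form qubit fidelity; everything after that is a half-line of algebra together with the positivity constraint $\det(\rho) \ge 0$. The hypothesis $\sigma_1 > \sigma_2$ plays no role in the argument and appears only to fix an eigenvalue ordering convention; likewise, the hypothesis $|a| \ll 1$ is only used implicitly through $|a|^2 \le s$, which is in any case guaranteed by positivity of $\rho$. If one preferred to avoid quoting the Jozsa identity, one could alternatively compute $\sqrt{\rho_{\rm diag}} = \mathrm{diag}(\sqrt{\sigma_1},\sqrt{\sigma_2})$ directly, diagonalize the resulting $2\times 2$ matrix $\sqrt{\rho_{\rm diag}}\,\rho\,\sqrt{\rho_{\rm diag}}$ via the quadratic formula, and sum the square roots of its two eigenvalues; this collapses to exactly the same invariants $\tr(\rho\,\rho_{\rm diag})$ and $\det(\rho)\det(\rho_{\rm diag})$, so the route via the Jozsa formula is strictly cleaner.
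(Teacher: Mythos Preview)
Your proof is correct and essentially follows the same route as the paper: both arrive at the identical closed form $F(\rho,\rho_{\rm diag}) = 1 - 2s + 2\sqrt{s(s-|a|^2)}$ with $s=\sigma_1\sigma_2$, and then bound the square-root term below by $s-|a|^2$. The only cosmetic difference is that the paper reaches this formula by explicitly diagonalizing $\sqrt{\rho_{\rm diag}}\,\rho\,\sqrt{\rho_{\rm diag}}$ (the alternative you describe in your last paragraph), whereas you shortcut to it via the qubit Jozsa identity.
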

\begin{proof}
    Consider the definition of fidelity $F(\rho,\rho_{\rm diag})=(\tr\sqrt{\sqrt{\rho_{\rm diag}}\rho \sqrt{\rho_{\rm diag}}})^2$. Expanding, we can write: 
    $$\sqrt{\rho_{\rm diag}}\rho \sqrt{\rho_{\rm diag}}=\begin{bmatrix}
        \sigma_1^2 & a\sqrt{\sigma_1 \sigma_2} \\ \overline{a}\sqrt{\sigma_1 \sigma_2} & \sigma_2^2 \;.
    \end{bmatrix}$$
    Then, we have the characteristic equation
    \begin{align*}
    0&=(\sigma_1^2-\lambda)(\sigma_2^2-\lambda)-t^2\sigma_1\sigma_2=\lambda^2-(\sigma_1^2+\sigma_2^2)\lambda +\sigma_1^2\sigma_2^2-t^2\sigma_1\sigma_2
    \end{align*}
    which yields eigenvalues $$\lambda_{\pm}=\frac{\sigma_1^2+\sigma_2^2\pm \sqrt{(\sigma_1^2-\sigma_2^2)^2+4t^2\sigma_1\sigma_2}}{2} \;.$$ 
    Then, we obtain  
    \begin{align*}
    F(\rho,\rho_{\rm diag})&=\left(\tr\sqrt{\sqrt{\rho_{\rm diag}}\rho \sqrt{\rho_{\rm diag}}}\right )^2\\
    &=\left(\sqrt{\lambda_+(t)}+\sqrt{\lambda_-(t)}\right)^2\\
    &=\lambda_+(t)+\lambda_-(t)+2\sqrt{\lambda_+(t)\lambda_-(t)}\\
    &=\sigma_1^2+\sigma_2^2+\sqrt{(\sigma_1^2+\sigma_2^2)^2-((\sigma_1^2-\sigma_2^2)^2+4t^2\sigma_1\sigma_2)}\\
    &=\sigma_1^2+\sigma_2^2+2\sqrt{\sigma_1^2\sigma_2^2-t^2\sigma_1\sigma_2} \;.
    \end{align*}
    Let $\gamma=\sigma_1\sigma_2$ such that $\sigma_1^2+\sigma_2^2=(\sigma_1+\sigma_2)^2-2\sigma_1\sigma_2=1-2\gamma$. It then follows that 
    $$F(\rho,\rho_{\rm diag})=1-2\gamma+2\sqrt{\gamma^2-\gamma t^2}=1-2t^2\cdot \frac{\sqrt\gamma}{\sqrt\gamma+\sqrt{\gamma-t^2}}\geq 1-2t^2 \;,$$
    which completes the proof of Lemma~\ref{lemma:F-rho-rhodiag-taylor}.
\end{proof}
\subsection{Proof of Theorem~\ref{thm:reduction}}\label{sec:pf:thm:reduction}
We are now ready to prove Theorem~\ref{thm:reduction}.
\begin{proof}
    For product mixed state $\pi=\bigotimes_{j=1}^{n}\pi_j\in \cM_n$, we can decompose:
    $$\pi_j\equiv \frac{1}{2}(I+c_j\cdot \sigma_j),$$
    where $c_j\in \R^3$ with $\|c_j\|_2\leq 1$ and $\sigma_j=(X_j,Y_j,Z_j)$ are the Pauli operators on the $j$-th qubit. Now, consider the POVM $\left\{\frac{I+X}{2},\frac{I-X}{2}\right\}^{\otimes n}$ on $\pi$. The probability of each measurement outcome for the $j$-th qubit is: 
    \begin{align*}
p_{j,0}&\equiv \tr \pi_j  \frac{I+ X_j}{2}=\frac{1}{4}\tr ((I+c_{j,X} X+c_{j,Y} Y c_{j,Z} Z)(I+ X))=\frac{1+ c_{j,X}}{2} \;.
    \end{align*}
Then, the distribution of outcomes for the whole POVM is simply a binary 
product distribution, $q$, over $\{0,1\}^n$, where the mean of each coordinate is $\mu_j=\frac{1-c_{j,X}}{2}$. Measuring $\rho$ with this POVM gives a distribution, $p$, over the same hypercube. By Lemma~\ref{lemma:tvd-leq-trd}, we then know that $\tvd(p,p)\leq \trd(\rho,\pi)\leq \opt$. Applying our algorithm for Problem~\ref{prob:robust-mean}, 
we can recover a $\Tilde \mu$ such that:
$$\|\Tilde \mu-\mu\|_2\leq f_{\rm mean}(\opt) + \epsilon \; .$$
Doing this for the two other POVMs generated by replacing $X$ with $Y$ 
and then $Z$, for any $P\in \{X,Y,Z\}$, we recover 
an estimate $\Tilde c_{:,P}\in \R^n$ such that 
$\|\Tilde {c}_{:,P}-{c}_{:,P}\|_2\leq 2f_{\rm mean}(\opt)$ where $c_{:,P}=(c_{1,P},\dots,c_{n,P})$. This gives a matrix of coefficients $\Tilde c\in \C^{n\times 3}$. 
From these, we can construct $\Tilde{\pi}=\bigotimes_j \Tilde{\pi}_j\in \cM_n$, 
where 
$$\Tilde{\pi}_j\equiv \frac{1}{2}(I+\Tilde c_j\cdot \sigma_j)$$
with $\Tilde c_j=(c_{j,X},c_{j,Y},c_{j,Z})$ and $\sum_j \|\Tilde{\pi}_j-\pi_j\|_F^2\leq 6 (f_{\rm mean}(\opt) + \epsilon)^2$. 

If $\|\Tilde c_j\|_2\leq 1-\delta$ for all $j\in [n]$, one can show that 
this error already suffices to achieve small trace distance. 
Since this is not necessarily the case, we must correct 
our estimator $\pi_j$. Our motivation is to use $\Tilde c_j$ to construct 
a new basis in which $\pi_j$ is approximately diagonal. 
We can then learn the diagonal entries by measuring in this basis 
and applying robust density estimation for arbitrary binary product
distributions to learn the diagonal. 

Specifically, we decompose:
$$\Tilde{\pi}_j=(1-\Tilde \lambda_j) \ketbra{\Tilde u_j}{\Tilde u_j}+\Tilde \lambda_j\ketbra{\Tilde v_j}{\Tilde v_j} \;,$$
where $\ket{u_j},\ket{v_j}\in \C^2$ are the eigenvectors ordered by eigenvalue magnitude. In the $\{\ket{\Tilde u_j},\ket{\Tilde v_j}\}$ basis, we can write 
$$\Tilde{\pi}_j=\begin{bmatrix}
    1-\Tilde\lambda_j & 0 \\ 0 & \Tilde\lambda_j
\end{bmatrix}\quad \textrm{and} \quad \pi_j=\begin{bmatrix}
    1-\lambda_j & a_j \\ \overline{a}_j & \lambda_j
\end{bmatrix} \;,$$
where we have control of the off-diagonal via 
$\sum_j |a_j|^2\leq \frac 12 \sum_j \|\Tilde \pi_j-\pi_j\|_F^2\leq 3 (f_{\rm mean}(\opt) + \eps)^2$. 
Then, we can measure the POVM:
$$\bigotimes_{i=1}^{n} \{\ketbra{\Tilde u_j}{\Tilde u_j},\ketbra{\Tilde v_j}{\Tilde v_j}\}$$
such that the distribution of outcomes when applying the 
POVM to $\pi$ is a binary product distribution $q_{uv}$ 
over $\{0,1\}^n$ with mean vector $\lambda\in \R^n$. 
The effect of this POVM when actually applied to $\rho$ 
gives an arbitrary product distribution $p_{uv}$, 
which by Lemma~\ref{lemma:tvd-leq-trd} satisfies $\tvd(p_{uv},q_{uv})\leq \trd(\rho,\pi)\leq \opt$. 
Applying our algorithm for Problem~\ref{prob:robust-density}, 
we can then recover $\hat{\lambda}\in \R^{n}$ such that:
$$\tvd(\Bern(\hat{\lambda}),\Bern(\lambda))\leq f_{\rm density}(\opt) + \eps \;,$$
where $\Bern(\lambda)$ denotes the binary product 
distribution where the $j$-th marginal is $\Bern(\lambda_j)$. 
Thus, we construct:
$$\hat{\pi}=\bigotimes_{j=1}^{n}\hat{\pi}_j,\quad \hat{\pi}_j=\begin{bmatrix}
    1-\hat{\lambda}_j & 0 \\ 0 & \hat{\lambda}_j
\end{bmatrix}$$
written in the $\{\ket{u_j},\ket{v_j}\}$ basis, which we claim 
achieves the desired trace distance bound. 
To show this, let $\pi'$ be the diagonal portion of $\pi$ 
in the learned basis:
$$\pi'=\bigotimes_j \pi'_j,\quad \pi'_j=\begin{bmatrix}
    1-\lambda_j & 0\\ 0& \lambda_j
\end{bmatrix} \;.$$
By Fuchs-van de Graaf and Lemma~\ref{lemma:F-rho-rhodiag-taylor}, we have that 
\begin{align*}
    \trd(\pi,\pi')\leq \sqrt{1-\prod_{j=1}^{n}F(\pi_j,\pi'_j)}\leq  \sqrt{1-\prod_{j=1}^{n}(1-2|a_j|)^2}\leq \sqrt{2\sum_j |a_j|^2}\leq \sqrt 6\, f_{\rm mean}(\opt) + \eps \;,
\end{align*}
This shows that $\pi$ is sufficiently diagonal in the basis learned by the first round of measurement. Then, we have that 
\begin{align*}
    \trd(\pi',\hat\pi)=\tvd(\Bern(\lambda),\Bern(\hat\lambda))\leq f_{\rm density}(\opt) + \epsilon
\end{align*}
meaning that our diagonal estimate is a good estimate in trace distance. Thus, we conclude that 
\begin{align*}
    \trd(\pi,\hat{\pi})\leq \trd(\pi,\pi')+\trd(\pi',\hat{\pi})\leq \sqrt 6 \; f_{\rm mean}(\opt)+f_{\rm density}(\opt) + 2 \eps \; .
\end{align*}
which, up to adjusting constants in the $\eps$, completes the proof of Theorem~\ref{thm:reduction}. 
\end{proof}

\subsection{Semi-Agnostic Pure Product State Tomography}
\label{sec:pure}
Recalling the motivating related work of learning the closest product state in fidelity \citep{bakshi2025learning}, we ask if our reduction in Theorem~\ref{thm:reduction} can be modified for the original pure state setting. Fortunately, the answer is yes. Since pure product states are a subclass of the product mixed states which we have considered thus far, we show that the estimated product mixed state from Theorem~\ref{thm:reduction} can be \textit{rounded} to a pure product state while preserving the semi-agnostic error guarantee. We state this formally in the following theorem. 
Recall we let $\Pi_n$ denote the set of pure product states.

\begin{theorem}[Agnostic Pure Product State Tomography]\label{thm:agnostic-pure}
Let $\epsilon>0$, and let $\delta>0$. Let $\rho$ be an $n$-qubit density matrix,

 There is an algorithm, using only non-adaptively chosen, single-copy, single-qubit measurements, which given $N\geq N_0$ copies of $\rho$, where $N_0= O \left(\frac{n\log(1/\delta)}{\epsilon^2}\right)$, runs in $\poly(N)$ time, and outputs with probability at least $1-\delta$ a description of a pure product state $\ket{\hat \pi}$ such that $\trd(\rho,\ketbra{\hat \pi}{\hat \pi}) = O(\opt \sqrt{\log(1/\opt)} )+\eps$,  
where $\opt = \opt(\Pi_n) \eqdef \inf_{\ket{\pi} \in \Pi_n} \trd(\ketbra{\pi}{\pi}x,\rho)$.
\end{theorem}
\begin{proof}
    Let $\ketbra{\pi_j}=\frac 1 2 (I_j+c_j\cdot \sigma_j)$ be the Pauli decomposition of each qubit as before. Since $\ket{\pi_j}$ is a pure state, $\|c_j\|=1$. By Theorem~\ref{thm:reduction}, the first round of Algorithm~\ref{alg:reduction} outputs a matrix of coefficient estimates $\Tilde c\in \C^{n\times 3}$ such that 
    $\|\Tilde c -c\|_F^2\leq 12 \left( f_{\rm mean}^2(\opt) + \epsilon \right) $
    where $\Tilde \pi_j\equiv \frac 1 2 (I_j+\Tilde c_j\cdot \sigma_j)$. Now, let 
    $$\ketbra{\hat \pi_j}{\hat \pi_j}\equiv \frac 1 2 (I_j+\hat c_j \cdot \sigma_j)$$
    where $\hat c_j=\Tilde c_j /\|\Tilde c_j\|_2$ is the rounded Bloch vector. We claim that this rounded pure product state is close to $\pi$ in trace distance. Bounding the coefficient change due to rounding,
   \begin{align*}
   \|\hat c_j- c_j\|_2 &\leq \| \hat c_j - \Tilde c_j \|_2 + \| \Tilde c_j - c_j \|_2 \\
   &= 1 - \norm{\Tilde c_j}_2 + \norm{\Tilde c_j - c_j}_2 \\
   &\leq 2 \|\Tilde c_j-c_j\|_2 \; .
   \end{align*} 
    Considering the fidelity between the estimated and true pure state, 
    \begin{align*}
        F(\ket{\hat \pi_j}, \ket{\pi_j})&=|\braket{\hat \pi_j}{\pi_j}|^2 =\tr (\hat \pi_j \pi_j)=\frac{1+\hat c_j\cdot c_j}{2}=1-\frac{\|\hat c_j-c_j\|_2^2}{4} \; ,
    \end{align*}
    and thus
    \begin{align*}
        F(\ket{\hat \pi},\ket{\pi})&=\prod_{j=1}^{n}F(\ket{\hat \pi_j}, \ket{\pi_j})\geq 1-\frac 1 4 \sum_{j=1}^{n}\|\hat c_j-c_j\|_2^2 \geq 1- 12 (f_{\rm mean}(\opt) + \eps)^2 \; .
    \end{align*}
    Then, by Fuchs-van de Graaf,
    \begin{align*}
        \trd(\ketbra{\hat \pi}{\hat \pi}, \ketbra{\pi}{\pi})\leq \sqrt{1-F(\ket{\hat \pi}, \ket{\pi})}\leq 2\sqrt 3 \left( \;f_{\rm mean}(\opt) + \epsilon \right) \;.
    \end{align*}
    Using our robust estimation primitive for binary products from Theorem~\ref{thm:main-l2}, and by slightly adjusting the choice of $\epsilon$ in the proof, we obtain the desired semi-agnostic guarantee. 
\end{proof}
\noindent
Notably, Theorem~\ref{thm:agnostic-pure} only relies on the non-adaptive portion of Algorithm~\ref{alg:reduction} and performs no additional measurements.
In particular, the measurements required are non-adaptively chosen, single-qubit measurements.

In the special case where $\rho$ itself is also pure, note that our algorithm also implies an algorithm for semi-agnostic learning in fidelity: this is because for any two pure states $\ket{\psi}, \ket{\phi}$, we have that
\[
F(\ket{\psi}, \ket{\phi}) = 1 - \Theta (\trd (\ketbra{\psi}{\psi}, \ketbra{\phi}{\phi}))^2 \; .
\]
Combining this result with Theorem~\ref{thm:agnostic-pure}, we have:
\begin{corollary}
Let $\epsilon_0>0$ be a sufficiently small universal constant, and let $\delta>0$. Let $\ket{\psi}$ be an $n$-qubit pure state so that there is a pure product state $\ket{\pi}=\ket{\pi_1}\dots\ket{\pi_n}$ such that $F(\ket{\psi}, \ket{\pi}) \eqdef \opt_F (\ket{\psi}) = 1 - \eps^2$ for some $\epsilon\leq \epsilon_0$. 
There is a non-adaptive algorithm, using only single-copy, single-qubit measurements, which given $N\geq N_0$ copies of $\ket{\psi}$, where $N_0 = O \left(\frac{n\log(1/\delta)}{\epsilon^2}\right)$, runs in $\poly(N)$ time, and outputs with probability at least $1-\delta$ a description of a pure product state $\ket{\hat \pi}$ such that $F(\ket{\psi},\ket{\hat \pi}) \geq \opt_F (\ket{\psi}) - O( \epsilon^2 \sqrt{\log 1/\epsilon})$.
\end{corollary}
\noindent
The closest comparison between this result is the algorithm by \citep{bakshi2025learning} which gives an polynomial time algorithm under the same guarantee which achieves fidelity $1 - (1 - o(1)) \eps$.
In contrast, we provide a weaker semi-agnostic guarantee.
However, their algorithm requires somewhat more sophisticated measurements across each copy of the state $\rho$.
In particular, their algorithm requires first performing a qubit-wise unitary to every qubit, then learning the value of $\bra{e_i} \rho \ket{0^n}$, where $e_i$ is the string which is $1$ in position $i$ and $0$ otherwise.
This can be done with qubit-wise measurements, but would (at least na\"{i}vely) require an intermediate measurement of all but the $i$-th qubit, which may be challenging to implement on realistic architectures.
In contrast, our algorithm only requires single-qubit, non-adaptively chosen measurements, and thus may be substantially more practical to implement.

\section{Robustly Learning Binary Products Near-Optimally } \label{sec:robust-product}

In this section, we give our novel algorithm for robustly learning product distributions.

\subsection{Additional Technical Background} 
In this section, we will need several well-known facts from probability theory.
\begin{definition}[Hellinger distance]
    For two distributions $p, q$, the \emph{Hellinger distance} between $p$ and $q$ is defined to be 
    \[
    \hd (p, q) = \int \left( \sqrt{dp} - \sqrt{dq} \right)^2 \; .
    \]
\end{definition}
We will need the following basic facts about Hellinger distance:
\begin{fact}
\label{fact:hellinger}
    Let $p, q$ be two distributions. Then:
    \begin{itemize}
        \item \textbf{Hellinger upper bounds TV:} $\tvd(p, q) \leq \sqrt{2} \hd (p, q)$.
        \item \textbf{Subadditivity:} If $p = (p_n, \ldots, p_n)$ and $q = (q_n, \ldots, q_n)$ are both product distributions across the coordinates, then
        \[
        \hd(p, q)^2 \leq \sum_{i = 1}^n \hd (p_i, q_i)^2 \; . 
        \]
    \end{itemize}
\end{fact}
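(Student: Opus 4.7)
The plan is to dispatch the two bullets separately; both are short, classical calculations once the right identity is invoked.

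For the TV-Hellinger bound, I would start from the $L^1$ expression for total variation and factor the integrand using $a-b = (\sqrt a - \sqrt b)(\sqrt a + \sqrt b)$:
\[
\tvd(p,q) \;=\; \tfrac{1}{2}\int |p-q|\,d\lambda \;=\; \tfrac{1}{2}\int \bigl|\sqrt{p}-\sqrt{q}\bigr|\cdot\bigl(\sqrt{p}+\sqrt{q}\bigr)\,d\lambda.
\]
Applying Cauchy-Schwarz to this product yields $\tvd(p,q) \le \tfrac{1}{2}\sqrt{\hd(p,q)}\cdot\sqrt{\int(\sqrt{p}+\sqrt{q})^2\,d\lambda}$. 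The second factor is bounded using $(a+b)^2 \le 2(a^2+b^2)$ by $\sqrt{2\int(p+q)\,d\lambda} = 2$, giving the claimed inequality up to the constant dictated by the paper's normalization convention for $\hd$.

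For the subadditivity bullet, the key is to pass through the Hellinger affinity $A(p,q) = \int \sqrt{pq}\,d\lambda$, which satisfies $A(p,q) = 1 - \hd(p,q)/2$ directly from the definition (expanding the square and using $\int p = \int q = 1$). The reason to prefer $A$ is that it tensorizes exactly: for $p = \prod_i p_i$ and $q = \prod_i q_i$, Fubini gives $A(p,q) = \prod_i A(p_i,q_i)$. Substituting back,
\[
\hd(p,q) \;=\; 2\Bigl(1 - \prod_{i=1}^n\bigl(1 - \tfrac{1}{2}\hd(p_i,q_i)\bigr)\Bigr).
\]
I would then invoke the elementary inequality $1 - \prod_i(1-x_i) \le \sum_i x_i$ for $x_i \in [0,1]$ (a one-line induction on $n$ using $(1-x)(1-y) \ge 1 - x - y$), which yields exactly the stated subadditive bound, again modulo the squared/unsquared convention for $\hd$.

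I do not anticipate any real obstacle: both facts reduce to a single application of Cauchy-Schwarz (for the first) or the tensorization of the Hellinger affinity combined with the product-versus-sum inequality (for the second). The only point that requires care is matching the paper's normalization of $\hd$, so that the $\sqrt{2}$ prefactor in the first bullet and the square on the second bullet are consistent with whether $\hd$ is being treated as $\int(\sqrt{p}-\sqrt{q})^2$ or its square root; once this is pinned down, each proof is essentially a one-line calculation.
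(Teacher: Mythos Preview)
The paper does not actually prove this fact; it is stated as a well-known property of Hellinger distance and left without argument. Your proposed proofs are the standard ones and are correct. You are also right to flag the normalization issue: as written, the paper's definition $\hd(p,q)=\int(\sqrt{dp}-\sqrt{dq})^2$ is the \emph{squared} Hellinger distance, yet the bound $\tvd\le\sqrt{2}\,\hd$ and the subadditivity statement $\hd(p,q)^2\le\sum_i\hd(p_i,q_i)^2$ (as well as the Bernoulli calculation in~\eqref{eq:bernoulli-hellinger}) only match if $\hd$ denotes the square root of that integral (with the $1/2$ normalization). Under the convention actually used downstream, your Cauchy--Schwarz argument gives exactly $\tvd\le\sqrt{2}\,\hd$, and your affinity-tensorization argument gives exactly the stated subadditivity.
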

\noindent
By direct calculation, one can show that $p$ and $q$ are Bernoulli with means $\mu, \nu$ respectively, then
\begin{equation}
\label{eq:bernoulli-hellinger}
\hd(p, q)^2 = O \left( \frac{(\mu -\nu)^2}{\min (\mu, 1 - \mu)} \right) \; .
\end{equation}
\Cref{eq:bernoulli-hellinger} and Fact~\ref{fact:hellinger} together immediately imply:
\begin{corollary}
\label{cor:hellinger-bound}
Let $p, q$ be two binary product distributions with mean vectors $\mu, \nu \in \R^n$, and suppose $\mu_i \leq 2/3$ for all $i = 1, \ldots, n$.
Then,
\[
\tvd (p, q)^2 \leq \sum_{i = 1}^n \frac{(\mu_i - \nu_i)^2}{\mu_i} \; .
\]
\end{corollary}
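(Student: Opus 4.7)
The plan is to combine the three ingredients already on the table: the Hellinger-to-TV conversion, subadditivity of Hellinger distance over product distributions, and the closed-form Bernoulli bound in \Cref{eq:bernoulli-hellinger}. Concretely, I would first apply the first bullet of \Cref{fact:hellinger} to get $\tvd(p,q)^2 \leq 2 \hd(p,q)^2$, then the subadditivity bullet to push the Hellinger squared inside the sum, reducing the problem to a per-coordinate bound $\hd(p_i, q_i)^2 \lesssim (\mu_i - \nu_i)^2 / \mu_i$.

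The only nonobvious step is translating the $\min(\mu_i, 1 - \mu_i)$ in the denominator of \Cref{eq:bernoulli-hellinger} into the $\mu_i$ that appears in the statement. Here the hypothesis $\mu_i \leq 2/3$ is exactly what is needed. If $\mu_i \leq 1/2$ then $\min(\mu_i, 1 - \mu_i) = \mu_i$ on the nose. If $1/2 < \mu_i \leq 2/3$ then $1 - \mu_i \geq 1/3 \geq \mu_i / 2$, so again $\min(\mu_i, 1 - \mu_i) \geq \mu_i/2$. Thus in both cases
\[
\frac{1}{\min(\mu_i, 1 - \mu_i)} \leq \frac{2}{\mu_i},
\]
and plugging into the Bernoulli Hellinger formula yields $\hd(p_i, q_i)^2 = O((\mu_i - \nu_i)^2 / \mu_i)$.

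Summing over $i$ and chaining with the two \Cref{fact:hellinger} bullets gives
\[
\tvd(p, q)^2 \;\leq\; 2 \hd(p, q)^2 \;\leq\; 2 \sum_{i=1}^n \hd(p_i, q_i)^2 \;\lesssim\; \sum_{i=1}^n \frac{(\mu_i - \nu_i)^2}{\mu_i},
\]
which is the desired inequality (absorbing the universal constants into the $\lesssim$ already used throughout the paper). There is no serious obstacle here; the entire proof is a short chain of already-stated estimates, and the only content is the elementary case split on whether $\mu_i$ lies above or below $1/2$ to swap $\min(\mu_i, 1 - \mu_i)$ for $\mu_i$.
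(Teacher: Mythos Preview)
Your proposal is correct and is exactly the argument the paper has in mind: it states the corollary as an immediate consequence of \Cref{eq:bernoulli-hellinger} together with the two bullets of \Cref{fact:hellinger}, and your case split on $\mu_i \lessgtr 1/2$ is precisely how one converts $\min(\mu_i,1-\mu_i)$ to $\mu_i$ under the hypothesis $\mu_i \le 2/3$. The only cosmetic point is that the chain of inequalities yields the bound up to a universal constant (as you note with $\lesssim$), which is consistent with how the corollary is used downstream.
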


\paragraph{The set $\sW_{n, \eps}$}
We will heavily leverage the standard \emph{filtering} framework for our upper bounds, and in particular, the weighted filter~\cite{dong2019quantum,diakonikolas2019recent,diakonikolas2023algorithmic}.
We will chiefly follow the presentation in~\cite{dong2019quantum}.
For simplicity of notation, we let $S = \{X_1, \ldots, X_N\}$, and we will associate indices with their associated points as necessary, i.e., we will say $i \in S_g$ if $X_i \in S_g$, etc.
We will assign to each point a nonnegative weight $w_i$, that we will evolve over the course of the algorithm.
Formally, we denote the set of allowable weights by $\Gamma_n:$
\begin{equation}
    \label{eq:gamma}
    \Gamma_{N} = \left\{ w \in \R^N: \sum_{i = 1}^N w_i \leq 1 \; \mbox{and} \; w_i \geq 0 \; \mbox{for all $i = 1, \ldots, N$} \right\} \; .
\end{equation}
For any set $T \subseteq [N]$, let $w(T) \in \Gamma_N$ be defined by $w(T)_i = \frac{1}{N} \cdot \mathbf{1}_{i \in T}$ for all $i = 1, \ldots, N$.
For two sets of weights $w, w' \in \Gamma_N$, we say $w \leq w'$ if $w_i \leq w'_i$ for all $i = 1, \ldots, N$.
We also define weighted notions of the mean and covariance: for any $w \in \Gamma_N \setminus \{0\}$, we let
\begin{equation}
    \mu(w) = \sum_{i = 1}^N \frac{w_i}{\|w\|_1} X_i \;, \qquad \mbox{and} \qquad \Sigma (w) = \sum_{i = 1}^N w_i (X_i - \mu(w)) (X_i - \mu(w))^\top \; .
\end{equation}
More generally, for any function $f$, we let $\E_{X \sim w} [f(X)] = \frac{1}{\norm{w}_1} \sum_{i \in S} w_i f(X_i)$.

Our algorithm will primarily work with the following set of weights:
\begin{equation}
    \label{wne}
    \sW_{N, \eps} = \left\{ w \in \Gamma_N: w \leq w(S) \;, \mbox{and} \; \norm{w - w(S)}_1 \leq \eps \right\} \; .
\end{equation}
The key invariant that we will need about these weights is the following.
For any vector $w$, let $\nnz(w)$ denote the number of nonzero entries of $w$.

\begin{lemma}[see, e.g.,~\cite{DKKLMS16,dong2019quantum}]
\label{lem:filtering}
    Let $\tau \in \R^N \setminus \{0\}$ be a entrywise non-negative, and let $w \in \Gamma_N$.
    Let $S = A \cup B$ for disjoint $A, B$ and assume that
    \[
    \sum_{i \in A} w_i \tau_i \leq \sum_{i \in B} w_i \tau_i \; .
    \]
    Consider the updated set of weights $w' \leq w$ given by
    \[
    w'_i = \left( 1 - \frac{\tau_i}{\tau_{\max}}\right) w_i \; ,
    \]
    where $\tau_{\max} = \max_{i \in [n]} \tau_i$.
    Then $w'$ satisfies $\nnz (w') < \nnz (w)$, and
        \[
        \sum_{i \in A} (w_i - w_i') < \sum_{i \in B} (w_i - w_i') \; .
        \]
\end{lemma}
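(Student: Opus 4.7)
The plan is to derive both conclusions from a single algebraic identity. Directly from the definition of $w'_i$,
\[
w_i - w'_i \;=\; \frac{\tau_i}{\tau_{\max}}\, w_i,
\]
which is just $\tau_i w_i$ rescaled by the positive constant $1/\tau_{\max}$; positivity holds because $\tau \neq 0$ is entrywise nonnegative, so $\tau_{\max} > 0$. Both conclusions then follow by inspection.

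For the second conclusion, I would sum this identity over $A$ and over $B$ and multiply the hypothesis through by the positive scalar $1/\tau_{\max}$:
\[
\sum_{i \in A}(w_i - w'_i) \;=\; \frac{1}{\tau_{\max}}\sum_{i \in A} w_i \tau_i \;\leq\; \frac{1}{\tau_{\max}}\sum_{i \in B} w_i \tau_i \;=\; \sum_{i \in B}(w_i - w'_i).
\]
One minor point I would flag is that the hypothesis as stated is non-strict while the conclusion is strict; matching them requires either a strict hypothesis or the mild non-degeneracy $\sum_{i \in B} w_i \tau_i > 0$, which always holds in the algorithmic applications (the filter is only ever invoked when there is strictly positive $\tau$-mass to remove). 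I would either strengthen the hypothesis or note this as a harmless typographical artifact.

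For the first conclusion, I would exhibit a single coordinate that gets zeroed out. Pick any $i^\star$ attaining $\tau_{i^\star} = \tau_{\max}$ subject to $w_{i^\star} > 0$. Such an $i^\star$ exists under the standard weighted-filter convention (as used in \cite{DKKLMS16,dong2019quantum}) that $\tau_{\max}$ is the maximum of $\tau_i$ over the current support of $w$; I would state this convention explicitly at the outset, since literally taking the max over all of $[N]$ would break the claim if the maximizer sat outside the support. At this $i^\star$ the identity above yields $w'_{i^\star} = (1 - 1)\,w_{i^\star} = 0$, while everywhere else $0 \leq w'_i \leq w_i$, giving $\nnz(w') \leq \nnz(w) - 1 < \nnz(w)$.

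The lemma is essentially mechanical, and I do not expect any substantive obstacle. The only care required is the convention for $\tau_{\max}$ (to make the $\nnz$ claim valid) and the strict-vs-non-strict matching in the weight inequality; both are stylistic rather than mathematical.
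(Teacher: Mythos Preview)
Your argument is correct and is exactly the standard one; the paper itself does not prove this lemma but cites it from \cite{DKKLMS16,dong2019quantum}, so there is no ``paper's own proof'' to compare against. Your two caveats (the strict-vs-non-strict mismatch in the weight inequality, and the need for $\tau_{\max}$ to be taken over the support of $w$ for the $\nnz$ claim) are both well-spotted and match how the lemma is actually used in the algorithm: the filter is only invoked when the relevant quadratic form is strictly large, and the scores $\tau_i$ are only computed for $i \in S$, the current support.
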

Intuitively speaking, this lemma states that if there is a way to assign scores (the $\tau_i$) to the data points, in a way so that the weighted sum of the scores on $B$ exceeds that on $A$, then there is a way to update the weights in a way which decreases more mass on $B$ than on $A$.
This is the key point of the filtering procedure: roughly, larger scores will correspond to points which seem to be more suspicious.
If we can guarantee that the scores will satisfy this ``larger-on-average`` property on the bad points, then the lemma states that we are guaranteed to decrease more mass on the bad points then the good points.

\subsection{Simple Preprocessing Reductions} \label{ssec:robust-prep}

The following reductions from~\cite{DKKLMS16} will be useful.
First, as observed in Section 7.2.2 of~\cite{DKKLMS16}, if there is any coordinate $i$ so that $\mu (S)_i \leq \eps / n$ or $\mu(S)_i \geq 1 - \eps / n$, then there is a simple polynomial-time algorithm which can identify such coordinates, and which estimates the mean of these coordinate to be $0$ or $1$ respectively, and this will induce an TV error by at most $O(\eps / n)$.
Thus, by a triangle inequality, removing all such coordinates will affect the overall TV error by at most $O(\eps)$, so without loss of generality, we can assume that we have removed all such coordinates, and so we may assume that
\begin{equation}
\label{eq:regularity1}
    \frac{\eps}{n} \leq \mu(S)_i \leq 1 - \frac{\eps}{n} \; ,
\end{equation}
for all $i = 1, \ldots, n$. 
Next, we will use the following:
\begin{lemma}[Lemma 7.26 in~\cite{DKKLMS16}]
    Let $\pi \in \cP_n$ with mean vector $\mu$, and let $S$ be an $\eps$-corrupted set of samples from $\pi$ of size at least $\Omega (n)$.
    Then, with probability $1 - \exp (- \Omega (\eps n))$, there exists a product distribution $\pi'$ with mean vector $\mu'$ so that $S$ is an $1.2 \eps$-corrupted set of samples from $\pi'$, and moreover $\mu'$ satisfies $\mu(S)_i \geq \mu'_i / 3$ and $1 - \mu(S)_i \leq 1 - \mu'_i / 3$.
\end{lemma}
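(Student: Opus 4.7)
My plan is to construct $\pi'$ coordinate-wise so that the multiplicative mean conditions hold by construction, and then produce a thinning coupling between $\pi^N$ and $\pi'^N$ that, together with the given $\eps$-corruption of $S$, realizes $S$ as a $1.2\eps$-corruption from $\pi'$. For the construction, I would set $\mu'_i := \mu_i$ when both $\mu(S)_i \geq \mu_i/3$ and $1 - \mu(S)_i \geq (1-\mu_i)/3$ hold; otherwise shrink $\mu'_i := 3\mu(S)_i$ (symmetrically $\mu'_i := 1 - 3(1-\mu(S)_i)$ if the upper condition is the one that failed). Both multiplicative inequalities then hold by construction, and $\pi'$ is the binary product with these marginals.

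The technical heart is to show $\sum_i (\mu_i - \mu'_i) \leq 0.2\eps$ with failure probability $\exp(-\Omega(\eps n))$ over $S_g$. I would split coordinates into \emph{moderately-biased} ones, with $\mu_i, 1 - \mu_i \geq C\eps$ for a suitable constant, and the remaining \emph{tail} coordinates. For moderate coordinates, a multiplicative Chernoff bound gives $\mu(S_g)_i \geq 2\mu_i/3$ with probability $1 - \exp(-\Omega(\mu_i N))$; since $\eps$-corruption shifts any single empirical mean by at most $\eps$, no adjustment is triggered at any such coordinate, with overall failure probability $\exp(-\Omega(\eps n))$ after a union bound using $N = \Omega(n)$ and $\mu_i \geq C\eps$. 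For the tail coordinates, which have already been restricted by the preprocessing reduction \Cref{eq:regularity1}, each adjustment is bounded by $\mu_i$, and one aggregates using that the total corruption budget is globally bounded: $\sum_{j \in S_r \cup S_b} \|X_j\|_1$ controls the total mass of bit-flips the adversary can enact, and Bernstein concentration pins this to $O(\eps \sum_i \mu_i)$, preventing too many tail coordinates from simultaneously being pushed below the $\mu_i/3$ threshold.

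Given the sum bound, I would couple $X \sim \pi$ with $Y \sim \pi'$ by coordinate-wise thinning on each adjusted coordinate independently: $Y^{(i)} := X^{(i)} \cdot U^{(i)}$ with $U^{(i)} \sim \mathrm{Bern}(\mu'_i/\mu_i)$ on each low-side adjusted coordinate (symmetric on the upper side). This yields $Y \sim \pi'$ with per-sample disagreement probability at most $\sum_i (\mu_i - \mu'_i) \leq 0.2\eps$, so applying the coupling sample-by-sample produces $S'_g \sim \pi'^N$ with $|S_g \triangle S'_g| \leq 0.22\eps N$ except with probability $\exp(-\Omega(\eps N)) \leq \exp(-\Omega(\eps n))$ by a Binomial tail. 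Setting $S'_r := S'_g \setminus S$ and $S'_b := S \setminus S'_g$ as multisets gives the identity $S = (S'_g \setminus S'_r) \cup S'_b$ with $|S'_r| = |S'_b| \leq |S_r| + |S_g \triangle S'_g| \leq 1.2\eps N$, completing the $1.2\eps$-corruption representation.

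The main obstacle is the sum bound: a naive coordinate-by-coordinate union bound incurs an extra factor of $n$ from the tail coordinates, and the resolution is the two-regime analysis above---multiplicative Chernoff concentration eliminates the moderate coordinates almost surely, while a global (rather than per-coordinate) control of the shared corruption budget handles the tail, using the Bernstein bound on $\sum_j \|X_j\|_1$ together with the preprocessing step that has already discarded the extreme-tail coordinates.
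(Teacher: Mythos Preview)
The paper does not prove this lemma; it is cited as Lemma~7.26 of \cite{DKKLMS16} and invoked as a black box to justify \eqref{eq:regularity2}, so there is no in-paper argument to compare against.

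On the merits: your construction of $\mu'$ and the two-regime split into moderate and tail coordinates are the right ideas. But the route through a per-sample thinning coupling together with the bound $\sum_i(\mu_i-\mu'_i)\le 0.2\eps$ does not go through with that constant. Take $\mu_i=\eps/10$ for all $i$: the adversary can spend its entire removal budget to drive the empirical mean to zero on roughly ten coordinates, producing $\sum_i(\mu_i-\mu'_i)\approx\eps$, so the thinning coupling only certifies a $2\eps$-corruption. Your ``global budget'' heuristic for tail coordinates bounds $\tfrac{1}{N}\sum_{j\in S_r}\|X_j\|_1$, but that quantity is $\Theta(\eps\sum_i\mu_i)$ rather than $0.2\eps$, and does not control the per-coordinate adjustments with the needed constant. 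The mechanism that actually delivers the $1.2$ constant is sharper than a TVD bound: to force an adjustment at a tail coordinate $i$, the adversary must place into $S_r$ essentially all good samples with a $1$ in coordinate $i$---and those are precisely the good samples that are incompatible with $\pi'$ at that coordinate. Hence the $\pi'$-incompatible good samples are already contained in $S_r$ and cost no additional budget beyond the original $\eps$; a direct sample-level accounting, not a thinning/TVD bound, is what recovers the constant. A second, smaller issue: since your $\mu'$ depends on $\mu(S)$ and hence on $S_g$, applying thinning sample-by-sample does not immediately yield an $S'_g$ that is i.i.d.\ from $\pi'$ \emph{conditionally on} $\pi'$; this dependence also needs a more careful treatment.
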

In other words by replacing $\pi$ with $\pi'$, this allows us to assume without loss of generality (by incurring a small constant blow-up in $\eps$) that
\begin{equation}
\label{eq:regularity2}
    \mu(S)_i \geq \frac{\mu_i}{3} \;, \qquad \mbox{and} \qquad 1 - \mu(S)_i \leq 1 - \frac{\mu_i}{3} \; ,
\end{equation}
for all $i = 1, \ldots, n$.
In light of these results, for the rest of the section, we will assume~\Cref{eq:regularity1} and~\Cref{eq:regularity2} hold deterministically.

Next, note that we can assume that $\mu_i \leq 2/3$ for all $i = 1, \ldots, n$.
This is because if $\mu_i \geq 2/3$, then $\mu(S)_i \geq 3/5 - \eps > 1/2$ except with exponentially small probability, and so if there is any coordinate $i$ so that $\mu(S)_i \geq 1/2$, we can simply flip the role of $0$ and $1$ in this coordinate, and this will guarantee that, except with vanishing probability, $\mu_i < 2/3$.

\paragraph{Reducing to known $\opt$ and constant $\delta$}
We next describe how to reduce to the case of known $\opt$.
The remainder of the section will be dedicated to the proof of the following theorem:
\begin{theorem}
\label{thm:robust-main}
    Let $\eps_0>0$ be some universal constant.
    There is an algorithm (Algorithm~\ref{alg:main}), which given an $\eps$-corrupted set of samples from an unknown product distribution $p \in \cP_n$, for $\eps \leq \eps_0$, of size $N \geq N_0$, where  $N_0 = \widetilde{O} \left( \frac{n^2 \log 1 / \delta}{\eps^2} \right)$, outputs with probability $0.99$ the mean vector for a product distribution $\hat{p}$ satisfying $\tvd(p, \hat{p}) \lesssim \eps \log 1 / \eps$.
    Moreover, the algorithm runs in time $\poly(N) = \poly (n, 1/\eps)$.
\end{theorem}
Before we show this, we first show how it is sufficient to prove Theorem~\ref{thm:robust-main-inf}. 

\begin{proof}[Proof of Theorem~\ref{thm:robust-main-inf} given Theorem~\ref{thm:robust-main}]
This is essentially the doubling argument described in~\cite{DKKLMS16}.
This argument is more or less standard in the literature (see e.g. Remark 2.24 in~\cite{DKKLMS16}), and so we will be somewhat terse here.
Assume there is an algorithm which, given the dataset $S$, $\eps >0$, and knowledge of $\opt$, outputs with probability $1 - \delta'$ an estimate $\pi$ so that $\tvd (p, \pi) \leq f(\opt) + \eps$.

Now, simply run this algorithm with internal value of $\eps$ set to $\eps, 2 \eps, 4 \eps, \ldots, 1/2$.
Each one generates a candidate solution $\pi_i$, for $i = 1, \ldots,L$, for $L = O(\log 1 / \eps)$.
Let $\mu_1, \ldots, \mu_L$ denote the associated mean vectors.
Round each coordinate of each mean vector to the closest integer multiple of $\eps^2 / n$.
By doing so, we ensure that each output of the algorithm is deterministically in a family of hypotheses $\mathcal{M}$ of size $(n / \eps^2)^n$, and clearly each such hypothesis can be sampled from in linear time.
By Equation~\ref{eq:regularity1} and Corollary~\ref{cor:hellinger-bound}, this changes the TV distance of each $\pi_i$ by at most $\eps$.
We can now apply Lemma~\ref{lem:robust-hypothesis-selection} on these hypotheses.
Let $i$ be the smallest index so that $2^i \eps \geq 1.01 \cdot \opt$.
By Fact~\ref{fact:contamination-to-corruption}, we know that except with $\exp (-c \eps n) \ll \delta$ probability, $S$ is an $2^i \eps$-corrupted set of points from some product distribution $\pi^*$ satisfying $\tvd (\pi^*, p) \leq 2^i \eps$, and hence $\tvd (\pi^*, p_i) \leq O(\opt)$, with probability $\delta'$.
Hence, by Lemma~\ref{lem:robust-hypothesis-selection} and a union bound, with probability $O(\delta' \cdot \log (1 / \eps))$, we can output a $\pi_j$ so that $\tvd (\pi^*, \pi_j) \leq O(\opt)$, whence $\tvd (p, \pi_j) \leq O(\opt)$.
The additional sample overhead is $O(\tfrac{\log |\mathcal{M}| + \log \log (1 / \eps) \log (1 / \delta)}{\eps^2}) = \widetilde{O} \left( \tfrac{n}{\eps^2} \right)$, so this does not add any additional overhead to our overall sample complexity.
Similarly, the runtime of the overall algorithm will still be polynomial.
\end{proof}

\noindent
Therefore, for the rest of this section, we will assume that the algorithm knows $\opt$.
In a slight abuse of notation, we will let $\opt = \eps$.

Finally, we note that by standard robust boosting techniques (see e.g. Lemma 2.23 in~\cite{DKKLMS16}), it suffices to show this for $\delta$ constant, from which we immediately obtain the overall bound.
Thus, for the rest of the section, we will show Theorem~\ref{thm:robust-main} for $\delta$ being a small constant.

Additionally, for the rest of the section, we let $S$ be our $\eps$-corrupted set of samples of size $n$ from $p \in \cP_n$ with mean $\mu$.

\subsection{Characterization of TV Distance between Product Distributions}
Previous work of~\cite{DKKLMS16} obtained suboptimal results for robust learning of binary product distributions, in large part because they did not have a tight characterization of the TV distance.

The first contribution here is to demonstrate such a tight characterization.
The key idea will be to use the following distance:
\begin{definition}\label{def:norm}
    For any vector $\mu \in \R^n$ with $0 \leq \mu_i \leq 2/3$ for all $i = 1, \ldots, n$, let 
    \begin{equation}
        \cT_\mu = \left\{ y \in \R^n: \norm{y}_\infty \leq 1 \; , \mbox{ and } \; \sum_{i = 1}^n \mu_i y_i^2 \leq 1 \right\} \; .
    \end{equation}
    We also denote the dual norm with respect to this set by $\norm{x}_{\mu} = \sup_{y \in \cT_\mu} \iprod{y, x}$.
\end{definition}
Intuitively, this set captures an ``intermediate'' set of test vectors, namely, test vectors which are both bounded in $\ell_\infty$, as well as which are bounded in some relative $\ell_2$ sense, relative to $\mu$.
The idea is that the former set of test vectors form the natural set of dual vectors to the $\ell_1$ norm, and the latter set of test vectors forms the set of dual vectors to some notion of $\chi^2$-divergence.
The idea is that in some coordinates, namely the unbalanced ones, the ``optimal'' witness to the statistical farness of two product distributions should use the $\ell_\infty$ bound, and in the others, the bound one can obtain from the $\chi^2$-divergence ought to be tight.
We can formalize this below:
\begin{theorem}
\label{thm:product-tv-bound}
    Let $\pi, \sigma$ be two Boolean product distributions with mean vectors $\mu, \nu $, and suppose that $0 \leq \mu_i \leq 2/3$ for all $i = 1, \ldots, n$.
    Then
    \begin{equation}
        \tvd (\pi, \sigma) \leq O \left( \min \left( 1, \norm{\mu - \nu}_{\mu}\right) \right) \; .
    \end{equation}
\end{theorem}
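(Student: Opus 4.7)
The plan is to combine two elementary upper bounds on $\tvd(\pi, \sigma)$---the subadditivity/$\ell_1$ bound and the Hellinger/$\chi^2$-type bound of Corollary~\ref{cor:hellinger-bound}---by passing through an intermediate product distribution whose mean is chosen via an optimal decomposition of $\delta := \mu - \nu$. Since $\tvd \leq 1$ is trivial, the substantive work is to show $\tvd(\pi, \sigma) \lesssim \|\delta\|_\mu$.

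The first step is a convex-duality identity. Since $\cT_\mu = B_\infty \cap B_{2,\mu}$ is the intersection of the $\ell_\infty$ unit ball with the ellipsoid $B_{2,\mu} := \{y : \sum_i \mu_i y_i^2 \leq 1\}$, and the support function of an intersection of symmetric convex bodies equals the inf-convolution of the individual support functions, I obtain
\[
\|\delta\|_\mu \;=\; \inf_{\delta = a + b}\left(\|a\|_1 + \sqrt{\sum_i b_i^2/\mu_i}\right).
\]
Next, for any decomposition $\delta = a + b$ satisfying $\mu' := \mu - b \in [0,1]^n$, let $\pi'$ be the product distribution with mean $\mu'$, and apply the triangle inequality $\tvd(\pi, \sigma) \leq \tvd(\pi, \pi') + \tvd(\pi', \sigma)$. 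The first summand is at most $\sqrt{\sum_i b_i^2/\mu_i}$ by Corollary~\ref{cor:hellinger-bound} (since $\pi, \pi'$ have means differing by $b$, and $\mu_i \leq 2/3$). The second summand is at most $\|a\|_1$ by coordinate-wise subadditivity of TV for product distributions (since $\pi', \sigma$ have means differing by $a$, and TV between two Bernoullis equals their mean gap). Thus $\tvd(\pi, \sigma) \leq \|a\|_1 + \sqrt{\sum_i b_i^2/\mu_i}$ for any valid decomposition.

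It remains to verify that the infimum is attained by sign-aligned decompositions (i.e.\ with $a_i, b_i$ matching the sign of $\delta_i$, or zero), for which the validity constraint $\mu - b \in [0,1]^n$ holds automatically. If $a_i$ had sign opposite to $\delta_i$ at the optimum, sliding $a_i$ toward $0$ (and correspondingly $b_i$ toward $\delta_i$) would strictly decrease both $|a_i|$ and $b_i^2$, contradicting optimality; so the optimum is sign-aligned. For sign-aligned decompositions, $|b_i| \leq |\delta_i|$ with $b_i$ and $\delta_i$ agreeing in sign, which together with $\mu_i, \nu_i \in [0,1]$ forces $\mu_i - b_i \in [\min(\mu_i, \nu_i), \max(\mu_i, \nu_i)] \subseteq [0,1]$. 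Taking the infimum over valid sign-aligned decompositions then yields $\tvd(\pi, \sigma) \leq \|\delta\|_\mu$, and combining with $\tvd \leq 1$ gives the theorem.

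The main obstacle I anticipate is the convex-duality step: although the inf-convolution identity for support functions of an intersection is standard, one must verify it applies cleanly to these two specific norms, especially in coordinates where $\mu_i$ is small. An alternative that avoids abstract duality is to directly exhibit the near-optimal decomposition via thresholding---namely, $b_i = \mathrm{sign}(\delta_i) \min(|\delta_i|, \theta \mu_i)$ and $a_i = \delta_i - b_i$ for the appropriate KKT threshold $\theta > 0$---and then construct a concrete test vector $y \in \cT_\mu$ with $y_i = \mathrm{sign}(\delta_i)$ on the ``heavy'' coordinates $\{i : |\delta_i| > \theta \mu_i\}$ and $y_i \propto \delta_i/\mu_i$ elsewhere, checking by case analysis on $\alpha := \sum_{|\delta_i| > \theta \mu_i} \mu_i$ (the cases $\alpha \leq 1/2$ and $\alpha > 1/2$) that the resulting bound is tight up to universal constants.
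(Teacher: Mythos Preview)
Your argument is correct and is a genuinely different (and rather elegant) route from the paper's. The paper does not pass through the inf-convolution identity at all: instead it sorts the coordinates by the ratio $|\delta_i|/\mu_i$, picks the largest prefix $\{1,\dots,k\}$ with $\sum_{i\le k}\mu_i\le 1$, uses subadditivity to bound
\[
\tvd(\pi,\sigma)\;\lesssim\;\max\Bigl\{\textstyle\sum_{i\le k}|\delta_i|,\ \bigl(\sum_{i>k}\delta_i^2/\mu_i\bigr)^{1/2}\Bigr\},
\]
and then in two cases builds an explicit witness $y\in\cT_\mu$ (sign vector on the prefix, or a rescaled $\delta_i/\mu_i$ on the suffix) showing the right-hand side is $O(\|\delta\|_\mu)$. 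In other words, the paper fixes one particular ``hard'' split and certifies it with a concrete dual vector, whereas you take the optimal soft split via duality and certify feasibility of the intermediate mean by the sign-alignment argument.

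What each buys: your approach is shorter and more conceptual, and the sign-alignment step cleanly handles the constraint $\mu-b\in[0,1]^n$ without any case analysis. The paper's approach, on the other hand, yields as a byproduct the explicit two-term characterization $\|\delta\|_\mu\asymp\max\{\sum_{i\le k}|\delta_i|,\ (\sum_{i>k}\delta_i^2/\mu_i)^{1/2}\}$, which it actually reuses later (in Lemma~\ref{lem:rank-one-bound}) to control $\|\delta\delta^\top\|_\mu$; your duality proof does not directly supply that formula. Amusingly, the ``alternative'' you sketch at the end---threshold $b_i=\operatorname{sign}(\delta_i)\min(|\delta_i|,\theta\mu_i)$ and build $y$ by cases on $\sum_{|\delta_i|>\theta\mu_i}\mu_i$---is essentially the paper's proof.

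One small remark on the inf-convolution step you flagged as the main obstacle: the qualification condition is harmless here since $B_\infty$ has nonempty interior, so $h_{B_\infty\cap B_{2,\mu}}=h_{B_\infty}\,\square\,h_{B_{2,\mu}}$ holds exactly; and when $\mu_i=0$ the ellipsoid is unbounded in coordinate $i$, which simply forces $b_i=0$ and puts all of $\delta_i$ into the $\ell_1$ part, consistent with your bound.
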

We note that one can in fact show that this bound is tight up to constant factors (in fact, the proof below also shows this), although we will not directly need this.
\begin{proof}[Proof of~\Cref{thm:product-tv-bound}]
Let $\delta_i = \mu_i - \nu_i$, and let $a_i = |\delta_i| / \mu_i$.
Sort the coordinates in decreasing order of $a_i$, so that without loss of generality, we assume that $a_1 \geq a_2 \geq \ldots \geq a_n$.

Let $k$ be the largest integer so that $\sum_{i \leq k} \mu_i \leq 1$.
Note that $\sum_{i \leq k} \mu_i \geq 1/3$ since each $\mu_i$ is at most $2/3$.
Let $\pi_{\leq k}, \sigma_{\leq k}$ denote the restriction of $\pi$ and $\sigma$ to these coordinates, and let $\pi_{> k}, \sigma_{> k}$ denote the restriction of $\pi$ and $\sigma$ to the remaining coordinates.
By sub-additivity of total variation distance for product distributions, we have that
\[
\tvd (\pi, \sigma) \leq \tvd (\pi_{\leq k}, \sigma_{\leq k}) + \tvd (\pi_{> k}, \sigma_{> k}) = O \left( \max \left(\tvd (\pi_{\leq k}, \sigma_{\leq k}),  \tvd (\pi_{> k}, \sigma_{> k}) \right) \right) \; .
\]
Hence, by a further application of sub-additivity and by Corollary~\ref{cor:hellinger-bound}, we have that
\begin{equation}
\label{eq:max-term}
    \tvd (\pi, \sigma) \leq O \left( \max \left\{ \sum_{i = 1}^k |\delta_i|, \left(\sum_{i \geq k} \frac{\delta_i^2}{\mu_i} \right)^{1/2} \right \} \right) \; .
\end{equation}
We now split into two cases, depending on which term on the RHS dominates.
First, suppose that 
\begin{equation}
 \left( \sum_{i \geq k} \frac{\delta_i^2}{\mu_i} \right)^{1/2}  \leq \sum_{i = 1}^k |\delta_i| \; .
\end{equation}
Then, by the definition of $k$, if we let $y_i = \mathrm{sign} (\mu_i - \nu_i)$ for $i \leq k$ and $y_i = 0$ otherwise, we have that $y \in \cT_\mu$, and so $\tvd(\pi, \sigma) \leq \sup_{y \in \mathcal T_\mu} \iprod{y, \mu - \nu}$, and so the theorem is true in this case.

Otherwise, suppose that
\begin{equation}
\label{eq:chi2-case}
 A = \left( \sum_{i \geq k} \frac{\delta_i^2}{\mu_i} \right)^{1/2}  \geq \sum_{i = 1}^k |\delta_i| \; .
\end{equation}
Note that
\begin{equation}
\label{eq:delta-sum-bound}
\sum_{i = 1}^k |\delta_i| = \sum_{i = 1}^k \mu_i a_i \geq a_{k + 1} \sum_{i = 1}^k \mu_i \geq \frac{1}{3} a_{k + 1} \; .
\end{equation}

In this case, let $c > 0$ be a sufficiently small universal constant, and define $y_i = \tfrac{\mu_i - \nu_i}{3 A \cdot \mu_i}$ for $i > k$, and $y_i = 0$ otherwise.
Observe that, by~\Cref{eq:delta-sum-bound}, we have that
\begin{align*}
    |y_i| \leq  \frac{|\delta_i|}{3 \mu_i \sum_{j = 1}^k |\delta_j|} \leq \frac{a_i}{a_{k + 1}} \leq 1 \; .
\end{align*}
We also have that
\begin{align*}
    \sum_{i = 1}^n \mu_i y_i^2 =  \frac{1}{9A^2} \sum_{i > k} \frac{(\mu_i - \nu_i)^2}{\mu_i} \leq 1 \; ,
\end{align*}
and so these together imply that $y \in \cT_\mu$.
Since 
\begin{align*}
    \iprod{y, \delta} = \frac{1}{3A} \sum_{i > k} \frac{\delta_i^2}{\mu_i} = \frac{A}{3} \; ,
\end{align*}
this implies that in this case, we have $\tvd (\pi, \sigma) \leq O (\sup_{y \in \cT_\mu} \iprod{y, \mu - \nu})$ as well.
This completes the proof.
\end{proof}

\paragraph{A convex relaxation}
We briefly recall the spectral filter for learning the mean of a balanced product distribution from ~\cite{DKKLMS16}. 
In that paper, the key point was that one could upper bound the deviation of the empirical mean by spectral properties of the empirical covariance with the diagonal zeroed out.
By running the filter to successively downweight points that are causing the empirical covariance to have large spectral norm, we can ensure that the resulting set of weighted points has bounded covariance, and moreover, must still have the vast majority of its weight on the good points.
Note that this step corresponds to filtering based the variance of linear test functions $x \mapsto \iprod{x, y}$, where $y$ is a unit vector.

However, to obtain total variation bounds, we should not consider tests based on unit vectors $y$, but rather tests based on vectors $y \in \mathcal T_\mu$, since such vectors witness the difference in TV distance directly.
However, finding a $y$ that maximizes the expectation of this test function over the dataset is computationally nontrivial. Instead, we will want to consider a convex relaxation of this set of test functions.
Formally, let $\mathbb{S}^n_{\geq 0}$ denote the set of symmetric $n \times n$ real-valued positive semi-definite matrices, and define the set
\begin{equation}
\label{eq:st-mu}
\sT_\mu = \left\{ M \in \mathbb{S}^n_{\geq 0}: |M_{ij}| \leq 1 \mbox{ for all $i, j$}, \sum_{i = 1}^n M_{ii} \mu_i \leq 1, \sum_{i,j} M_{ij}^2 \mu_i \mu_j \leq 1, \sum_i \mu_i \cdot \sup_{j} M_{ij}^2  \leq 1 \right\} \; .
\end{equation}
One can easily verify that for all $y \in \cT_\mu$, we have that $y y^\top \in \sT_\mu$. 
Intuitively, the idea is that since the set of $y \in \cT_\mu$ captures $y$ which are simultaneously dual to $\ell_1$ and to the $\chi^2$-divergence, to obtain a good relaxation of this set, we need to enforce all combinations of $\ell_1$ and $\chi^2$-divergences across all rows and columns.

Moreover, because all the constraints are either linear or sums of squares of nonnegative polynomials, this is a convex set.
Moreover, while \eqref{eq:st-mu} encodes exponentially many constraints, one can build a polynomial-time separation oracle for it, and thus by the classic theory of convex programming~\cite{grotschel2012geometric}, one can optimize over this set in polynomial time.

Similarly to before, we can also define the natural dual norm with respect to $\sT_\mu$.
Namely, for any matrix $A$, we let 
\begin{equation}
\norm{A}_\mu = \sup_{M \in \sT_\mu} \left| \iprod{A, M} \right| \; .
\end{equation}
Since $\left|\iprod{A,M}\right|$ can be written as the maximum of two linear objectives optimized over $\sT_\mu$, by standard tools in convex optimization, we can both optimize this objective and find its optimizer in polynomial time:
\begin{lemma}[see e.g.~\cite{grotschel2012geometric}]
\label{lem:convex-programming}
    For any $\delta > 0$, there is an algorithm which runs in time $\poly (n, \log 1/ \delta)$ and which, given $A \in \R^{n \times n}$, outputs $M \in \cT_\mu$ so that $\left| \iprod{A, M} \right| \geq \norm{A}_\mu - \delta$.
\end{lemma}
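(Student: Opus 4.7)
The plan is to reduce the claim to the standard machinery of convex optimization via a separation oracle. The set $\sT_\mu$ is bounded (it is contained in the box $[-1,1]^{n\times n}$) and nonempty (it contains $M=0$), and $|\iprod{A,M}|=\max\{\iprod{A,M},\iprod{-A,M}\}$ is the maximum of two linear objectives. Hence, invoking the ellipsoid method of \cite{grotschel2012geometric} separately on $\pm A$ and returning the better of the two outputs, it suffices to build a $\poly(n)$-time separation oracle for $\sT_\mu$: given a query $M$, either certify $M\in\sT_\mu$ or produce a hyperplane separating $M$ from $\sT_\mu$.

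Each defining constraint is handled individually. The PSD constraint $M\succeq 0$ is separable in polynomial time via an eigendecomposition (if $v$ is an eigenvector of negative eigenvalue, the cut $\iprod{vv^\top,M'}\geq 0$ suffices). The box constraints $|M_{ij}|\leq 1$ and the diagonal constraint $\sum_i \mu_i M_{ii}\leq 1$ are linear and directly checkable in $O(n^2)$ time. The quadratic constraint $\sum_{i,j}M_{ij}^2\mu_i\mu_j\leq 1$ is a convex quadratic in $M$, and is separable by its gradient at any violating point.

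The only mildly subtle constraint is $\sum_i\mu_i\sup_j M_{ij}^2\leq 1$. For each fixed $i$, $\sup_j M_{ij}^2$ is a maximum of finitely many convex quadratics, hence convex, so the whole sum is a convex function of $M$. To separate, compute $j_i^\star\in\arg\max_j M_{ij}^2$ for each $i$ in $O(n^2)$ time and define the convex quadratic
\begin{equation*}
g(M')=\sum_{i=1}^n \mu_i (M'_{i,j_i^\star})^2.
\end{equation*}
By construction $g(M')\leq \sum_i \mu_i\sup_j (M'_{ij})^2$ with equality at $M'=M$. If the constraint is violated at $M$, so that $g(M)>1$, convexity of $g$ yields $g(M')\geq g(M)+\iprod{\nabla g(M),M'-M}$, and since $g$ is a homogeneous quadratic $\iprod{\nabla g(M),M}=2g(M)$. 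Hence every $M'\in\sT_\mu$ satisfies the linear inequality $\iprod{\nabla g(M),M'}\leq 1+g(M)$, while the query $M$ violates it because $2g(M)>1+g(M)$. This gives a valid linear cut computable in $O(n^2)$ time.

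Combining these oracles, we obtain a $\poly(n)$-time separation oracle for $\sT_\mu$, and the ellipsoid method then produces a point $M\in\sT_\mu$ achieving objective value within $\delta$ of the optimum in $\poly(n,\log 1/\delta)$ time. The main conceptual obstacle is the last constraint, which is a sum over $n$ coordinates of a pointwise maximum of $n$ squared linear forms and thus looks like exponentially many quadratic inequalities once linearized via auxiliary variables; the observation that a cut can be generated by $n$ trivial row-wise maximizations completely sidesteps this. All other constraints reduce to textbook separation arguments.
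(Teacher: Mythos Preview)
Your proof is correct and takes exactly the approach the paper has in mind: the paper does not give a proof of this lemma at all, merely remarking in the surrounding text that a polynomial-time separation oracle exists for $\sT_\mu$ (with the last constraint being the only nontrivial one) and then citing \cite{grotschel2012geometric}. Your write-up simply fleshes out that sketch, and in particular your handling of the $\sum_i \mu_i \sup_j M_{ij}^2 \leq 1$ constraint via row-wise maximization to produce a surrogate quadratic is precisely the kind of separation-oracle construction the paper is implicitly invoking.
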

\noindent
We also need the following fact:
\begin{lemma}
\label{lem:rank-one-bound}
    For any vector $\delta \in \R^n$, we have that $\norm{\delta \delta^\top}_\mu = O(\norm{\delta}_\mu^2)$.
\end{lemma}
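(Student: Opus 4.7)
The plan is to expand $\iprod{\delta\delta^\top, M} = \delta^\top M \delta$, use the PSD structure of $M$ to peel off its off-diagonals, and recognize the remaining sum as an inner product against a test vector lying in $\cT_\mu$. The key observation is that because $M \in \sT_\mu$ is positive semidefinite, the Cauchy-Schwarz inequality for the PSD form yields $|M_{ij}| \leq \sqrt{M_{ii} M_{jj}}$ for every $i, j$. Since $\delta^\top M \delta \geq 0$, I can drop absolute values to write
\begin{equation*}
|\iprod{\delta\delta^\top, M}| \;=\; \delta^\top M \delta \;\leq\; \sum_{i,j} |M_{ij}|\,|\delta_i|\,|\delta_j| \;\leq\; \sum_{i,j} \sqrt{M_{ii} M_{jj}}\,|\delta_i|\,|\delta_j| \;=\; \Bigl(\sum_{i} \sqrt{M_{ii}}\,|\delta_i|\Bigr)^{\!2}.
\end{equation*}

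Next, I would exhibit an explicit witness. Define $y \in \R^n$ by $y_i = \sqrt{M_{ii}}\cdot\mathrm{sign}(\delta_i)$, interpreting $\mathrm{sign}(0)$ as $0$. The diagonal case of the bound $|M_{ij}| \leq 1$ (from the definition of $\sT_\mu$) gives $\|y\|_\infty \leq 1$, and the trace-type constraint $\sum_i M_{ii}\mu_i \leq 1$ gives $\sum_i \mu_i y_i^2 = \sum_i \mu_i M_{ii} \leq 1$. Hence $y \in \cT_\mu$, and by the definition of the norm $\|\cdot\|_\mu$ on $\R^n$,
\begin{equation*}
\sum_{i} \sqrt{M_{ii}}\,|\delta_i| \;=\; \iprod{y, \delta} \;\leq\; \|\delta\|_\mu.
\end{equation*}
Substituting back yields $|\iprod{\delta\delta^\top, M}| \leq \|\delta\|_\mu^2$, and taking the supremum over $M \in \sT_\mu$ gives $\|\delta\delta^\top\|_\mu \leq \|\delta\|_\mu^2$, i.e.\ the claim with constant $1$.

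There is not really any obstacle here. The proof only uses two of the four defining constraints of $\sT_\mu$ (the diagonal $\ell_\infty$ bound and the trace constraint), together with positive semidefiniteness; the SOS-style constraints on row-wise $\ell_2$ and $\ell_\infty$ norms are not needed for this direction. This is as expected: the extra constraints in $\sT_\mu$ are there to tighten the relaxation in the \emph{other} direction (controlling how loose $\sT_\mu$ is compared to $\{yy^\top : y \in \cT_\mu\}$), while \Cref{lem:rank-one-bound} is just verifying that $\sT_\mu$ does not over-inflate the rank-one cone when paired with $\delta\delta^\top$.
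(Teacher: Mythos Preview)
Your proof is correct and yields the sharper constant $1$, but it takes a genuinely different route from the paper.

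The paper's argument is coordinate-wise: it invokes the structural characterization from the proof of Theorem~\ref{thm:product-tv-bound}, sorts the coordinates by $|\delta_i|/\mu_i$, splits $\delta^\top M \delta$ into blocks according to the threshold index $k$, and separately bounds the $\ell_1$-type and $\chi^2$-type pieces using the full list of constraints in $\sT_\mu$ (including the row-wise $\sum_{i,j} M_{ij}^2 \mu_i \mu_j \leq 1$ constraint). Your argument bypasses all of this by observing that PSD-ness alone gives $|M_{ij}| \leq \sqrt{M_{ii}M_{jj}}$, which collapses the bilinear form to $(\sum_i \sqrt{M_{ii}}\,|\delta_i|)^2$ and lets you read off a single witness $y \in \cT_\mu$ directly from the diagonal. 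You only need the entrywise bound $M_{ii} \leq 1$ and the trace constraint $\sum_i M_{ii}\mu_i \leq 1$; the remaining two constraints in $\sT_\mu$ are never touched. This is strictly more elementary and, as you note, makes it transparent that the lemma is really about the rank-one part of the relaxation, not the auxiliary row/column norm constraints. The paper's longer proof, on the other hand, exercises more of the machinery and makes the connection to the $\ell_1$/$\chi^2$ dichotomy explicit, which is thematically consistent with the rest of the section even if unnecessary for this particular lemma.
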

\begin{proof}
    From the proof of Theorem~\ref{thm:product-tv-bound}, and specifically~\Cref{eq:max-term} we know that 
    \[
    \norm{\delta}_\mu = \Theta \left( \max \left\{ \sum_{i = 1}^k |\delta_i|, \left(\sum_{i \geq k} \frac{\delta_i^2}{\mu_i} \right)^{1/2} \right \} \right) \; ,
    \]
    where we have taken the same ordering of coordinates and $k$ as in the proof of Theorem~\ref{thm:product-tv-bound}.
    Thus, it suffices to show that $\delta^\top M \delta$ can be upper bounded by the RHS for any $M \in \sT_\mu$.
    Since $M$ is PSD, we have that
    \[
    \delta^\top M \delta \leq 4 \sum_{i, j \leq k} M_{ij} \delta_i \delta_j + 4 \sum_{i, j > k} M_{ij} \delta_i \delta_j \; .
    \]
    The first term can be upper bounded by:
    \begin{align*}
    \sum_{i, j \leq k} M_{ij} \delta_i \delta_j &\leq \sum_{i, j \leq k} |\delta_i| |\delta_j| \leq \left( \sum_{i \leq k} |\delta_i| \right)^2 \; .
    \end{align*}

    On the other hand, we also have that
    \begin{align*}
    \sum_{i, j > k} M_{ij} \delta_i \delta_j &= \sum_{i, j > k} \sqrt{\mu_i \mu_j} M_{ij} \frac{\delta_i}{\sqrt{\mu_i}} \frac{\delta_j}{\sqrt{\mu_j}} \\
    &\leq \left( \sum_{i, j > k} \mu_i \mu_j M_{ij}^2 \right)^{1/2} \left( \sum_{i, j > k} \frac{\delta_i^2 \delta_j^2}{\mu_i \mu_j} \right)^{1/2} \\
    &\leq \sum_{i > k} \frac{\delta_i^2}{\mu_i} \; .
    \end{align*}
    Combining these two inequalities yields the final desired claim.
\end{proof}

\subsection{Regularity Conditions}
As is standard in robust statistics, we will condition on a set of deterministic conditions on the set of uncorrupted points $S_g$ that hold with high probability, and we will show that under these conditions, our algorithm succeeds, for any worst-case perturbation of $S_g$.
These conditions ensure that the empirical mean and variance of any of the types of test functions we will apply to the data are well-concentrated under the uncorrupted set of points.
One wrinkle is that because we have to use test functions from $\sT_\mu$, our regularity condition will also have to take this into account.
Formally:
\begin{definition}
    We say a set of points $T$ is \emph{$\eps$-good} with respect to a binary product distribution $\pi$ with mean $\mu$ if for all $\mu'$ satisfying $\mu'_i \geq \mu_i / 3$ for all $i = 1, \ldots, n$:
    \begin{itemize}
        \item We have that
        \begin{align}
        \label{eq:mean-conc}
            &\norm{\mu(T) - \mu}_\mu \lesssim \eps \log 1 / \eps \; , \mbox{ and} \\
            \label{eq:cov-conc}
            &\norm{\E_{X \sim T}  (X - \mu(T))(X - \mu(T))^\top  - \E_{X \sim \pi}  (X - \mu)(X - \mu)^\top }_{\mu'} \lesssim \eps \log^2 (1 / \eps) \; .
        \end{align}
        \item For all $w \leq w(T)$ with $\norm{w}_1 \leq \eps$, we have that 
        \begin{align}
        \label{eq:mean-dev}
       &\norm{\sum_{i = 1}^n w_i (X_i - \mu)}_\mu \lesssim \eps \log 1 / \eps \; , \mbox{ and} \\
       \label{eq:cov-dev}
       &\norm{ \sum_{i \in T} w_i (X_i - \mu)(X_i - \mu)^\top }_{\mu'} \lesssim \eps \log^2 (1 / \eps) \; .
        \end{align}
    \end{itemize}
\end{definition}

\noindent
The key statistical fact we will require is that a polynomial-sized set of samples from $\pi$ will be $\eps$-good with high probability.
For clarity of exposition, we defer the technical proof of this fact to~\Cref{sec:empirical-good-proof}:

\begin{lemma}
\label{lem:empirical-good}
    Let $\eps> 0$, and let $T = \{ X_1, \ldots, X_N \}$ be a set of $N \geq N_0$ independent samples from $\pi$, where $N_0 = \widetilde{O} \left( \frac{n^2 }{\eps^2} \right)$.
    Then, with probability $0.99$, $T$ is an $\eps$-good set of points for $\pi$.
\end{lemma}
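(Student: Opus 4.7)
The plan is to prove each of the four regularity conditions separately by combining a concentration inequality for a fixed dual element with a uniformity argument that covers the supremum over the relevant dual object ($\cT_\mu$ for the vector conditions, $\sT_{\mu'}$ for the matrix conditions). The required sample size $\widetilde O(n^4 \log(1/\delta)/\eps^2)$ should then come out as the maximum of what each of the four tail bounds requires, with a union bound over all four events and over the (polynomial-size-log) net.

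For the linear conditions (\Cref{eq:mean-conc} and \Cref{eq:mean-dev}), I fix $y\in\cT_\mu$ and write $Z_i(y)=\langle y, X_i-\mu\rangle$, a sequence of i.i.d.\ mean-zero random variables with $|Z_i|\le \|y\|_\infty\le 1$ and variance $\sum_j y_j^2 \mu_j(1-\mu_j)\le 1$. Bernstein's inequality immediately gives $|\frac{1}{N}\sum_i Z_i(y)|\lesssim \sqrt{\log(1/\delta')/N}+\log(1/\delta')/N$, which is $\ll \eps\log(1/\eps)$ for $N=\widetilde\Omega(1/\eps^2)$, handling \Cref{eq:mean-conc}. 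For \Cref{eq:mean-dev}, note that $\sup_{w\le w(T),\|w\|_1\le\eps}\langle y,\sum_i w_i(X_i-\mu)\rangle$ equals $\tfrac{1}{N}$ times the sum of the top $\lceil \eps N\rceil$ signed values of $Z_i(y)$. Since $Z_i(y)$ is $O(1)$-sub-Gaussian, the $\eps$-quantile of $|Z_i(y)|$ is $O(\sqrt{\log(1/\eps)})$, so standard order-statistic concentration gives the sum-of-top-$\eps N$ bound of $O(\eps\sqrt{\log(1/\eps)})\le O(\eps\log(1/\eps))$. Finally, uniformize over $\cT_\mu$ by taking a $1/\poly(n)$-net of size $\exp(\poly(n))$; the extra $\poly(n)$ in the exponent of the net is absorbed by the sample complexity $\widetilde O(n^4/\eps^2)$.

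The main obstacle is the matrix conditions (\Cref{eq:cov-conc} and \Cref{eq:cov-dev}), since we need uniform control of quadratic forms over the rich convex body $\sT_{\mu'}$. For fixed $M\in\sT_{\mu'}$, $\langle M, X_iX_i^\top\rangle = X_i^\top M X_i$ is a degree-$2$ polynomial in the Boolean coordinates of $X_i$. I would apply the de la Peña--Montgomery-Smith decoupling inequality to reduce concentration of $\tfrac1N\sum_i X_i^\top M X_i$ to concentration of the decoupled chaos $\tfrac1N\sum_i X_i^\top M X_i'$ with an independent copy $X_i'$. Conditioning on $X'$, this becomes a linear form in $X$ with coefficients $(MX_i')_j$, which we bound via Bernstein using the constraints $\sum_i\mu_i M_{ii}\le 1$ and $\sum_i \mu_i\sup_j M_{ij}^2\le 1$ from the definition of $\sT_{\mu'}$. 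The two factors of $\log(1/\eps)$ in the target rate arise naturally from this two-stage decoupled argument: one $\log(1/\eps)$ per application of the sub-Gaussian/sub-exponential tail, combined with taking top-$\eps N$ order statistics for \Cref{eq:cov-dev}.

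To uniformize over $\sT_{\mu'}$, rather than taking a net directly over the high-dimensional PSD body, I would exploit \Cref{lem:rank-one-bound} together with the defining constraints: the extreme-point structure and the row/column $\chi^2$-type controls let one reduce to a supremum over rank-one $yy^\top$ with $y\in\cT_{\mu'}$ (up to constant factors), then cover $\cT_{\mu'}\subset\R^n$ by a $1/\poly(n)$-net. This is where $n^4$ appears in the sample complexity: the entries of $M$ live in $\R^{n\times n}$, so covering to sufficient precision against quadratic fluctuations costs a $\poly(n)$-factor in $\log|\text{net}|$, and matching the $\eps^2$-scale fluctuations requires $N=\widetilde\Theta(n^4/\eps^2)$. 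The most delicate piece will be verifying that the sub-Gaussian constant in the decoupled bound really is controlled by the $\sT_{\mu'}$-constraints and not some larger matrix norm; this is the step I expect to require the most care, and is likely where the Boolean-analysis decoupling tools (hypercontractivity/Bonami--Beckner) will be needed to get the correct $\log^2(1/\eps)$ rate rather than a larger polylog.
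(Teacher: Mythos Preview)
Your overall architecture matches the paper's: Bernstein for the linear conditions, a decoupling argument for the quadratic form, and a net plus union bound for the supremum. Two points diverge from the paper, and one of them is a real gap.

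\textbf{The rank-one reduction does not work.} You propose to reduce $\sup_{M\in\sT_{\mu'}}$ to $\sup_{y\in\cT_{\mu'}}$ via extreme points and \Cref{lem:rank-one-bound}. But $\sT_{\mu'}$ is a genuine relaxation: it is strictly larger than the convex hull of $\{yy^\top:y\in\cT_{\mu'}\}$, and its extreme points are not rank one (the $n^2$ entrywise box constraints alone allow extreme points of rank up to $n$). \Cref{lem:rank-one-bound} only gives the easy direction $\|yy^\top\|_\mu\lesssim\|y\|_\mu^2$ and is used elsewhere (in \Cref{lem:key-geometric}), not here. The paper makes no such reduction: it nets directly over the $n^2$-dimensional $\ell_\infty$ ball of matrices, and this is what produces the $\widetilde O(n^4/\eps^2)$ sample size. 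Your own parenthetical ``the entries of $M$ live in $\R^{n\times n}$'' is actually the correct thought; drop the rank-one detour.

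\textbf{The decoupling step is different in detail.} The paper does not use de~la~Pe\~na--Montgomery-Smith or hypercontractivity. It first splits $Y^\top MY$ into its diagonal (handled by a direct Bernstein) and off-diagonal parts, and for the latter applies a Boolean \emph{random-partition} decoupling: with auxiliary $\sigma\in\{0,1\}^n$, replace $\sum_{j\ne k}M_{jk}Y_jY_k$ by $\sum_{j\in A,k\in B}M_{jk}Y_jY_k$. Then for each $j\in A$ it sets $Z_j=\sum_{k\in B}M_{jk}Y_k$ and conditions on the high-probability event $|Z_j|\le t'\max\{(\sum_{k}M_{jk}^2\mu_k)^{1/2},\max_k|M_{jk}|\}$; the last two constraints in the definition of $\sT_\mu$ are exactly what force $\sum_{j}\mu_j Z_j^2\le O(t')^2$ and $|Z_j|\le t'$ under this event. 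A second Bernstein on $\sum_{j\in A}Z_jY_j$ and the choice $t'=\sqrt{t}$ then give the single-sample tail
\[
\Pr\bigl[\,|Y^\top MY|>t\,\bigr]\le n\cdot\exp\bigl(-\Omega(\sqrt{t})\bigr),
\]
from which both \Cref{eq:cov-conc} and the top-$\eps N$ bound for \Cref{eq:cov-dev} follow with the $\log^2(1/\eps)$ rate. So the ``delicate piece'' you flag is resolved not by Bonami--Beckner but by this two-stage Bernstein with the $t'=\sqrt{t}$ optimization; your intuition that the $\sT_{\mu'}$ constraints are exactly what is needed to control the conditional sub-Gaussian constant is correct.
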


\subsection{Key Geometric Lemma}
Before we state the geometric lemma, we will need the following operation:
\begin{definition}
    For any square matrix $A \in \R^{n \times n}$, let $\poff (A) \in \R^{n \times n}$ be given by $\poff (A) = A - \mathrm{diag} (A)$, i.e. the matrix $A$ with the diagonals zeroed out. 
\end{definition}
\noindent Note that $\poff$ is a projection onto a subspace, and is hence clearly linear.
We are now in a position to state the key lemma that forms the main structural basis of the algorithm, which states that deviations of the empirical mean in the $\norm{\cdot}_\mu$ norm can be controlled by deviations in the second second moment, after the diagonal has been zeroed out:
\begin{lemma}
\label{lem:key-geometric}
    Let $\pi$ be a binary product distribution with mean $\mu \in \R^n$ with $0 \leq \mu_i \leq 2/3$ for all $i = 1, \ldots, n$.
    Let $S = S_g \cup S_b \setminus S_r$ where $S_g$ is an $\eps$-good set of points for $\pi$, $S_r \subset S_g$, and $|S_b| = |S_r| = \eps |S|$, and suppose $S$ satisfies~\Cref{eq:regularity1} and~\Cref{eq:regularity2}.
    Let $w \in \sW_{N, \eps}$.
    Then
    \begin{equation}
    \norm{\mu(w) - \mu}_{\mu(w)} \leq \sqrt{\eps \cdot \sup_{y \in S_{\mu(w)}} y^\top \poff (\Sigma (w)) y} + O(\eps \log 1 / \eps) \; .
    \end{equation}
\end{lemma}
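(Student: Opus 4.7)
The plan is to use the duality $\|\mu(w)-\mu\|_{\mu(w)} = \sup_{y \in \cT_{\mu(w)}} y^\top(\mu(w)-\mu)$ and to bound the linear functional $y^\top(\mu(w)-\mu)$ for a generic test vector $y \in \cT_{\mu(w)}$. Multiplying by $\|w\|_1 \geq 1-\eps$ and splitting the resulting sum according to whether each surviving sample lies in $T := S_g \setminus S_r$ or in $S_b$ reduces the problem to controlling the good contribution $\sum_{i \in T} w_i\, y^\top(X_i-\mu)$ and the bad contribution $\sum_{i \in S_b} w_i\, y^\top(X_i-\mu)$ separately. Before invoking the goodness bounds, I would verify that~\eqref{eq:regularity2} together with $\|w - w(S)\|_1 \leq \eps$ forces $\mu(w)_j \gtrsim \mu_j$ coordinatewise, so that $\cT_{\mu(w)}$ and $\cT_\mu$ are equivalent up to absolute constants and the corresponding dual norms can be used interchangeably throughout.

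For the good contribution, I would decompose
\[
\sum_{i \in T} w_i\, y^\top(X_i-\mu) = \tfrac{|S_g|}{N}\, y^\top(\mu(S_g)-\mu) - \sum_{i \in S_r} \tfrac{1}{N}\, y^\top(X_i-\mu) - \sum_{i \in T}\bigl(\tfrac{1}{N}-w_i\bigr) y^\top(X_i-\mu).
\]
The first term is directly controlled by the empirical mean concentration~\eqref{eq:mean-conc}; the remaining two are weighted sums over subsets of $S_g$ whose weight vectors have $\ell_1$ mass at most $\eps$, so are bounded by the deletion-robust condition~\eqref{eq:mean-dev}. In total these contribute at most $O(\eps \log 1/\eps)$.

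For the bad contribution, I would shift the reference from $\mu$ to $\mu(w)$ by writing $y^\top(X_i-\mu) = y^\top(X_i-\mu(w)) + y^\top(\mu(w)-\mu)$; the aggregate $(\sum_{i \in S_b} w_i)\, y^\top(\mu(w)-\mu) \leq \eps\,\|\mu(w)-\mu\|_{\mu(w)}$ piece absorbs back into the left-hand side for small $\eps$. Setting $v_b := \sum_{i \in S_b} w_i(X_i-\mu(w))$ and applying Cauchy--Schwarz with $\sum_{i \in S_b} w_i \leq \eps$, one has $(y^\top v_b)^2 \leq \eps \cdot y^\top M_{S_b}\, y$ where $M_{S_b} := \sum_{i \in S_b} w_i(X_i-\mu(w))(X_i-\mu(w))^\top$. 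Splitting $y^\top M_{S_b}\, y$ into its off-diagonal and diagonal parts, the off-diagonal part is bounded by $\sup_{y' \in \cT_{\mu(w)}} y'^\top \poff(\Sigma(w))\, y' + O(\eps \log^2 1/\eps)$, after subtracting the good-covariance contribution whose $\poff$ is small by~\eqref{eq:cov-conc} and~\eqref{eq:cov-dev} (using that the population covariance of the product $\pi$ is diagonal).

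The main obstacle is the diagonal contribution $\sum_j y_j^2 (M_{S_b})_{jj}$; a naive bound gives only $1$, which would produce an unwanted $\sqrt{\eps}$ in the final error. The resolution is to exploit the Boolean identity $(X_{i,j}-\mu(w)_j)^2 = X_{i,j}(1-2\mu(w)_j) + \mu(w)_j^2$ to rewrite $(M_{S_b})_{jj}$ as a linear functional of the bad data plus an $O(\eps\, \mu(w)_j^2)$ balanced residual. The linear piece equals $\iprod{\alpha, v_b}$ where $\alpha_j := y_j^2(1-2\mu(w)_j)$, and a direct check shows $\alpha \in \cT_{\mu(w)}$, so this piece is at most $\|v_b\|_{\mu(w)}$. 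This produces a self-referential inequality
\[
\|v_b\|_{\mu(w)}^2 \;\lesssim\; \eps\bigl(\sup\nolimits_{y' \in \cT_{\mu(w)}} y'^\top\poff(\Sigma(w))\, y' + \|v_b\|_{\mu(w)} + \eps\log^2(1/\eps)\bigr),
\]
which, after using $\sqrt{\eps\,\|v_b\|_{\mu(w)}} \leq \tfrac{1}{2}(\eps + \|v_b\|_{\mu(w)})$ and rearranging, yields $\|v_b\|_{\mu(w)} \lesssim \sqrt{\eps\,\sup_{y'} y'^\top\poff(\Sigma(w))\, y'} + O(\eps\log 1/\eps)$. Combining with the $O(\eps \log 1/\eps)$ good bound and taking the supremum over $y$ completes the proof.
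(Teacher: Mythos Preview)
Your proposal is correct and follows essentially the same strategy as the paper: split into good and bad contributions, control the good part via the $\eps$-goodness conditions, and convert the bad linear deviation into a second-moment quantity via Cauchy--Schwarz/Jensen. The crucial diagonal obstruction is handled by the same underlying observation in both arguments, namely that entrywise-squared test vectors multiplied by a bounded factor remain in $\cT_{\mu(w)}$. The paper phrases this as ``$y'_i=y_i^2$ and $y''_i=\tfrac{1}{2}y_i^2(\mu_i+\mu(w)_i)$ belong to $S_{\mu(w)}$'' to compare the sample and population diagonal directly, whereas you reach it via the Boolean identity $(X-a)^2=X(1-2a)+a^2$ applied to $M_{S_b}$; these are equivalent manoeuvres. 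The only organizational difference is that the paper works ``from below'' (expanding $y^\top\poff(\Sigma(w))y$ and showing it is at least $\eta^2/\eps$ minus lower-order terms), while you work ``from above'' (bounding $y^\top M_{S_b}y$ and then relating $\poff(M_{S_b})$ back to $\poff(\Sigma(w))$). One small point you gloss over: the goodness conditions \eqref{eq:cov-conc}, \eqref{eq:cov-dev} are stated with centering at $\mu$, so passing from $M_T$ (centered at $\mu(w)$) to these bounds introduces additional $O(\eta)$ cross terms of the form $(y^\top(\mu-\mu(w)))\cdot(y^\top v_T)$ and a rank-one correction; these are harmless and get absorbed exactly as the $O(\eta)$, $O(\eta^2)$ terms do in the paper's version, but they should be tracked.
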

\begin{proof}
    Let $\eta = \norm{\mu(w) - \mu}_{\mu(w)}$, and let $y \in \cT_{\mu(w)}$ so that $\iprod{y, \mu(w) - \mu} = \norm{\mu(w) - \mu}_{\mu(w)} = \eta$.
    If $\eta \leq O(\eps \log 1 / \eps)$ then the inequality is trivial, so assume that $\eta = \omega(\eps \log 1 / \eps)$.
    Let $w_g, w_b$ be the restriction of $w$ to $S_g$ and $S_b$, respectively, and let $\left( \overline{w} \right)_i = 1/N - w_i$ for all $i$.
    Note that $\norm{\overline{w}}_1 \leq \eps$.
    We expand:
    \begin{align*}
        \eta &= \iprod{y, \mu(w) - \mu} = \E_{X \sim S_g} [\iprod{y, X - \mu}] + \norm{\overline{w}}_1 \E_{X \sim\overline{w} } \iprod{y, X - \mu} - \eps \E_{X \sim S_r} \iprod{y, X - \mu} \\
        &= O(\eps \log 1 / \eps) + \norm{\overline{w}}_1 \E_{X \sim \overline{w}} \iprod{y, X - \mu} \; ,
    \end{align*}
    by the $\eps$-goodness of $S_g$, and the observation that by~\Cref{eq:regularity2}, we have that $\tfrac{1}{3} y \in S_{\mu}$. 
    By Jensen's inequality, we next have that
    \begin{align}
    \label{eq:jensens}
       \E_{X \sim \overline{w}} \iprod{y, X - \mu}^2 \geq \left( \E_{X \sim\overline{w}} \iprod{y, X - \mu} \right)^2 \geq \left( \frac{\eta - O(\eps \log 1 / \eps)}{\eps} \right)^2 \gg \frac{\eta^2}{\eps^2} \; .
    \end{align}
    
\noindent
Next, observe that 
\begin{align*}
    y^\top \poff (\Sigma (w) ) y &= \E_{X \sim w} \iprod{y, X - \mu(w)}^2 - \sum_{i = 1}^n y_i^2 \mu(w)_i (1 - \mu(w)_i) \\
    &= \E_{X \sim w} \iprod{y, X - \mu}^2 - \iprod{y, \mu(w) - \mu}^2 - \sum_{i = 1}^n y_i^2 \mu(w)_i (1 - \mu(w)_i) \\
    &= \E_{X \sim w} \iprod{y, X - \mu}^2 - \sum_{i = 1}^n y_i^2 \mu(w)_i (1 - \mu(w)_i) - O(\eta^2) \; .
\end{align*}
We now further decompose the first term on the RHS:
\begin{align*}
    \E_{X \sim w} \iprod{y, X - \mu}^2 &= \E_{X \sim S_g} \iprod{y, X - \mu}^2 + \norm{\overline{w}}_1 \E_{X \sim \overline{w}} \iprod{y, X - \mu}^2 - \eps \E_{X \sim S_r} \iprod{y, X - \mu}^2 \\
    &= \E_{X \sim \pi} \iprod{y, X - \mu}^2 + \norm{\overline{w}}_1 \E_{X \sim \overline{w}} \iprod{y, X - \mu}^2 \pm O(\eps \log^2 (1 / \eps)) \\
    &= \sum_{i = 1}^n y_i^2 \mu_i (1 - \mu_i) + \norm{\overline{w}}_1 \E_{X \sim \overline{w}} \iprod{y, X - \mu}^2 \pm O(\eps \log^2 (1 / \eps)) \; .
\end{align*}
We also have that
\begin{align*}
    \left| \sum_{i = 1}^n y_i^2 \mu_i (1 - \mu_i) - \sum_{i = 1}^n  y_i^2 \mu(w)_i (1 - \mu(w)_i) \right| &\leq \left| \sum_{i = 1}^n y_i^2 (\mu_i - \mu(w)_i) \right| + \left| \sum_{i = 1}^n y_i^2 (\mu_i^2 - \mu(w)_i^2) \right| \\
    &\leq O(\eta) + \left| \sum_{i = 1}^n y_i^2 ((\mu_i + \mu(w)_i)) (\mu_i - \mu(w)_i) \right| \\
    &\leq O(\eta) \; ,
\end{align*}
where the last two lines follow because if $y \in S_{\mu(w)}$ it is easily verified that the vectors $y'$ and $y''$ defined by $y'_i = y_i^2$ and $y''_i = \frac{1}{2} y_i^2 ((\mu_i + \mu(w)_i))$ also belong to $S_{\mu(w)}$.
These calculations, along with~\Cref{eq:jensens}, imply that
\begin{equation}
    y^\top \poff (\Sigma (w)) y \geq \frac{\eta^2}{\eps} - O(\eta) - O(\eps \log^2 1 / \eps) \; ,
\end{equation}
which by rearranging immediately implies the desired claim.
\end{proof}

\subsection{Algorithm Description and Analysis}
\label{sec:alg-analysis}
We are now ready to state our algorithm.
\begin{algorithm}
\caption{A nearly-optimal robust learner for binary product distributions}\label{alg:main}
\KwIn{An $\eps$-corrupted set of samples from a product distribution $p \in \cP_n$}
\KwOut{A product distribution $\hat{p}$}
Let $C$ be a sufficiently large universal constant \\
$w \gets w(S)$ \\
\While{$\norm{ \poff (\Sigma (w))}_{\mu(w)} > C \eps \log^2 1 / \eps$}{
    Let $A \in \cT_{\mu(S)}$ be an $\delta$-approximate maximizer of $\iprod{A, \poff (\Sigma (w))}$ as per Lemma~\ref{lem:convex-programming}, where $\gamma = \poly(1/n, 1/ \eps)$.\\
    Let $\tau_i = (X_i - \mu(w))^\top A (X_i - \mu(w))$ for all $i \in S$ \\\
    Sort the $\tau_i$ in decreasing order. WLOG assume that $\tau_1 \geq \tau_2 \geq \ldots \tau_N$.\\
    Let $M$ be the first index so that $\sum_{i = 1}^M w_i > 2 \eps$. \\
    For every $i \leq M$, let
    \[
    w_i \gets \left( 1 - \frac{\tau_i}{\tau_1} \right) w_i \; .
    \]
    Let $S \gets \{i \in S: w_i \neq 0 \}$.
}
\textbf{return} The product distribution $\sigma$ with mean vector $\mu(w)$ 
\end{algorithm}

\begin{proof}[Proof of~\Cref{thm:robust-main}]
    First, note that the runtime is polynomial: by Lemma~\ref{lem:convex-programming} each loop of the algorithm runs in polynomial time, and since the loop removes at least one element of $i$, it can run for at most $n$ iterations.
    Moreover, since the quality of the approximation returned by the convex programming is so high, it is easily seen that it will not affect the downstream calculations, so for simplicity of exposition we will assume in the latter that we have an exact optimizer.

    We now turn our attention to correctness.
    Let $w^{(1)}, \ldots, w^{(T)}$ denote the sequence of weight vectors $w$ produced by the algorithm, so that $w^{(1)} = w([N])$, where we adopt the convention that $w^{(t)}_i = 0$ for $i \in S_r$ and all $i$ removed from $S$ by the algorithm.
    It suffices to show the following key invariant: for all $t \leq T-1$, we have that
    \begin{equation}
    \label{eq:invariant}
    \sum_{i \in S_g} w^{(t)}_i - w^{(t + 1)}_i \leq \sum_{i \in S_b} w^{(t)}_i - w^{(t + 1)} \; .
    \end{equation}
    This is because given~\Cref{eq:invariant}, by telescoping, this implies that $w^{(T)}$ is a set of weights with \[
    \norm{ \poff (\Sigma (w^{(T)}))}_{\mu (w)} \leq C \eps \log^2 1 / \eps
    \]
    and which satisfies $w^{(T)} \in \sW_{n, \eps}$, so by Lemma~\ref{lem:key-geometric}, we have that $\norm{\mu(w^{(T)}) - \mu}_{\mu(w^{(T)})} \leq O(\eps \log 1 / \eps)$, which by Theorem~\ref{thm:product-tv-bound} we have that $\tvd(\sigma, \pi) \leq O(\eps \log 1 / \eps)$, as claimed.

    To show~\Cref{eq:invariant}, we will proceed by induction.
    Fix some iteration $t \leq T - 1$, and suppose that~\Cref{eq:invariant} held for all $t' < t$.
    In particular, this implies that $w^{(t)} \in \sW_{n, \eps}$.
    Moreover, by Lemma~\ref{lem:filtering}, if we let $I^{(t)}$ denote the set of largest $\tau_i$ in this iteration, it suffices to show that
    \begin{equation}
        \sum_{i \in S_g \cap I^{(t)}} \tau_i w^{(t)}_i \leq \sum_{i \in S_b \cap I^{(t)}} \tau_i w^{(t)}_i \; .
    \end{equation}
    For the remainder of the proof, for clarity we will drop the subscript $t$, as we will only work with a single iteration.
    Let $w_g, w_b$ denote the restrictions of $w$ to $S_g$ and $S_b$, respectively.
    Observe that
    \begin{align*}
     \sum_{i \in S_g} w_i \tau_i &= \iprod{A, \Sigma (w_g) + (\mu(w_g) - \mu (w))(\mu (w_g) - \mu(w))^\top}  \\
         &= \iprod{A, \Sigma (w_g)} + O\left( \norm{\mu(w_g) - \mu(w)}_{\mu(w)}^2 \right) \\
        &= \iprod{A, \Sigma (w_g)} + O \left( \eps \iprod{A, \poff{\Sigma (w)}} + \eps \log 1 / \eps \right) \\
        &= \norm{w_g}_1 \iprod{A, \Sigma} + O \left( \eps \iprod{A, \poff{\Sigma (w)}} + \eps \log 1 / \eps \right) \; .
    \end{align*}
    Hence, we have that
    \begin{align*}
        \sum_{i \in S_b} w_i \tau_i &= \sum_{i \in S} w_i \tau_i - \sum_{i \in S_g} w_i \tau_i \\
        &= \iprod{A, \poff \Sigma (w)} + \iprod{A, \diag (\Sigma (w))} - \norm{w_g}_1 \iprod{A, \Sigma} \pm O \left( \eps \iprod{A, \poff{\Sigma (w)}} + \eps \log 1 / \eps \right) \; .
    \end{align*}
    By the same calculation as in the proof of Lemma~\ref{lem:key-geometric}, we have that
    \begin{align*}
    \left|\iprod{A, \diag (\Sigma (w))} - \norm{w_g}_1 \iprod{A, \Sigma} \right| &\leq O \left( \norm{\mu(w_g) - \mu(w)}_{\mu(w)} + \eps \right) \\
    &= O \left( \sqrt{\eps \iprod{A, \poff \Sigma (w)}} \right) \; ,
    \end{align*}
    and so since $\iprod{A, \poff \Sigma (w)} \geq C \eps \log^2 1 / \eps$, this implies that 
    \begin{align*}
        \sum_{i \in S_b} w_i \tau_i \geq \frac{3}{4} \cdot \iprod{A, \poff \Sigma (w)} \; .
    \end{align*}
    By our choice of $N$, we note that $|S_g \cap [M]| \geq \eps N$, as the bad points can only account for an $\eps$ amount of the mass.
    Therefore, by $\eps$-goodness and an application of Lemma~\ref{lem:key-geometric}, we have that 
    \[
    \sum_{i \in S_g \cap [M]} w_i \tau_i \leq O(\log^2 1/\eps +  \eps \iprod{A, \poff{\Sigma (w)}}) \; .
    \]
    In particular, by an averaging argument, since $\sum_{i \in S_g \cap [M]} w_i \geq \eps$, we conclude that 
    \[
    \tau_i \leq O(\log^2 1/\eps +  \eps \iprod{A, \poff{\Sigma (w)}})
    \]
    for all $i \geq M$.
    Thus, we have that 
    \begin{align*}
        \sum_{i \in S_b \cap [M]} w_i \tau_i &= \sum_{i \in S_b} w_i \tau_i - \sum_{i \in S_b \setminus [M]} \tau_i \\
        &\geq \frac{3}{4} \iprod{A, \poff \Sigma (w)} - \left( \sum_{i \in S_b} \tau_i \right) \cdot  O(\log^2 1/\eps +  \eps \iprod{A, \poff{\Sigma (w)}}) \\
        &\geq \frac{2}{3} \iprod{A, \poff \Sigma (w)} \\
        &\geq \frac{2}{3} \sum_{i \in S_g} w_i \tau_i \; ,
    \end{align*}
    and hence by Lemma~\ref{lem:filtering} we satisfy~\Cref{eq:invariant}, which completes the proof of the theorem.
\end{proof}

\subsection{Proof of Lemma~\ref{lem:empirical-good}}
\label{sec:empirical-good-proof}

    We split up the proof into several parts.
    Throughout this section, let $\eps, N_0, T$ be as in Lemma~\ref{lem:empirical-good}.
    We first prove the relevant statements for the concentration of the first moment, i.e.~\Cref{eq:mean-conc} and~\Cref{eq:mean-dev}:
    \begin{lemma}
        Suppose that $N \gtrsim \tfrac{n \log (1 / \delta)}{\eps^2}$.
        Then~\Cref{eq:mean-conc} and~\Cref{eq:mean-dev} hold simultaneously with probability $1 - \delta / 2$.
    \end{lemma}
    \begin{proof}
    Fix any $y \in \mathcal T_\mu$.
    By Bernstein's inequality, we have that if $X \sim \pi$, then for all $t > 0$, we have that
    \begin{equation}
        \Pr \left[ \left| \iprod{y, X - \mu} \right| > t \right] \leq \exp \left( - \frac{\frac{1}{2} t^2}{\sum_{i = 1}^n y_i^2 \mu_i + \frac{1}{3} t} \right) \leq \exp \left(-\Omega \left( \min \left( t, t^2\right) \right) \right) \; ,
    \end{equation}
    so in particular, the random variable $\iprod{y, X - \mu}$ is sub-exponential.
    Since the set of valid $y \in S_{\mu}$ is contained within the unit $\ell_\infty$ ball, by standard union bound arguments (see e.g.~\cite{vershynin2009high}), we have that for any $T' \subseteq T$, it holds that
    \begin{equation}
    \label{eq:mean-conc-1}
    \Pr \left[ \exists y \in \mathcal T_\mu: \left| \iprod{y, \mu(T') - \mu} \right| > t \right] \leq \exp \left( C_1 n \log (n / \eps) - c_1 |T'| \min(t, t^2) \right) \; ,
    \end{equation}
    for some universal constants $C, c$.
    In particular, this implies that $\norm{\mu(T) - \mu}_{\mu} \leq \eps$ with probability $1 - \delta$ so long as $N_0$ exceeds $O \left( \frac{n \log (n / \eps) + \log 1/ \delta}{\eps^2} \right)$.

    That~\Cref{eq:mean-dev} follows from~\Cref{eq:mean-conc-1} can then be easily shown using framework laid out in~\cite{li2018principled}, see e.g. the proof of Lemma 2.1.8 therein.
    \end{proof}

    We now turn to the proof of the bounds for the second moment, i.e.~\Cref{eq:cov-conc} and~\Cref{eq:cov-dev}.
    As it will not change anything in the argument, for simplicity of exposition in this proof we will replace all $\norm{\cdot}_{\mu'}$ with $\norm{\cdot}_{\mu}$.
    For any matrix $M \in \cT_\mu$, let $p_M (y) = y^\top M y - \E [(X - \mu)^\top (X  - \mu)]$, and let $Y_i = X_i - \mu$ for all $i \in T$.
    We first prove the following key inequality:
    \begin{lemma}
    \label{lem:quadratic-tail-bound}
        Let $M \in \cT_\mu$, and let $Y = X - \mu$, where $X$ is sample from the product distribution with mean $\mu$.
        There exists a universal constant $C$ so that for all $t \geq C$, we have
        \begin{equation}
        \label{eq:quadratic-tail-bound}
            \Pr \left[ | p_M (Y) | \geq t \right] \leq \exp \left( -\Omega (t^{1/2}) \right) \; .
        \end{equation}
    \end{lemma}
    \begin{proof}
        Let $Y = X - \mu$.
        We first break up the quadratic form into two terms:
    \[
    Y^\top M Y = \underbrace{\sum_{i = 1}^n M_{ii} Y_i^2}_{D} + \underbrace{\sum_{i \neq j} M_{ij} Y_i Y_j}_{O} \; .
    \]
    We control each term separately. 
    By Bernstein's inequality, we have that
    \begin{align*}
        \Pr \left[ \left| \sum_{i = 1}^n M_{ii} Y_i^2 - \sum_{i = 1}^n M_{ii} \mu_i (1 - \mu_i) \right| > t \right] &\leq \exp \left( - \frac{\frac{1}{2} t^2}{O \left( \sum_{i = 1}^n M_{ii}^2 \mu_i \right) + \frac{1}{3} t } \right) \\
        &\leq \exp \left( - \Omega (\min (t, t^2) \right) \; .
    \end{align*}
    The main challenge is controlling the off-diagonal term $O$.
    By standard decoupling results in Boolean analysis, see e.g.~\cite{dinur2006fourier,austrin2009randomly} or Theorem 2.4 in~\cite{diakonikolas2010bounding}, if we let $\sigma_i$ be new, independent, uniformly random $\{0, 1\}$-valued random variables, then 
    \begin{equation}
        \label{eq:cov-conc-1}
        \Pr \left[ \left| \sum_{i \neq j} M_{ij} Y_i Y_j \right| \geq t \right] \leq \Pr \left[ \left| \sum_{i \neq j} M_{ij} Y_i Y_j (1 - \sigma_i) \sigma_j \right| > 4 t \right] \; .
    \end{equation}
    Let $A$ denote the set of coordinates where $\sigma_i = 1$ and let $B$ denote the set of coordinates where $\sigma_i = 0$.
    Then, $A$ and $B$ form a random partition of $[n]$, and the random variable on the RHS of~\Cref{eq:cov-conc-1} is
    \[
     \sum_{i \neq j} M_{ij} Y_i Y_j (1 - \sigma_i) \sigma_j = \sum_{i \in A, j \in B} M_{ij} Y_i Y_j \; .
    \]

    Let $\Mbar$ denote the restriction of $M$ onto the indices of $A \times B$, and let $Y_A$ and $Y_B$ denote the vector of $Y_i$'s restricted to $A$ and $B$, respectively, so that we can write 
    \[
    \sum_{i \in A, j \in B} M_{ij} Y_i Y_j = Y_A^\top \Mbar Y_B \; .
    \]
    Fix some $\delta > 0$, and let $I_{\delta}$ denote the smallest set of elements of $A$ so that $\sum_{i \in I_\delta} \mu_i \geq \delta / 2$, that is $I_\delta$ consists of the top $\delta$ mass of elements of $A$, weighted by $\mu$.
    For any vector $v \in \R^A$, let $\gamma_\delta (v)$ denote the magnitude of the largest entry of $v$ in magnitude outside of $I_{\delta}$, and let $\nu_\delta (v) = \sum_{i \in I_\delta} |\mu_i v_i|$.
    We claim that for all $v \in \R^A$, there exists a universal constant $C$ so that for all $\delta$ sufficiently small, we have 
    \begin{equation}
    \label{eq:truncated-bernstein}        
    \Pr \left[ \left|\iprod{v, Y_A} \right| > C \left( \gamma_\delta (v) \log (1 / \delta) + \nu_\delta (v) + \sqrt{\log 1 / \delta} \cdot \left( \sum_{i \in A} \mu_i v_i^2 \right)^{1/2} \right) \right] \leq \delta \; .
    \end{equation}
    This is because
    \[
    \E \left[ \sum_{i \in I_\eps} X_i \right] \leq \delta / 2 \; ,
    \]
    and so by Markov's inequality, with probability $1 - \delta / 2$, all of the $X_i$ are $0$ for $i \in I_\delta$, and thus these terms contribute a $\nu_\delta (v)$ term to the sum.
    Then, if we condition on this event, the bound follows from Bernstein's inequality.
    Thus, to complete the proof, it suffices to show that if we let $v = \Mbar Y_B$, that the expression in~\Cref{eq:truncated-bernstein} is of order $\log^2 (1 / \delta)$ with probability $1 - O(\delta)$.
    Then the result follows by re-parameterizing $t = \log (1 / \delta)$.

    The fact that $\left| \nu_\delta (v) \right| \leq C \log (1 / \delta)$ with probability $\delta / 6$ follows directly from Bernstein's inequality.
    Next, we observe that
    \begin{align*}
        \left( \sum_{i \in A} \mu_i v_i^2 \right)^{1/2} &= \left( \sum_{i \in A} \mu_i \left(\sum_{j \in B} M_{ij} Y_j \right)^2 \right)^{1/2} \\
        &\leq  \sum_{i \in A} \left| \mu_i^{1/2} \sum_{j \in B} M_{ij} Y_j \right| \; ,
    \end{align*}
    but if we define $Z_i = \left| \mu_i^{1/2} \sum_{j \in B} M_{ij} Y_j \right|$, then by Bernstein's inequality, we have that
    \[
    \Pr \left[ Z_i \geq t \right] \leq \exp \left(- \Omega \left( \min \left( \frac{t^2}{\sum_{j \in B} M_{ij}^2 \mu_i \mu_j}, \frac{t}{\mu_i \cdot \sup_j |M_{ij}|} \right) \right) \right) \; ,
    \]
    and so by sub-exponential concentration (see e.g. Theorem 2.9.1 in~\cite{vershynin2009high}), and by the definition of $\cT_\mu$, we have that
    \[
     \Pr \left[ \sum_{i \in A} Z_i \geq t \right] \leq \exp \left( - \Omega (\min (t^2, t))\right) \; ,
    \]
    so in particular the probability this exceeds $ C \log (1 / \delta)$ is at most $\delta / 6$.

    We next show that $\gamma_\delta (v) \leq C \cdot \log 1 / \delta$ with probability $1 - \delta / 6$.
    To do this, by Markov's inequality, it suffices to show that for some constant $C > 1$ sufficiently large,
    \[
    \E \left[ \sum_{i \in A} \mu_i \cdot \mathbf{1} \left[ |v_i| \geq C \cdot \log 1 / \eps \right] \right] \leq \delta^{C} \; .
    \]
    Recalling that $v_i = \sum_{j \in B} M_{ij} Y_j$, by Bernstein's inequality, we have that for all $\zeta > 0$,
    \[
    \Pr \left[ |v_i | \geq \alpha_i \cdot \max \left( \sqrt{\log 1 / \zeta}, \log (1 / \zeta) \right) \right] \leq \zeta \; ,
    \]
    where 
    \[
    \alpha_i = \max \left( \left(\sum_{j \in B} M_{ij}^2 \mu_j\right)^{1/2}, \sup_j |M_{ij}| \right) \; .
    \]
    By the definition of $\cT_\mu$, the $\alpha_i$ satisfy that (1) $0 \leq \alpha_i \leq 1$ for all $i$, and (2) $\sum_{i \in A} \alpha_i^2 \mu_i \leq 1$.
    Now, let $A_k$ be the subset of rows of $A$ satisfying $\alpha_i \in [2^{-k}, 2^{- (k + 1)}]$, for $k = 0, \ldots, \infty$.
    By condition (2), we observe that $\sum_{i \in A_k} \mu_i \leq 4^k$, however, our tail bound implies that for any $i \in A_k$, and for all $\delta$ sufficiently small,
    \begin{align*}
    \Pr \left[ |v_i| \geq C \log 1 / \eps \right] \leq \delta^{C 2^k} \; ,
    \end{align*}
    and hence 
    \begin{align*}
        \E \left[ \sum_{i \in A} \mu_i \cdot \mathbf{1} \left[ |v_i| \geq C \cdot \log 1 / \eps \right] \right] &= \sum_{k = 0}^\infty \sum_{i \in k} \mu_i  \left[ |v_i| \geq C \log 1 / \eps \right] \\
        &\leq \sum_{k =0}^\infty 4^k \delta^{C 2^k} \ll \delta^{O(C)} \; ,
    \end{align*}
    so the desired claim follows from adjusting the choice of constant $C$, we conclude that with probability $1 - \delta / 2$, we have that
    \[
     \gamma_\delta (v) \log (1 / \delta) + \nu_\delta (v) + \sqrt{\log 1 / \delta} \cdot \left( \sum_{i \in A} \mu_i v_i^2 \right)^{1/2} \leq C' \log^2 (1 / \delta) \; .
    \]
    The result then follows by plugging in this bound into~\Cref{eq:truncated-bernstein}, and adjusting constants.
    \end{proof}
 
    We first prove~\Cref{eq:cov-conc}:
    \begin{lemma}
        Suppose that $N \gtrsim \tfrac{n^2}{\eps^2}$.
        Then~\Cref{eq:cov-conc} holds with probability $\geq 0.99$.
    \end{lemma}
    \begin{proof}
    We note that this condition is equivalent to the condition that for all $M \in \cT_\mu$, it holds that $\left| \iprod{M, \poff (\Sigma (T))} \right| \lesssim \eps \log^2 (1 / \eps)$.
    For any $j \neq \ell$, note that $\Sigma_{j, \ell} = 0$, so we have that $\E[(\Sigma(T)_{j, \ell})^2] = \tfrac{\mu_j \mu_\ell}{N}$, so consequently, we have that with probability at least $0.99$,
    \[
    \sum_{j \neq \ell} \frac{1}{\mu_j \mu_\ell} \Sigma(T)_{j, \ell}^2  \leq \frac{d^2}{N} \leq \eps^2 \; .
    \]
    Condition on this event holding.
    Then, for any $M \in \cT_\mu$, we have that
    \begin{align*}
        \left| \iprod{M, \poff (\Sigma (T))} \right| &= \left| \sum_{j \neq \ell} M_{j, \ell}\Sigma(T)_{j, \ell} \right| \\
        &\leq \left| \sum_{j \neq \ell} M_{j, \ell}^2 \mu_j \mu_\ell \right|^{1/2}  \left| \sum_{j \neq \ell} \frac{1}{\mu_j \mu_\ell} \Sigma(T)_{j, \ell}^2 \right|^{1/2} \\
        &\leq \eps \; ,
    \end{align*}
    by the definition of $\cT_\mu$.
    \end{proof}
    \noindent
    Finally, we prove~\Cref{eq:cov-dev}.
    Before we do so, we need the following result from VC theory, which follows since the VC dimension of degree-$2$ polynomial threshold functions in $n$ dimensions is $O(n^2)$:
    \begin{theorem}[VC inequality, see e.g.~\cite{devroye2001combinatorial}]
    \label{thm:vc}
        Let $D$ be any distribution over $\R^n$.
        Let $X_1, \ldots, X_N$ be a set of $N \gtrsim \tfrac{n^2 \log (n / \delta)}{\eps^2}$ independent samples from $D$.
        Then, with probability $1 - \delta$, the following holds: for all degree-$2$ polynomials $p: \R^d \to \R$, and all thresholds $\tau \in \R$, we have that
        \[
        \left| \frac{1}{N} \sum_{i = 1}^N \mathbf{1} [p(X_i) \geq \tau] - \Pr_{X \sim D} \left[ p(X) \geq \tau \right] \right| \leq \delta \; .
        \]
    \end{theorem}
    \begin{lemma}
        Suppose that $N \gtrsim \tfrac{n^2 \log (n / \delta)}{\eps^2}$, and that~\Cref{eq:cov-conc} holds.
        Then~\Cref{eq:cov-dev} holds with probability $1 - \delta / 4$.
    \end{lemma}
    \begin{proof}
        As before let, $Y_i = X_i - \mu$, for all $i \in T$.
        Since~\Cref{eq:cov-conc} holds,~\Cref{eq:cov-dev} is equivalent to the statement that for all $w \leq w(T)$ with $\norm{w} = 1 - \eps$, it holds that for all $M \in \cT_\mu$, we have that
        \[
        \left| \sum_{i \in T} w_ip_M (Y_i) \right| \lesssim \eps \log^2 (1 / \eps) \; .
        \]
        By convexity, it suffices to show that for all $T' \subset T$ satisfying $|T| = (1 - \eps) |T|$, we have that 
        \[
        \left| \frac{1}{|T'|} \sum_{i \in T'} p_M (Y_i) \right| \lesssim \eps \log^2 (1 / \eps) \; .
        \]
        Note that $p_M (y) \geq -\E_{X \sim \pi} \left[ (X - \mu)^\top M (X - \mu) \right] \geq -1$ deterministically.
        Let $\tau = C \log^2 (1 / \eps)$ for some constant $C$ to be specified later.

        We claim that the following holds: with probability $1 - \delta / 4$, it simultaneously holds for all $M$:
        \begin{align}
            \frac{1}{|T|}\sum_{i \in T} \mathbf{1} [p_M (Y_i) \geq \tau] &\leq \eps \; \mbox{ and} \label{eq:cov-dev-1} \\
            \left| \frac{1}{|T|} \sum_{i \in T} \min(p_M (Y_i), \tau) \right| &= O\left( \eps \log^2 (1 / \eps) \right) \label{eq:cov-dev-2} \; .
        \end{align}
        Suppose these two conditions hold.
        Let $T'$ be any set of size $(1 - \eps) |T|$, and let $M \in \cT_\mu$.
        We know that
        \begin{align*}
            \sum_{i \in T} p_M (Y_i) - \sum_{i \in T'} p_M (Y_i) \geq -\eps |T| \; ,
        \end{align*}
        which implies that $\tfrac{1}{|T'|} \sum_{i \in T'} p_M (Y_i) \leq \tfrac{1}{|T|} \sum_{i \in T} p_M (Y_i) + O(\eps)$.
        Since we also have that
        \begin{align*}
            \sum_{i \in T'} p_M (Y_i) &\geq \sum_{i \in T'} \min \left( p_M (Y_i), \tau \right) \\
            &\geq \sum_{i \in T} \min \left( p_M (Y_i), \tau \right) - \eps \tau \; ,
        \end{align*}
        we conclude that $\tfrac{1}{|T'|}  \sum_{i \in T'} p_M (Y_i) \gtrsim -\eps \log^2 (1 \ \eps)$.
        Together, these two claims imply that 
        \[
        \left| \frac{1}{|T'|}  \sum_{i \in T'} p_M (Y_i) \right| \lesssim \eps \log^2 (1 / \eps) \; ,
        \]
        which is what we wanted to show.

        Thus it suffices to show~\Cref{eq:cov-dev-1} and~\Cref{eq:cov-dev-2}.
        For any event $E$. let $\Pr_{T} [E]$ denote the fraction of elements in $T$ that satisfy $E$.
        Condition on the event that for all degree-$2$ polynomials $p$, we have that
        \[
        \left| \Pr_{T} [p \geq \tau]  - \Pr_{Y} [p(Y) \geq \tau] \right| \lesssim \eps \; .
        \]
        By~\Cref{thm:vc}, we know that this occurs with probability $1 - \delta / 4$.
        By~\Cref{lem:quadratic-tail-bound}, this immediately implies~\Cref{eq:cov-dev-1} holds.
        To show~\Cref{eq:cov-dev-2}, we have that
        \begin{align*}
             \frac{1}{|T|} \sum_{i \in T} \min(p_M (Y_i), \tau) &= \int_{-1}^0 \Pr_T [p \leq t] dt + \int_0^\tau \Pr_T [p \geq t] dt \\
             &= \int_{-1}^0 \Pr_Y [p(Y) \leq t] dt + \int_0^\tau \Pr_Y [p(Y) \geq t] dt \pm O(\eps \log^2 (1 / \eps)) \\
             &= - \int_{\tau}^\infty  \Pr_Y [p(Y) \geq t] dt \pm O(\eps \log^2 (1 / \eps)) \; , \\
        \end{align*}
        where the last line follows since $\E_Y [p(Y)] = 0$.
        To finish, we observe that by~\Cref{lem:quadratic-tail-bound}, we have
        \begin{align*}
            \left| \int_{\tau}^\infty  \Pr_Y [p(Y) \geq t] dt \right| \leq \int_\tau^\infty \exp (-t^{1/2}) dt = O(\eps \log^2 1 / \eps) \; ,
        \end{align*}
        as claimed.
    \end{proof}

\section{Non-Adaptive Lower bound for Single-Qubit Two-Outcome Projective Measurements} \label{sec:na-lb}
Notice that two-step adaptivity is crucial to the reduction in Theorem~\ref{thm:reduction}. Naturally, we ask if we can show that this adaptivity is inherent to the task at hand. We specifically do so for a restricted class of algorithms that are only permitted to perform non-adaptive \emph{single-qubit two-outcome projective measurements}, that is POVMs of the form:
$$\mathcal{M}=\bigotimes_{i=1}^{n}\{\ketbra{b_i}{b_i},\ketbra{b_i^\perp}{b_i^\perp}\}$$
This corresponds to separately measuring each qubit of each copy in some basis.
Specifically, we show the following lower bound.
\begin{theorem}\label{thm:lb-product-basis-meas} For some constant $c>0$,
    any algorithm for Problem~\ref{prob:agnostic-pmix} with $\epsilon = n^{-c}$ that achieves $1-\exp(-n^{c})$ error with probability at least ${\exp(-n^{c})}$ that uses measurements of the form $\mathcal{M}_1, \ldots, \mathcal{M}_N$, where the $\mathcal{M}_i$ are a set of non-adaptively chosen single-qubit two-outcome projective measurements, requires $N = {\exp(n^c)}$ copies. 
\end{theorem}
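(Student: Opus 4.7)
The plan is to use Le Cam's two-point method combined with a moment-matching construction, following the sketch in the introduction. I would construct two distributions $\mathcal{D}_0, \mathcal{D}_1$ over $n$-qubit product mixed states such that (i) for any non-adaptive product PVM measurements on $N$ copies, the outcome distributions are \emph{identical} whenever $N \leq K$ for some $K = n^{\omega(1)}$, while (ii) typical draws $\rho_0 \sim \mathcal{D}_0, \rho_1 \sim \mathcal{D}_1$ are $\Omega(1)$-far in trace distance. Concretely, on each qubit $i$ independently sample an eigenvector $|v_i\rangle$ Haar-uniformly on the Bloch sphere and an eigenvalue $\lambda_i \sim \mu_k$ supported on $[0, 1/n]$, then set $\pi_i = (1-\lambda_i)|v_i\rangle\langle v_i| + \lambda_i|v_i^\perp\rangle\langle v_i^\perp|$ and $\rho = \bigotimes_i \pi_i \in \cM_n$ (so the setup of Problem~\ref{prob:agnostic-pmix} holds with $\eps = 0$). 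The key design choice is that $\mu_0, \mu_1$ are two distinct probability measures on $[0, 1/n]$ matching their first $K$ moments, constructible for any $K$ via classical moment-matching techniques (e.g., by placing $K+2$ atoms at appropriately rescaled Chebyshev-type nodes and solving the resulting feasibility system).

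For the indistinguishability step, I would show that $P_k$ factorizes across qubits as $P_k = \prod_i P_k^{(i)}$, and that for each qubit $i$, conditional on $v_i$, the joint $N$-copy likelihood factors as
\[
L_i(\lambda_i) = L_i(0) \prod_{j=1}^N (1 + \lambda_i r_{i,j}),
\]
where $r_{i,j}$ depends only on the outcome $X_{i,j}$ and on $c_{i,j} = |\langle b_{i,j}|v_i\rangle|^2$. Expanding via elementary symmetric polynomials yields a polynomial of degree at most $N$ in $\lambda_i$, so $\E_{\lambda_i \sim \mu_k}[L_i(\lambda_i)]$ depends only on the first $N$ moments of $\mu_k$. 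Matching $K \geq N$ moments forces these expectations to coincide, and averaging over the common Haar marginal of $v_i$ gives $P_0^{(i)} = P_1^{(i)}$ exactly. Multiplying across qubits, $P_0 = P_1$ as distributions on the full measurement record, so no algorithm using $N \leq K$ copies can gain any information about the hypothesis.

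For the trace-distance gap, I would couple $\rho_0, \rho_1$ via shared $\{v_i\}$ so they have common eigenbases; then $\trd(\rho_0, \rho_1)$ equals the TV distance between Bernoulli products with parameter vectors $(\lambda_i^{(0)})_i$ and $(\lambda_i^{(1)})_i$. Because $\mu_0, \mu_1$ have non-trivial spread over $[0, 1/n]$, independent draws typically satisfy $|\lambda_i^{(0)} - \lambda_i^{(1)}| = \Theta(1/n)$, yielding per-qubit squared-Hellinger contributions of order $1/n$; the product Hellinger identity $1 - \tfrac12 \hd(\otimes P_i, \otimes Q_i) = \prod_i (1 - \tfrac12 \hd(P_i, Q_i))$ then lifts these to $\Omega(1)$ overall, which implies $\trd(\rho_0, \rho_1) = \Omega(1)$ with high probability via standard Hellinger--TV inequalities. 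Combining both steps yields the lower bound: since $P_0 = P_1$, the algorithm's output $\hat\pi(X)$ has identical distribution under both hypotheses, so success with probability $\geq 0.1$ on both would, by the triangle inequality and an averaging argument over the measurement record, force $\trd(\rho_0, \rho_1) = o(1)$ with positive probability, contradicting the gap. The main obstacle is the delicate construction of $\mu_0, \mu_1$: matching $K = n^{\omega(1)}$ moments on $[0, 1/n]$ while keeping the weights nonnegative and the support interior enough to ensure independent draws typically differ by $\Theta(1/n)$ requires carefully managed shifts of the support away from the endpoints, going beyond a naive Chebyshev-node placement.
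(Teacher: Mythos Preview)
Your proposal has a fundamental gap: your hard instances $\rho = \bigotimes_i \pi_i$ are themselves \emph{product states}, i.e., you are working entirely in the realizable case $\eps = 0$ of Problem~\ref{prob:agnostic-pmix}. But the realizable case is easy with non-adaptive single-qubit measurements: measuring each qubit in the three Pauli bases with $N$ copies recovers every Bloch vector to error $O(1/\sqrt{N})$, and subadditivity of trace distance gives $\trd(\rho,\hat\pi)\le \sum_i \trd(\pi_i,\hat\pi_i)=O(n/\sqrt{N})=o(1)$ once $N\gg n^2$. So a polynomial-copy non-adaptive algorithm succeeds on every instance in the support of your priors $\mathcal D_0,\mathcal D_1$, and no lower bound can arise from them.

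The error in your reasoning is the inference ``$P_0=P_1$ implies the algorithm cannot learn $\rho$.'' Equality of the \emph{marginal} outcome laws only shows that the algorithm cannot identify which prior $\mathcal D_k$ the state was drawn from; it does not prevent learning the specific $\rho$, because the posteriors $\rho\mid X$ under the two priors can (and here do) concentrate near the same point. Your triangle-inequality step tacitly couples two \emph{independent} draws $\rho_0,\rho_1$ through a common $X$, but there is no joint experiment in which the same $X$ is generated by both $\rho_0$ and $\rho_1$; the argument collapses in the degenerate case $\mu_0=\mu_1$, where $P_0=P_1$ trivially yet a perfect learner exists and independent draws are still $\Omega(1)$-far.

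The paper's construction avoids this by making the two hard states \emph{non-product}: it takes a single bias $t$ shared across all $n$ qubits and mixes over $t\sim p_\ell$, so $\rho_\ell=U\,\E_{t\sim p_\ell}[\mathrm{diag}(1-t,t)^{\otimes n}]\,U^\dagger$ is only $\eps$-close to a product state. The shared $t$ creates cross-qubit correlations that make the likelihood of any product-basis outcome a low-degree polynomial in the \emph{single} variable $t$ (up to superpolynomially small error), so moment-matching $p_1$ and $p_2$ in $t$ forces the outcome distributions of the two fixed states $\rho_1,\rho_2$ to be $n^{-\omega(1)}$-close in TV, while their nearest product approximations remain $\Omega(1)$-far. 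This is a genuine two-point Le~Cam argument on fixed states, and the agnostic (non-product) structure is precisely what your construction is missing.
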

To prove this, we begin with a warmup that proves a lower bound for constant error by constructing a family of pairs of mixed states so that any given measurement has exponentially small probability of providing more than an exponentially small amount of information about which one it is observing.
\begin{proposition}\label{prop:main-lower-bound-prop}
For some constant $c>0$, and $\epsilon=n^{-c}$, there exists an ensemble of pairs of $n$-qubit mixed states $\rho_1,\rho_2$ that are each $\epsilon$-close in trace distance to product mixed states and $\Omega(1)$-far from each other in trace distance so that for any $\mathcal{M}=\otimes_{i=1}^n \mathcal{M}_i$ where $\mathcal{M}_i$ is a two-outcome projective measurement on the $i$-th qubit, with probability at least $1-\exp(-n^{c})$ over the choice of $\rho_1,\rho_2$, the distributions of measurement outcomes for $\mathcal{M}$ applied to $\rho_1$ and $\mathcal{M}$ applied to $\rho_2$ differ in trace distance by at most $\exp(-n^c)$.
\end{proposition}

In order to prove this hardness result, we will need to choose mixed states that can only be easily learned if the appropriate measurement basis is known. To do this, we will need to choose highly unbalanced product mixed states. In particular, we will pick a random, common product basis in which both distributions are diagonal and then construct $\rho_1$ and $\rho_2$ to be $\epsilon$-approximate products with respect to this basis. In particular, our mixed state will be equivalent to the stochastic process of sampling a bias parameter $t$ from some near-deterministic distribution and then independently setting each qubit in its respective unknown basis to be the first basis vector with probability $1-t$ and the second basis vector with probability $t$. Importantly, both mixed states when conditioned on $t$ are product mixed states.

For $\rho_1$, we will pick $t$ to be $m/n$ with high probability for some carefully chosen $m$, and for $\rho_2$, we will pick $t$ to be $(m+\sqrt{m})/n$ with high probability. This means that the number of qubits in the second basis vector for $\rho_1$ will be roughly $\mathrm{Poisson}(m)$, whereas for $\rho_2$ it will be roughly $\mathrm{Poisson}(m+\sqrt{m})$, which has constant total variational distance. This guarantees the separation between $\rho_1$ and $\rho_2$ in trace distance. Furthermore, in order to make these states hard to distinguish we will need our distributions over $t$ to match many moments. We construct these distributions using the following Lemma which follows from standard techniques in the literature on polynomial threshold functions and low-degree lower bounds.

\begin{lemma}\label{lemma:lb-moment-match}
    Let $m$ and $k$ be hyperparameters, and let $p_1,p_2,D_1,D_2$ be probability distributions such that
    $$p_1=(1-\epsilon)\delta_{\frac{m}{n}}+\epsilon D_1,\quad p_2=(1-\epsilon)\delta_{\frac{m+\sqrt m}{n}}+\epsilon D_2 \;,$$
    where $\epsilon$ is small. For any small positive constant $\beta$, there exists some constant $\gamma$ such that if $m=n^{\beta}=(k/\epsilon)^\gamma$, there exists a choice of $D_1$ and $D_2$ supported on $\bigl[0,\frac{2m}{n}\bigr]$ such that $\E_{t\sim p_1}[t^r]=\E_{t\sim p_2}[t^r]$ for all integers $0\leq r\leq k$. 
\end{lemma}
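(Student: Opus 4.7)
The plan is to reduce the construction of $D_1, D_2$ to finding a single signed Borel measure $\nu$ on $[0, 2m/n]$ with prescribed moments and small total variation. Setting $\nu := \eps(D_1 - D_2)$, the moment-matching requirement $\E_{t \sim p_1}[t^r] = \E_{t \sim p_2}[t^r]$ for $r = 0, 1, \ldots, k$ becomes
\begin{equation*}
\int_0^{2m/n} t^r\, d\nu(t) = (1-\eps)\left[\left(\tfrac{m+\sqrt m}{n}\right)^r - \left(\tfrac{m}{n}\right)^r\right] \quad \text{for } r = 0, 1, \ldots, k.
\end{equation*}
The $r=0$ constraint forces $\int d\nu = 0$, so once such a $\nu$ with $|\nu|([0,2m/n]) \leq 2\eps$ is produced, its Hahn decomposition $\nu = \nu^+ - \nu^-$ satisfies $\nu^{\pm}([0,2m/n]) = s$ for some $s \leq \eps$, and I would set $D_1 := (\nu^+ + (\eps-s)\lambda)/\eps$ and $D_2 := (\nu^- + (\eps-s)\lambda)/\eps$ for any fixed probability measure $\lambda$ on $[0, 2m/n]$. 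This produces probability distributions supported on the desired interval with $\eps(D_1 - D_2) = \nu$, as required.

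Next, I would invoke duality to characterize when such a $\nu$ exists. By Hahn--Banach extension of the moment-matching functional from the polynomial subspace to all of $C([0, 2m/n])$, followed by the Riesz representation theorem, the minimum total variation of a signed measure satisfying the prescribed moment conditions equals the operator norm of the linear functional
\begin{equation*}
\Phi(P) \;=\; (1-\eps)\bigl[P\bigl(\tfrac{m+\sqrt m}{n}\bigr) - P\bigl(\tfrac{m}{n}\bigr)\bigr]
\end{equation*}
on polynomials of degree at most $k$ equipped with the sup norm on $[0, 2m/n]$. To bound $\|\Phi\|$ I would apply Markov's inequality: any polynomial $P$ of degree $\leq k$ with $\|P\|_{\infty, [0,2m/n]} \leq 1$ satisfies $\|P'\|_{\infty, [0,2m/n]} \leq k^2 n/m$. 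Both evaluation points lie in $[0, 2m/n]$ provided $m \geq 1$, so by the mean value theorem,
\begin{equation*}
|\Phi(P)| \;\leq\; (1-\eps)\cdot \frac{k^2 n}{m}\cdot \frac{\sqrt m}{n} \;\leq\; \frac{k^2}{\sqrt m}.
\end{equation*}

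Finally, one verifies that $k^2/\sqrt m \leq 2\eps$ whenever $m \geq k^4/\eps^2$, which is ensured by taking $\gamma = 4$ in the hypothesis $m = (k/\eps)^\gamma$ (and then setting $\beta = \gamma \log(k/\eps)/\log n$). The main conceptual obstacle is the duality step: passing from the functional-norm bound on the polynomial subspace to an actual signed Borel measure on the compact interval with controlled total variation. This is handled by the standard Hahn--Banach plus Riesz argument sketched above; the rest of the proof consists of the explicit Markov-inequality estimate and the elementary Hahn-decomposition repackaging.
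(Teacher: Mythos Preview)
Your proof is correct and follows essentially the same strategy as the paper: both reduce the construction of $D_1,D_2$ to producing a signed measure on the interval with prescribed low-order moments and total variation at most $2\eps$, then split it via the Hahn decomposition. The paper black-boxes that existence step by citing Exercise~8.3 of \cite{diakonikolas2023algorithmic} (after a translation by $m/n$), obtaining the condition $\mathrm{poly}(k)/(\sqrt{m}\,\eps)<1$; you instead give a self-contained argument via Hahn--Banach/Riesz duality and Markov's polynomial inequality, which yields the explicit bound $\|\Phi\|\leq k^2/\sqrt{m}$ and hence the concrete choice $\gamma=4$.
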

\begin{proof}
By translating the distributions in question by $m/n$, we note that it suffices to find distributions $D_1$ and $D_2$ so that for
    $$p_1'=(1-\epsilon)\delta_{0}+\epsilon D_1,\quad p_2'=(1-\epsilon)\delta_{\frac{\sqrt m}{n}}+\epsilon D_2$$
we have $\E_{t\sim p_1}[t^r]=\E_{t\sim p_2}[t^r]$ for all integers $0\leq r\leq k$. In particular, this means that $D_1$ and $D_2$ are distributions supported on $[-m/n,m/n]$ so that for $1\leq r \leq k$,
$$
\E[D_1^r] - \E[D_2^r] = \left(\frac{1-\epsilon}{\epsilon}\right)(\sqrt{m}/n)^r.
$$
If we let $D_1$ and $D_2$ have probability densities that differ by $p(x)dx$ for some function $p$ that we will chose, we need to find a $p$ with $\|p\|_1 \leq 2$ so that for $1\leq r\leq k$,
$$
\int_{-m/n}^{m/n} p(x)x^r dx = \left(\frac{1-\epsilon}{\epsilon}\right)(\sqrt{m}/n)^r
$$
and $\int_{-m/n}^{m/n}p(x)=0$ for $r=0$.
However, by Exercise 8.3 in~\cite{diakonikolas2023algorithmic}, this is possible so long as
$$
\mathrm{poly}(k)\max_{1\leq r \leq k} (\sqrt{m}/n)^r (n/m)^r/\epsilon < 1.
$$
Since $m>1$, this is equivalent to ${\rm poly}(k)(1/\sqrt{m})/\epsilon < 1$. Taking $m=(k/\epsilon)^\gamma$ for suitable $\gamma$, this is immediate.
\end{proof}

We can then use these two distributions to sample the shared bias parameter $t$ for each mixed state. Since Lemma~\ref{lemma:lb-moment-match} implies that $t=O(m/n)=O(n^{\beta-1})$ when $t\sim D_{\ell}$ for $\ell\in \{1,2\}$, conditioned on $t$, the resulting product mixed states will be very unbalanced. We now give a formal construction of our two mixed states which we show have constant separation in trace distance. 
\begin{lemma}\label{lemma:lb-states}
    Let $U$ be some product Haar unitary over $n$ qubits, meaning $U=\bigotimes_{i=1}^{n}U_i$ where $\{U_i\}_{i=1}^{n}$ are independent single-qubit Haar unitaries. Let $M(t)={\rm diag}(1-t,t)^{\otimes n}$ be a product mixed state with a shared bias $t$. For $\ell \in \{1,2\}$, define the mixed state:
    $$\rho_\ell=U\Tilde{\rho}_{\ell}U^\dagger,\quad \Tilde \rho_\ell =\E_{t\sim p_\ell}[M(t)] \;.$$
    Then, $\trd(\rho_1,\rho_2)=\Omega(1)$ for large $n$ and small $\epsilon$. 
\end{lemma}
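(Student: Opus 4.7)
The plan is to reduce the claim to a purely classical total-variation computation between two mixtures of binomials. Since trace distance is unitarily invariant, $\trd(\rho_1,\rho_2)=\trd(\tilde\rho_1,\tilde\rho_2)$, so the random product unitary $U$ plays no role in this particular lemma (it only matters later when analyzing the measurement distributions). Because $M(t)=\mathrm{diag}(1-t,t)^{\otimes n}$ is diagonal in the computational basis for every $t$, each $\tilde\rho_\ell=\E_{t\sim p_\ell}[M(t)]$ is also diagonal, and hence the trace distance equals the total variation distance between the classical distributions $P_\ell$ on $\{0,1\}^n$ defined by $P_\ell(s)=\E_{t\sim p_\ell}[t^{|s|}(1-t)^{n-|s|}]$.

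Next, I would observe that $P_\ell(s)$ depends only on the Hamming weight $|s|$, so $\tvd(P_1,P_2)$ equals the total variation distance between the induced Hamming-weight distributions $q_\ell$ on $\{0,1,\ldots,n\}$. Conditioned on the bias $t$, the Hamming weight is $\mathrm{Bin}(n,t)$-distributed, so
$$
q_\ell \;=\; (1-\eps)\,\mathrm{Bin}(n, t_\ell) \;+\; \eps R_\ell,
$$
where $t_1=m/n$, $t_2=(m+\sqrt{m})/n$, and $R_\ell=\E_{t\sim D_\ell}[\mathrm{Bin}(n,t)]$. By the triangle inequality,
$$
\tvd(q_1,q_2)\;\geq\;(1-\eps)\,\tvd\bigl(\mathrm{Bin}(n,m/n),\,\mathrm{Bin}(n,(m+\sqrt m)/n)\bigr)\;-\;\eps.
$$

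The remaining task is to show that the binomial total-variation distance on the right is bounded below by an absolute constant. The key quantitative point is that the means of the two binomials differ by exactly $\sqrt{m}$, which is one standard deviation: each $\mathrm{Bin}(n,t_\ell)$ has variance $nt_\ell(1-t_\ell)=m(1+o(1))$ in our regime. In the regime $m=n^\beta$ with $\beta$ sufficiently small (guaranteed by Lemma~\ref{lemma:lb-moment-match}), Le Cam's inequality gives $\tvd(\mathrm{Bin}(n,m/n),\mathrm{Poisson}(m))=O(m^2/n)=o(1)$, and by the local CLT for Poisson the rescaled law $(X-m)/\sqrt m$ converges in total variation to $N(0,1)$. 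Therefore the binomial TV is $2\Phi(1/2)-1-o(1)=\Omega(1)$; equivalently, one can exhibit an explicit test such as $\mathbf{1}[|s|>m+\tfrac{1}{2}\sqrt m]$ whose acceptance probabilities differ by a constant under the two binomials. Taking $\eps$ sufficiently small and $n$ sufficiently large then gives $\trd(\rho_1,\rho_2)=\Omega(1)$.

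The main (mild) obstacle is handling the parameter regime cleanly: we must verify that Le Cam and the CLT each contribute only $o(1)$ error while simultaneously $m=n^\beta\to\infty$, which constrains $\beta$ to lie in $(0,1/2)$. This is entirely compatible with Lemma~\ref{lemma:lb-moment-match}, since the lemma is stated for arbitrarily small $\beta>0$; the moment-matching and the TV lower bound therefore hold simultaneously in the same asymptotic regime.
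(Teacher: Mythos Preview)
Your proposal is correct and follows essentially the same route as the paper: reduce by unitary invariance to a classical TV distance between mixtures, split off the $\epsilon$-mass via the triangle inequality, and lower-bound the TV between $\mathrm{Bin}(n,m/n)$ and $\mathrm{Bin}(n,(m+\sqrt m)/n)$ by a Gaussian approximation together with the explicit test at threshold $m+\tfrac12\sqrt m$. The only cosmetic differences are that the paper goes from $\{0,1\}^n$ to Hamming weights via the data-processing inequality rather than the sufficient-statistic observation, and invokes Berry--Esseen directly on the binomials (error $O(n^{-1/2})$, no constraint on $\beta$) instead of Le Cam plus a Poisson CLT; both reach the same $2\Phi(1/2)-1-o(1)$ bound.
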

\begin{proof}
    By unitary invariance,
    \begin{align*}
        \trd(\rho_1,\rho_2)&=\frac{1}{2}\|\Tilde \rho_1-\Tilde \rho_2\|_1=\tvd(P_1,P_2) \;,
    \end{align*}
    where $P_\ell={\rm Bern}^{\otimes n}(t)$ is conditionally a binary product distribution with a shared bias $t\sim p_\ell$. Let $Q_\ell={\rm Bern}^{\otimes n}(t)$ be similarly constructed with $t\sim D_\ell$. Then, by triangle inequality,
    \begin{align*}      
    \tvd(P_1,P_2)&\geq (1-\epsilon)\tvd\left({\rm Bern}^{\otimes n}\left(\frac{m}{n}\right),{\rm Bern}^{\otimes n}\left(\frac{m+\sqrt{m}}{n}\right)\right)-\epsilon \tvd(Q_1,Q_2) \\
    &\geq (1-\epsilon)\tvd\left({\rm Bin}\left(n,\frac{m}{n}\right), {\rm Bin}\left(n,\frac{m+\sqrt{m}}{n}\right)\right)-\epsilon \;.
    \end{align*}
    Denote these binomials as $B_1,B_2$ respectively. Let $A$ be the event that the outcome of the binomial is greater than $m+\sqrt{m}/2$. Then, for large $n$,
    \begin{align*}
       B_1\xrightarrow{d}Z_1\equiv \mathcal{N}\left(m,m(1-o(1)\right)\quad B_2\xrightarrow{d}Z_2\equiv \mathcal{N}\left(m+\sqrt{m},(m+\sqrt{m})\left(1-o(1)\right)\right) \;.
    \end{align*}
    Then, by the Berry-Esseen CLT, we obtain 
    \begin{align*}
        \Pr_{B_1}[A]&=\Pr\left[Z_1>\frac{\sqrt{m}/2}{\sqrt{m\left(1-o(1)\right)}}\right]\pm O(n^{-1/2})=1-\Phi(1/2)\pm o(1)\\
        \Pr_{B_2}[A]&=\Pr\left[Z_2>\frac{-\sqrt{m}/2}{\sqrt{(m+\sqrt{m})\left(1-o(1)\right)}}\right]\pm O(n^{-1/2})=\Phi(1/2)\pm o(1)
    \end{align*}
    Then, $\tvd(B_1,B_2)\geq |\Pr_{B_1}[A]-\Pr_{B_2}[A]|=2\Phi(1/2)-1\pm o(1)$ which is constant. Thus,
    $$\tvd(P_1,P_2)\geq (1-\epsilon)(2\Phi(1/2)-1\pm o(1))-\epsilon $$
    is also constant for small $\epsilon$ and sufficiently large $n$. 
\end{proof}
Let $\cM_{\rho_\ell}$ be the distribution over measurement outcomes achieved by measuring $\rho_\ell$ with $\cM$. We seek to show that $\tvd(\cM_{\rho_1},\cM_{\rho_2})=\exp(-n^c)$ is exponentially small. 

To analyze this, let $\mathcal{M}_{U,t}$ be the distribution on measurement outcomes when the measurement $\mathcal{M}$ is applied to the product mixed state $U \rho_t U^\dagger$ where $\rho_t = \mathrm{diag}(1-t,t)^{\otimes n}$ is the product mixed state in the standard basis with shared bias $t$. We note that $\cM_{\rho_\ell} = \E_{U,t\sim p_\ell}[\cM_{U,t}].$ We want to show that with probability $1-\exp(-n^c)$ over the choice of $U$ that $\| \E_{t\sim p_1}[\cM_{U,t}]- \E_{t\sim p_2}[\cM_{U,t}] \|_1 = \exp(-n^c).$

Suppose we measure $\rho_\ell$ with $\cM$ and get an outcome $F=\bigotimes_{i=1}^{n}F_i$ where $F_i=\ketbra{f_i}{f_i}$ such that $\ket{f_i}\in \{\ket{b_i},\ket{b_i^\perp}\}$. Let $\gamma_i=|\bra{0}U_i^\dagger \ket{b_i}|^2$ be the overlap between the random basis and the measurement basis. WLOG, $\gamma_i <1/2$ for all $i\in [n]$ by swapping the order of the basis elements for each qubit that violates this. The intuition is that any claimed algorithm's corresponding POVM will have low overlap with the random basis, hiding the approximate unbalanced product mixed state structure of the mixed state. 

We demonstrate this by arguing that conditioned on $t$, the probability of observing some measurement outcome $F$ can be written as a low-degree polynomial in $t$ plus a small error term. Since the expectation of low degree polynomials in $t$ are the same over $t\sim p_1$ and $t\sim p_2$, this will complete our proof.

In order to do this, we will want to analyze separately the coordinates for which the measurement basis and the true basis are very close. Noting that our state is a product mixed state when conditioned on $t$, we can separately consider the coordinates of low and high overlap as they are conditionally independent. 

Formally, let $\alpha$ be some small positive constant. We say a coordinate is \emph{good} if $\gamma_i > n^{-\alpha}$, and we say a coordinate is \emph{bad} if $\gamma_i \leq n^{-\alpha}$. Let $I$ denote the set of good coordinates. Then, by conditional independence,
$$
\cM_{U,t} = (\cM_{U,t})_{[n]-I} \times (\cM_{U,t})_{I}.
$$
Thus, we proceed by showing that the conditional probability is approximately low-degree in $t$, when restricting to each set of coordinates. We begin with the bad coordinates.

When approximating this distribution by one with polynomial entries, we might produce probabilities for the individual components that are neither strictly positive nor normalized. Thus, we define a pseudo-distribution to be simply a real-valued measure, and a (degree $k$) pseudo-distribution-valued polynomial to be a function from $\R$ to pseudo-distributions where the measure of any set is a degree at most $k$ polynomial.
\begin{lemma}\label{lemma:lb-bad}
If $\alpha\geq \beta$, with probability $1-\exp(-n^{\Omega(1)})$ over the choice of $U$, there exists a degree $k/2$ pseudo-distribution-valued polynomial $f_{[n]\setminus I}(t)$ so that for $t\in [0,2m/n]$ we have
$$
\|(\cM_{U,t})_{[n]\setminus I} - f_{[n]\setminus I}(t)\|_1 = \exp(-n^{\Omega(1)}). 
$$
\end{lemma}
\begin{proof}
    Since $\gamma_i \iid {\rm Unif}(0,1)$, there are $n_{\rm bad}=O(n^{1-\alpha})$ many bad coordinates with high probability. Since the distribution over measurement outcomes is obtained by applying a stochastic linear transformation to the distribution over states in the diagonal basis, it suffices to assume that the bad coordinates are perfectly bad, meaning $\gamma_i=0$ for $i\in [n]\setminus I$. Then, conditioning on $t$, the distribution of measurement outcomes is equivalent to ${\rm Bern}^{\otimes n_{\rm bad}}(t)$ if we appropriately label each qubit's measurement basis with $\{0,1\}$. Since this distribution is symmetric, it suffices to consider the distribution over the number of observed ones, denoted $s$. We have 
    \begin{align*}
    \Pr[{\rm Bin}(n_{\rm bad},t)=s]&=\binom{n_{\rm bad}}{s}t^s(1-t)^{n_{\rm bad}-s}\\
    &=\sum_{j=1}^{n_{\rm bad}-s}\binom{n_{\rm bad}}{s}\binom{n_{\rm bad}-s}{j}(-t)^{j+s}            \\
    &=\underbrace{\sum_{0\leq j \leq k/2-s}\binom{n_{\rm bad}}{s}\binom{n_{\rm bad}-s}{j}(-t)^{j+s}}_{f_{[n]\setminus I}(t)}+\underbrace{\sum_{k/2-s < j \leq n_{bad}-s}\binom{n_{\rm bad}}{s}\binom{n_{\rm bad}-s}{j}(-t)^{j+s}}_{\xi_{[n]\setminus I}(t)}
    \end{align*}
    Bounding the error term, 
    $$|\xi_{[n]\setminus I}|\lesssim \sum_{k/2-s < j \leq n_{bad}-s} (n_{\rm bad} t)^{j+s}.$$
    Since $t=O(m/n)$, $n_{\rm bad}t=O(n^{\beta-\alpha})$. If we set $\alpha\geq \beta$, 
    $$|\xi_{[n]\setminus I}|\leq e^{-n^{\Omega(1)}}$$
    since $k=\epsilon n^{\beta/\gamma}$. Summing this error over all $n_{\rm bad}+1$ possible values of $s$ gives our result.
\end{proof}
We now continue to the good coordinates. Expanding the conditional probability of producing a specific measurement outcome,
$$\Pr[\cM_{U,t}=\bigotimes_{i \in I} F_i]=\tr(F_I U\Tilde M(t) U^\dagger)=\prod_{i\in I} L_{F_i}(t)$$
where $L_{F_i}(t)=\tr( F_i U_i {\rm diag}(1-t,t) U_i^\dagger)=p_{F_i} (1-t)+q_{F_i} t$ with $p_{F_i}=|\bra{0}U_i^\dagger  \ket{f_i}|^2$ and $q_{F_i}=1-p_{F_i}$. Then,
$$L_{F_i}(t)=p_{F_i}+(q_{F_i}-p_{F_i})t$$
where $p_{F_i},q_{F_i} \geq  n^{-\alpha}$. Consider the logarithm of the probability after factoring out the leading term. 
$$\Pr\left[\cM_{U,t}=\bigotimes_{i \in I} F_i\right]=\exp(G_F(t))\prod_{i\in I}p_{F_i}$$
$$G_F(t)=\sum_{i\in I} \log(1+\eta_{F_i}  t),\quad \eta_{F_i}\equiv \frac{q_{F_i}-p_{F_i}}{p_{F_i}}\leq n^{\alpha}$$
We proceed by showing that $\exp(G_F(t))$ is approximately a low-degree polynomial in $t$ in two steps. First, we show in Lemma~\ref{lemma:Gf-moment} that all constant moments of $G_F(t)$ are approximately low-degree. Second, we show in Lemma~\ref{lemma:Gf-small} that with high probability over the observed $F$, $|G_F(t)|$ is small meaning that $\exp(G_F(t))$ is well-approximated by its Taylor expansion. We can then use both of these facts to handle $\exp(G_F(t))$ and thus the good coordinates.
\begin{lemma}\label{lemma:Gf-moment}
    If $\alpha+\beta < 1$, for any constant $r>0$, there exists a polynomial $g_F(t)$ of degree at most $r$ such that
    $$G_F(t)=g_F(t)+\zeta_F(t)$$
    where $|\zeta_F(t)|\leq n^{1-\Omega_{\alpha,\beta}(r)}$ when $t\in [0,2m/n]$. 
\end{lemma}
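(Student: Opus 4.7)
The plan is to Taylor-expand each $\log(1 + \eta_{F_i} t)$ coordinate-wise in $t$ around $t = 0$, truncate the Taylor series at a suitable degree $K$, and then raise the truncated sum to the $r$-th power. Since $t \leq O(m/n) = O(n^{\beta-1})$ on the support of $p_\ell$, and $|\eta_{F_i}| \leq n^{\alpha}$ (recalling $p_{F_i}, q_{F_i} \geq n^{-\alpha}$ for $i \in I$), each increment satisfies $|\eta_{F_i} t| \leq O(n^{\alpha+\beta-1})$, which is $o(1)$ by the hypothesis $\alpha + \beta < 1$. In particular, all of the logarithms sit well within the radius of convergence, and $\log(1+x)$ behaves linearly there.

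Concretely, set $K = \lfloor k/(2r) \rfloor - 1$ and define
\[
\tilde{G}_F(t) \;=\; \sum_{i \in I} \sum_{j=1}^{K} \frac{(-1)^{j+1}}{j} (\eta_{F_i} t)^j,
\]
which is a polynomial in $t$ of degree at most $K$. Using the standard tail bound $|\log(1+x) - \sum_{j=1}^{K} (-1)^{j+1} x^j/j| \leq 2|x|^{K+1}$ for $|x| \leq 1/2$, the per-coordinate error is at most $2|\eta_{F_i} t|^{K+1} \leq 2 n^{(\alpha+\beta-1)(K+1)}$. Summing over $|I| \leq n$ coordinates gives
\[
|G_F(t) - \tilde{G}_F(t)| \;\leq\; 2 n^{1 + (\alpha+\beta-1)(K+1)} \;=\; n^{-\Omega(k)},
\]
where the final equality uses $K = \Theta(k/r)$ together with the positive constant gap $1 - \alpha - \beta > 0$ and the fact that $r$ is constant.

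To raise to the $r$-th power, observe that the same per-coordinate estimate gives $|G_F(t)| \leq 2 \sum_{i \in I} |\eta_{F_i} t| \leq 2 n^{\alpha+\beta}$ and hence $|\tilde{G}_F(t)| \leq O(n^{\alpha+\beta})$. Define $g_r(t) = \tilde{G}_F(t)^r$, which is a polynomial in $t$ of degree at most $rK < k/2$. By the binomial expansion
\[
G_F^r - g_r \;=\; \sum_{j=1}^{r} \binom{r}{j} \tilde{G}_F^{\,r-j} (G_F - \tilde{G}_F)^j,
\]
each summand is bounded by $\binom{r}{j} \cdot O(n^{(\alpha+\beta)(r-j)}) \cdot n^{-\Omega(k) j}$. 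Since $r$ is a constant and $\alpha+\beta < 1$ is a fixed constant, the entire right-hand side is absorbed into $n^{-\Omega(k)}$, which yields the desired bound on $\zeta_r(t) := G_F^r(t) - g_r(t)$.

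The only delicate point is balancing the truncation degree: we need $rK < k/2$ so that $g_r$ has degree below $k/2$, while simultaneously needing $K$ large enough that $n^{1 + (\alpha+\beta-1)(K+1)}$ decays like $n^{-\Omega(k)}$. Because $r$ is constant and $1 - \alpha - \beta$ is bounded away from zero, the choice $K = \Theta(k/r)$ comfortably satisfies both constraints; I expect no other serious obstacle, since the argument is a direct Taylor expansion once the smallness of $\eta_{F_i} t$ and the bound $|G_F(t)| = O(n^{\alpha+\beta})$ are in place.
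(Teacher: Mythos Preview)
Your argument is correct and follows the same core idea as the paper: use $|\eta_{F_i} t|\lesssim n^{\alpha+\beta-1}=o(1)$ to justify Taylor-expanding each $\log(1+\eta_{F_i} t)$, truncate, and control the remainder when raising to the $r$-th power. The organizational choice differs slightly. The paper truncates each logarithm at degree $k/2$, then expands the $r$-fold product $\prod_{i\in J}T_{k/2}(\eta_{F_i}t)$ and performs a \emph{second} truncation to degree $k/2$, bounding the $O((k/2)^r)$ discarded high-degree terms individually. You instead truncate each logarithm at the smaller degree $K=\lfloor k/(2r)\rfloor-1$, so that $\tilde G_F^{\,r}$ automatically has degree below $k/2$ and no second truncation is needed; the price is a slightly weaker per-coordinate tail exponent, but since $r$ is constant this is harmless and still gives $n^{-\Omega(k)}$. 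Your route is marginally cleaner bookkeeping, while the paper's version makes the dependence on $k$ in the error more explicit; both reach the same conclusion by the same mechanism.
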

\begin{proof}
    If $\alpha+\beta <1$,
    $$|\eta_{F_i} t|\lesssim n^{\alpha}\cdot \frac{m}{n}=n^{\alpha+\beta-1}=n^{-\Omega_{\alpha,\beta}(1)}$$
    This justifies the Taylor expansion of each logarithm in $G_F(t)$. Bounding the $k$-th order truncation error, 
    $$\log(1+x)=T_{r}(x)+R_{r}(x) \text{ where } T_{r}(x)=\sum_{j=1}^{r}\frac{(-1)^{j+1}x^j}{j}\text{ and } R_{r}(x)=O\left(\frac{x^{r+1}}{k/2+1}\right)$$
    Summing this over all $\eta_{F_i}t$ for $i\in I$, we get
    $$
    G_F(t) = \sum_{i\in I} \log (1+\eta_{F_i} t) = \sum_{i \in I}(T_{r}(t)+R_{r}(t)) = \sum_{i \in I}T_{r}(t) + O(n\max_{i\in I}(|\eta_{F_i}t|)^{r+1}).
    $$
    Noting that $\max_{i\in I}(|\eta_{F_i}t|)=n^{-\Omega_{\alpha,\beta}(1)},$ letting
    $$
    g_F(t) = \sum_{i \in I}T_{r}(t),
    $$
    we have 
    $$|\zeta_F(t)| = O(n^{1-\Omega_{\alpha,\beta}(r)}).$$
\end{proof}
 At this point, we would like to argue that $|G_F|$ is small to justify the Taylor expansion of $\exp(G_F(t))$. However, $|G_F|$ could be large for arbitrary $F$ since $\eta_{F_i}\leq n^{\alpha}$ is only crudely bounded. Thus, we instead argue that $|G_F|$ is small with high probability over the observed measurement outcome $F$. 
\begin{lemma}\label{lemma:Gf-small}
    If $\alpha+\beta <1/2$ and $s\in[0,2m/n]$, then with probability $1-\exp(-n^{\Omega(1)})$ over the observed outcome $F\sim \cM_{U,s}$, there exists a small constant $\kappa>0$ such that for and $U,t$ with $t\in[0,2m/n]$, $|G_F(t)|\leq n^{-\kappa}$. 
\end{lemma}
    \begin{proof}
        Since $|\eta_{F_i} t|\leq n^{-\Omega(1)}$ for $i\in I$, we have that:
        $$|G_F|=\left|\sum_{i\in I}\log(1+\eta_{F_i} t)\right|=\left|\sum_{i\in I}(\eta_{F_i} t\pm O((\eta_{F_i} t)^2)\right|\leq t\left|\sum_{i\in I}\eta_{F_i}\right|+\sum_{i\in I}O(\eta_{F_i} t)^2$$
        Since $|\eta_{F_i}|\leq n^{\alpha}$, $\sum_{i\in I}(\eta_{F_i}t)^{2}\leq t^2\cdot n\cdot n^{2\alpha}=n^{2\alpha+2\beta-1}$, meaning the second order term is small if $\alpha+\beta <1/2$. For the first order term, we consider $\eta_{F_i}$ over the distribution of the observed measurement outcome $F\sim \cM_{\rho_\ell}$. We know that the probability of observing $\ketbra{b_i}{b_i}$ is:
        \begin{align*}
        &\tr \ketbra{b_i}{b_i}U_i M_i(t)U_i^\dagger =\gamma_i(1-t)+(1-\gamma_i)t=\gamma_i+(1-2\gamma_i)t
        =\gamma_i\pm O (m/n)
        \end{align*}
        Then, we have the following cancellation in the expectation of $\eta_{F_i}$. 
        $$\E[\eta_{F_i}]=(\gamma_i\pm O(m/n))\cdot \frac{(1-\gamma_i)-\gamma_i}{\gamma_i}+(1-\gamma_i\pm O(m/n))\cdot \frac{\gamma_i-(1-\gamma_i)}{1-\gamma_i}\leq O(m/n)\left(\frac{1}{\gamma_i}+\frac{1}{1-\gamma_i}\right)$$
        Since $\gamma_i \geq n^{-\alpha}$, $\E[\eta_{F_i}]\lesssim n^{\alpha+\beta-1}$. Thus,
        $$\E\left[\sum_{i\in I}\eta_{F_i}\right]\lesssim n^{\alpha+\beta}.$$

        On the other hand, the $\eta_{F_i}$ terms are independent and have absolute value at most $n^{\alpha}.$ Therefore, by Chernoff bounds for any constant $\nu>0$, with exponentially large probability, we have that:
        $$
        \left| \sum_{i\in I}\eta_{F_i}\right| = O(n^{\alpha+\beta})+O(n^{\alpha+\nu+1/2}).
        $$
        Thus, with exponentially large probability, we have that
        $$|G_F|\lesssim n^{\max(\alpha+2\beta-1,\alpha+\beta+\nu-1/2,2\alpha+2\beta-1)}.$$
        This means that as long as $2\nu+\kappa < 1-2\alpha-2\beta$, the claim holds.
         \end{proof}
Lemmas~\ref{lemma:Gf-moment} and \ref{lemma:Gf-small} gives us control over the moments and size of $G_F(t)$. We will now show that this suffices to control the good coordinates.
\begin{lemma}\label{lemma:lb-good}
    If $\alpha+\beta < 1/2$, there exists a pseudo-distribution-valued polynomial $f_{I}(t)$ of degree $k/2$ such that 
    $$\| (\cM_{U,t})_I - f_{I}(t)\|_1 = \exp(-n^{\Omega_{\alpha,\beta}(1)})$$
    for all $t\in[0,2m/n].$
\end{lemma}
\begin{proof}
    Fix any positive integer $r$. By Lemma~\ref{lemma:Gf-moment}, for a choice of $\alpha+\beta<1/2$, and small constant $\kappa>0$, if $F$ is such that $|G_F(t)|<n^{-\kappa}$ for all $t\in [0,2m/n]$, we have that:
    \begin{align*}
        \exp(G_F(t))&=\exp(g_F(t)+\zeta_F(t))\\
        & = \exp(g_F(t))\exp(\zeta_F(t))\\
        & = \exp(g_F(t))(1+n^{1-\Omega_{\alpha,\beta}(r)})\\
        & = \sum_{j=0}^\infty g_F(t)^j/j! + O(\exp(G_F(t))n^{1-\Omega_{\alpha,\beta}(r)})\\
        & =\sum_{j=0}^{\lfloor k/(2r) \rfloor} g_F(t)^j/j! + O(G_F(t))^{k/(2r)}+O(\exp(G_F(t))n^{1-\Omega_{\alpha,\beta}(r)}) \;,
    \end{align*}
    with exponentially large probability over $F$ we have that $|G_F(t)| < n^{-\kappa}.$ In this case, let $f_I(t)$ assign $F$ probability $\left(\sum_{j=0}^{\lfloor k/(2r) \rfloor} g_F(t)^j/j!\right) \prod_{i\in I}p_{F_i}$, the difference between this and the probability that $\cM_{U,t}$ assigns to $F$ is $(n^{1-\Omega_{\alpha,\beta}(r)}+n^{-k\kappa/(2r)})\prod_{i\in I}p_{F_i}.$ Letting $r=\sqrt{k}$, gives error $\exp(-n^{\Omega_{\alpha,\beta,\kappa}(1)})\prod_{i\in I}p_{F_i}$.

    Let $f_I$ assign these probabilities to $F$ for all $F$ where $|G_F(t)|<n^{-\kappa}$ for all $t\in [0,2m/n]$ and $0$ to all other $F$. The contribution to $\|(\cM_{U,t})_I-f_I(t)\|_1$ from the outcomes where $|G_F|$ is always small is at most $\exp(-n^{\Omega_{\alpha,\beta,\kappa}(1)})$ times the sum over $F$ of $\prod_{i\in I}p_{F_i}$. Since the latter is just the probability that $(\cM_{U,0})_I$ assigns to $F$, these sum to  $\exp(-n^{\Omega_{\alpha,\beta,\kappa}(1)})$. The contribution to the error coming from $F$'s where $|G_F(t)|$ is large is at most the probability that $(\cM_{U,t})_I$ assigns to these $F$'s, which by Lemma \ref{lemma:Gf-small} is also exponentially small.

    This completes our proof.
\end{proof}

We can now complete the proof of Proposition \ref{prop:main-lower-bound-prop}:
\begin{proof}
    We note that (with high probability over $U$)
    \begin{align*}\cM_{U,t} & = (\cM_{U,t})_I \times (\cM_{U,t})_{[n]\setminus I}\\
    & = f_I(t)\times f_{[n]\setminus I}(t) + ((\cM_{U,t})_I - f_I(t))\times (\cM_{U,t})_{[n]\setminus I} +  (\cM_{U,t})_I\times ((\cM_{U,t})_{[n]\setminus I} - f_{[n]\setminus I}(t))\\
    & \ \ \ \ \ \ \ \ \ \ - ((\cM_{U,t})_I - f_I(t))\times ((\cM_{U,t})_{[n]\setminus I} - f_{[n]\setminus I}(t)).
    \end{align*}
    The first term here has the same expectation over $t\sim p_1$ and $t\sim p_2$ and the latter terms all have exponentially small $L^1$ norms. As $\cM_{\rho_\ell} = \E_{t\sim p_\ell}[\cM_{U,t}]$, this completes the proof.
\end{proof}

We are now prepared to prove Theorem \ref{thm:lb-product-basis-meas}.
\begin{proof}
    Firstly, we note that if we only wanted to obtain a constant error lower bound with constant probability of error, we could do this easily directly from Proposition \ref{prop:main-lower-bound-prop}. In particular, if we feed the algorithm a random $\rho_\ell$ for a random $\ell$, with high probability over the choice of $(\rho_1,\rho_2)$, all of the measurements that the algorithm makes will have nearly identical output distributions over $\rho_1$ and $\rho_2$, and thus the measurements will provide the algorithm almost no evidence as to which is the correct density and will at best have to guess (as no $\rho$ can be close to both $\rho_1$ and $\rho_2$ in trace norm).

    In order to get exponentially close to $1$ error with exponential probability, we will alter this construction to have many copies of $\rho_1$ and $\rho_2$ and force the algorithm to guess almost all of them.

    In particular, for $t$ being some small power of $n$, we will construct $\epsilon$-noisy product densities on $nt$ qubits in the following way:

    Firstly, choose a set $S$ of $2^{\Omega(t)}$ many strings in $\{0,1\}^t$ so that any two strings in $S$ differ in at least $t/3$ of their coordinates.

    Next, create two $n$-qubit densities $(\rho_1,\rho_2)$ as in Proposition \ref{prop:main-lower-bound-prop}.

    Finally, pick a uniformly random $s\in S$ and let
    $$
    \rho := \rho_s := \bigotimes_{i=1}^t \rho_{s_i}.
    $$
    Note that since each $\rho_{s_i}$ is $n^{-c}$ close to a product mixed state, $\rho$ is $tn^{-c}$-close. In particular, so long as $t<n^{c/2}$, $\rho$ will be polynomially close to a product density.

    Suppose that an algorithm makes $\exp(n^{c'})$ many product measurements for $c'>0$ sufficiently small. By Proposition \ref{prop:main-lower-bound-prop}, with exponentially high probability over the choice of the pair $(\rho_1,\rho_2)$, restricting each such measurement to any block of $n$-coordinates would have measurement outcomes on $\rho_1$ and $\rho_2$ that are $\exp(-n^c)$-close. Assuming that $c'$ is small enough relative to $c$, this implies that if the algorithm is run on many copies of $\rho_s$ for different $s\in S$ that the distributions over the full outputs are exponentially close in total variational distance. This means that up to inverse exponential error, the output of the algorithm is independent of $s.$

    Thus, we merely need to prove that for any mixed state $\rho$ that, with exponentially high probability over a uniform random $s\in S$, the trace distance between $\rho$ and $\rho_s$ is at least $1-\exp(-n^{c'}).$ After we show this, altering the values of $n$ and $c$ appropriately gives our result. 

    For this, we note that since $\rho_1$ and $\rho_2$ have constant trace distance and because any $s,s'\in S$ with $s\neq s'$ disagree in $\Omega(t)$ coordinates, we have that the trace distance between $\rho_s$ and $\rho_s'$ is exponentially close to $1$. In particular, this implies that 
    $\tr(\rho_s \rho_{s'}) < \exp(-n^{c''})$ for some $c''>0.$ However, if there is a $\rho$ close to many of the $\rho_s$, we will need to have a $\rho$ so that $\tr(\rho\rho_s) > \exp(-n^{\Omega(c')})$ for exponentially many $s\in S$. However, letting $T$ be the set of such $s$ we have that
    $$
    \sum_{s\in T}\tr(\rho\rho_s) \geq |T| \exp(-n^{\Omega(c')}).
    $$
    On the other hand
    \begin{align*}
        \sum_{s\in T}\tr(\rho\rho_s) & = \tr\left( \rho \sum_{s\in T} \rho_S\right) \\
        & \leq \| \sum_{s\in T} \rho_S \|_F \\
        & = \sqrt{\sum_{s,s'\in T}\tr(\rho_s\rho_{s'})}\\
        & = \left(\sum_{s\in T}\tr(\rho_s^2) +  \sum_{s,s'\in T, s\neq s'}\tr(\rho_s\rho_{s'})\right)^{1/2}\\
        & = O(\sqrt{|T|}) + O(|T|) \exp(-n^{c''}).
    \end{align*}
    This leads to a contradiction if $c'$ is small enough and $|T|>|S|\exp(-n^{c'}).$

    This completes our proof.
\end{proof}

\section{Near-Optimal SQ Lower Bound for Robustly Learning Binary Products}
\label{sec:sq}

The body of this section is devoted to the proof of Theorem~\ref{thm:sq-informal}. We start with the basics of the SQ model. 

\paragraph{Background on SQ model}

The Statistical Query (SQ) model, introduced by \cite{Kearns:98} and extensively studied since, 
see, e.g.,  \cite{FGR+13},
considers algorithms that, 
instead of drawing individual samples from the target distribution, 
have indirect access to the distribution using an appropriate oracle.
A \emph{Statistical Query algorithm} is an algorithm 
whose objective is to learn some information about an unknown 
distribution $D$ by making adaptive calls 
to the following $\mathrm{STAT}$ oracle.

\begin{definition}[STAT Oracle] \label{def:stat-oracle}
Let $D$ be a distribution on $\R^n$. 
A Statistical Query is a bounded function $f: \R^n \to [-1,1]$. 
For $\tau > 0$, the $\mathrm{STAT}(\tau)$ oracle responds to the query $f$ 
with a value $v$ such that $|v - \E_{X \sim D}[f(X)]| \leq \tau$. 
We call $\tau$ the tolerance of the statistical query.
\end{definition}

\noindent The complexity of an SQ algorithm for a learning problem 
is quantified by the total number of queries to the $\mathrm{STAT}$ oracle 
(viewed as a measure of the algorithm's running time) and the maximum 
simulation complexity of any such query (viewed as a measure of the algorithm's sample complexity).
An \emph{SQ lower bound} for a learning problem is an unconditional statement 
that any SQ algorithm for the problem either needs to perform a large number $q$ of 
queries, or at least one query with very small tolerance $\tau$. 
By standard Chernoff bounds, a query of tolerance $\tau$ 
is implementable by non-SQ algorithms by drawing $O(1/\tau^2)$ 
samples and averaging them. Thus, an SQ lower bound intuitively 
serves as a tradeoff between runtime of $\Omega(q)$ and sample complexity of 
$\Omega(1/\tau)$.

We will use the framework of Statistical Query (SQ) algorithms for problems over distributions
introduced in~\cite{FGR+13}.
Before we get into the formal statement of our SQ lower bound, 
we formulate our task as a decision problem as follows:

\begin{definition}[Decision/Testing Problem over Distributions]\label{def:ht}
Let $D$ be a distribution and $\D$ be a family of distributions over $\R^n$. 
We denote by $\mathcal{B}(\mathcal{D},D)$ the hypothesis testing 
problem in which the input distribution $D'$ is promised to satisfy either 
(a) $D'=D$ or (b) $D'\in\mathcal{D}$, and the goal of the algorithm is to distinguish between these two cases.
\end{definition}

We will also need the following definition. 

\begin{definition}[Pairwise Correlation] \label{def:pc}
The pairwise correlation of two distributions with probability mass functions
$D_1, D_2 : \{0,1\}^n \to \R_+$ with respect to a distribution with mass $D:\{0,1\}^n \to \R_+$,
where the support of $D$ contains the supports of $D_1$ and $D_2$,
is defined as 
$$\chi_{D}(D_1, D_2) + 1 := \sum_{\bx\in\{0,1\}^n} D_1(\bx) D_2(\bx)/D(\bx) \;.$$
{When $D_1 = D_2$, the correlation $\chi_{D}(D_1, D_1)$ is identified with the 
$\chi^2$-divergence between $D_1$ and $D$, i.e., $\chi_{D}(D_1, D_1) = \chi^2(D_1, D)$. 
We will typically use the notation $\chi_{D}(D_1): = \chi_{D}(D_1, D_1).$}
\end{definition}

We say that a set of $s$ distributions $\mathcal{D} = \{D_1, \ldots , D_s \}$ over $\{0,1\}^n$
is $(\gamma, \beta)$-correlated relative to a distribution $D$ if
$|\chi_D(D_i, D_j)| \leq \gamma$ for all $i \neq j$, and $|\chi_D(D_i, D_i)| \leq \beta$.
With this notation, we are ready to define the notion of SQ dimension.

\begin{definition}[SQ Dimension] \label{def:sq-dim}
For $\gamma ,\beta> 0$ and a decision problem $\mathcal{B}(\mathcal{D},D)$, 
where $D$ is fixed and $\mathcal{D}$ is a family of distributions over $\{0,1\}^n$,
let $s$ be the maximum integer such that there exists a set of distributions
$\mathcal{D}_D \subseteq \D$ such that $\D_D$ is $(\gamma,\beta)$-correlated
relative to $D$ and $|\D_D|\ge s$.
We define the {\em Statistical Query dimension} with pairwise correlations $(\gamma, \beta)$
of $\mathcal{B}$ to be $s$ and denote it by $\mathrm{SD}(\mathcal{B},\gamma,\beta)$.
\end{definition}

\noindent The connection between SQ dimension and SQ lower bounds is captured
by the following lemma.

\begin{lemma}[\cite{FGR+13}] \label{lem:sq-from-pairwise}
Let $\mathcal{B}(\D,D)$ be a decision problem, where $D$ is the reference distribution 
and $\D$ is a class of distributions over $\R^n$. 
For $\gamma, \beta >0$, let $s= \mathrm{SD}(\mathcal{B}, \gamma, \beta)$. 
Any SQ algorithm that solves $\mathcal{B}$ with probability at least $2/3$ 
requires at least $s \cdot \gamma /\beta$ queries to the
$\mathrm{STAT}(\sqrt{2\gamma})$ oracles.
\end{lemma}

We note that the hypothesis testing problem of Definition~\ref{def:ht}
may in general be information theoretically hard. In particular, if some distribution
$D'\in\mathcal{D}$ is very close to the reference distribution $D$,
it will be hard to distinguish between $D'$ and $D$.
On the other hand, if $D'$ is far from the reference distribution $D$ in total variation distance
for any $D'\in\D$, then one can straightforwardly reduce the hypothesis testing problem
to the problem of learning an unknown $D'\in\D$ to small accuracy 
(see, e.g., Lemma~8.5 in Chapter 8 of \cite{diakonikolas2023algorithmic}).

\subsection{Generic SQ Lower Bounds Against Unbalanced Products} \label{ssec:gen-sq}

As is standard in the context of SQ lower bounds, we will establish SQ-hardness for a related
hypothesis testing problem that is efficiently reducible to our learning problem. 

Establishing an SQ lower bound in this setting 
essentially boils down to proving lower bounds 
for the corresponding SQ dimension (Definition~\ref{def:sq-dim}). 
In our case, this amounts to constructing large families of $\eps$-corrupted 
binary product distributions that have pairwise small $\chi^2$-inner product 
with respect to some given base distribution. 

In the most related prior work~\cite{diakonikolas2022optimal}, 
the base distribution was selected to be the uniform distribution over the hypercube.
Such a choice inherently fails in our setting in the sense that there is an SQ {\em upper bound}  
contradicting our desired lower bound. As already mentioned in our technical overview,
we choose the base distribution to be the $p$-biased binary product distribution $U_p^n$, 
where the parameter $p$ will eventually be chosen to be very close to $0$. 
At a high-level, this choice rules out using any of the moment-matching techniques of 
~\cite{diakonikolas2022optimal}, which crucially relied on the fact that the Binomial distribution
$\Bin(m, 1/2)$ is well-approximated by a Gaussian. Here we are in the ``Poisson approximation" 
regime, where we need to address the case of a Binomial with tiny success probability, namely $\Bin(m, 1/m)$.

The following proposition, which can be viewed as a generalization of an analogous result in~\cite{diakonikolas2022optimal}, 
encapsulates our generic discrete SQ lower bound construction. 
At a high-level, suppose that we have constructed a distribution $A$ 
of the appropriate type over $\{0,1\}^m$---for a carefully selected value of $m$ 
substantially smaller than the ambient dimension $n$---so 
that $A$ matches its low-degree moments with $U_p^m$. 
One can then use $A$ to obtain a large family of different distributions over $\{0,1\}^n$ 
by embedding it as a junta of the coordinates and using the distribution $U_p$ 
over the remaining coordinates. We show below 
that this allows us to construct many nearly orthogonal distributions, thereby implying
an SQ lower bound for the corresponding hypothesis testing problem.

\begin{proposition}[Generic SQ Lower Bound Construction] \label{prop:generic-SQ}
For $p > 0$, let $U_p$ be the $p$-biased Bernoulli distribution. For 
$k, m \in \Z_+$ with $k \leq m$, let $A$ be a distribution on $\{0,1\}^m$ 
that matches its first $k$ moments with $U_p^m$. 
For an injective function $f: [m] \to [n]$, where $n \gg m$, 
let $P^A_f$ be the distribution on $\{0,1\}^n$
that is equal to $A$ on the coordinates in the image of $f$ and equal to 
an independent $U_p^{n-m}$ on the remaining coordinates. 
Then any SQ algorithm that distinguishes between $P^A_f$, 
for randomly selected $f$, and $U_p^n$ requires either number of queries or inverse tolerance 
at least $(n/m)^{\Omega(k)}/ \chi_{U_p^m}(A).$
\end{proposition}

\begin{proof}
Our analysis will require the use of Fourier analysis for distributions on $\{0,1\}^n$. 
To do this, we will need an appropriate basis. 
In particular, let $X$ be the $1$-bit pseudo-distribution 
that assigns $\alpha$ to $0$ and $-\alpha$ to $1$, 
where $\alpha = \sqrt{p(1-p)}$. 
We note that $X$ and $U_p$ form an orthonormal basis 
of the distributions on $\{0,1\}$ with respect to the inner product $\chi_{U_p}(-,-)$. 

For a set $T \subset [n]$, define the pseudo-distribution $X_T$ 
to be the product of $X$ over the coordinates in $T$ and $U_p$ over the coordinates not in $T$. 
Note that the $X_T$'s form an orthonormal basis for distributions 
on $\{0,1\}^n$ with respect to $\chi_{U_p^n}(-,-)$.

As a consequence of the above, we can write
$A = \sum_{T \subseteq [m]} a_T X_T$
for some constants $a_T$ satisfying $\sum_{T \subseteq [m]} a_T^2 = \chi_{U_p^m}(A)$, 
$a_{\emptyset} = 1$, and $a_T = 0$ for $0 < |T| \leq k$ (where the latter conditions follow from
the assumed moment matching property of $A$).

By the definition of the distributions $P^A_f$, it is also easy to see that
$P^A_f = \sum_{T \subseteq [m]} a_T X_{f(T)}$.
Thus, for two such functions $f, g:  [m] \to [n]$, we can write 
\begin{align*}
\chi_{U_p^n}(P^A_f,P^A_g) &=  \chi_{U_p^n}\left(\sum_{T\subseteq [m]} a_T X_{f(T)} , \sum_{R\subseteq [m]} a_R X_{g(R)} \right) \\
&= \sum_{T,R \subseteq [m] : f(T) = g(R)} a_T a_R \\ 
&= 1 +  \sum_{T, R \subseteq [m] : f(T) = g(R), |T|>k, |R| > k} a_T a_R \;,
\end{align*}
where we used the moment-matching condition and the 
orthonormal property of the basis functions. 

To establish our desired SQ lower bound, we wish to bound the expectation (over random choices of $f$ and $g$) 
of the quantity 
$|\chi_{U_p^n}(P^A_f,P^A_g) - 1|$. 
This is at most
\begin{align*}
&\sum_{T, R \subseteq [m], |T|>k, |R| > k} |a_T a_R| \, \Pr_{f, g}\left[f(T) = g(R)\right] \\ 
&\leq \sum_{T, R \subseteq [m], |T| = |R| = s > k}  \frac{|a_T a_R| }{\binom{n}{s}} \\ 
&= \sum_{s>k} \left( \left( \sum_{T \subseteq [m], |T| = s} |a_T|\right)^2 / \binom{n}{s} \right) \\ 
& \leq \sum_{s>k} \chi_{U_p^m}(A) \binom{m}{s} / \binom{n}{s}   \\
&\leq  \chi_{U_p^m}(A) \sum_{s > k} (m/n)^s \;,
\end{align*}
where the penultimate inequality is Cauchy-Schwarz. 
{Since $n \gg m$, the latter sum is at most $(m/n)^{k}$.}

Note that the hypothesis testing problem that we are considering is simply 
$\mathcal{B}(\D,D)$ of Definition~\ref{def:ht} with $D=U_p^n$ and $\D=\{P_f^A\}_f$.
By standard arguments, the above correlation bound immediately implies a lower bound 
on the SQ dimension of $\mathcal{B}(\D,D)$.
Then an application of Lemma~\ref{lem:sq-from-pairwise} completes the proof of Proposition~\ref{prop:generic-SQ}. 
\end{proof}

Given our generic SQ lower bound result, 
it suffices to prove the existence of a moment-matching distribution $A$ over $\{0, 1\}^m$ 
such that an algorithm that learns $\eps$-corrupted binary products 
within total variation error $o(\eps \log(1/\eps)/\log\log(1/\eps))$ 
can distinguish between the hypotheses $P^A_f$, 
for randomly selected $f$, and $U_p^n$. By Proposition~\ref{prop:generic-SQ}, this would imply 
the desired SQ lower bound and prove Theorem~\ref{thm:sq-informal}.

This construction is shown in the following subsections.
Specifically, we will set the bias parameter $p$ to $1/m$. 
We prove that there exists a distribution $A$ that (i) matches its low-degree moments with $U^m_{1/m}$, 
and (ii) is an $\eps$-corrupted version of $U^m_{(1+\delta)/m}$
for any $\delta = o(\eps \log(1/\eps)/\log\log(1/\eps))$. Since the total variation distance 
between $U^m_{(1+\delta)/m}$ and $U^m_{1/m}$ is proportional to $\delta$, 
a robust mean estimation algorithm achieving total variation error $\ll \delta$ 
solves our hypothesis testing problem.

\subsection{Moment Matching Distributions over Integers} \label{ssec:mm-integers}

The main technical ingredient in our construction is the following result that 
we believe may be of broader interest: 

\begin{proposition} \label{prop:int-mm}
Let $k$ be a sufficiently large integer and 
let $d$ be a positive integer less than a sufficiently small constant power of $k$. 
Let $a_0,a_1,\ldots,a_d$  be real numbers. 
There exists a function $f$ on $\{-k,-k+1,\ldots,k\}$ so that
\begin{enumerate}
\item $\sum_i |f(i)| < \poly(d) \max_t( |a_t| O(d/kt)^t)$. 
\item $\sum_i f(i) i^t = a_t$ for $0 \leq t \leq d$. 
\end{enumerate}
\end{proposition}
\begin{proof}
We will prove that such a function $f$ exists by an LP duality technique. 
In particular, to prove the existence of such a function, it suffices to show that 
the following LP is feasible. 

Find $f$ on  $\{-k,-k+1,\ldots,k\}$ such that
\begin{enumerate}
\item $\|f\|_1 \leq B := \poly(d) \max_t( |a_t| O(d/kt)^t )$.
\item For any degree at most $d$ polynomial $p(x) = b_0 + b_1 x + \ldots + b_d x^d$, it holds that  
$\sum_i p(i)f(i) = \sum_t a_t b_t$.
\end{enumerate}
By LP duality, the above system of constraints has a solution unless 
there exists a polynomial $p$ such that $B \sup_{i \in \{-k, \ldots,k\}} |p(i)| < \sum_t a_t b_t$.

To make progress, we will compare this discrete problem 
to the corresponding real version. 
In particular, in the real version, the goal is to find a measure $\mu$ on $[-k,k]$ such that
\begin{enumerate}
\item $\|\mu\|_1 \leq B$.
\item For any degree at most $d$ polynomial $p$, $\E[p(\mu)] = \sum_t a_t b_t$.
\end{enumerate}
By LP duality, the above system has a solution unless 
there exists such a polynomial $p$ such that 
$$B \sup_{x \in [-k,k]} |p(x)| < \sum_t a_t b_t \;.$$

The real valued version of this problem is reasonably well-studied (see for example \cite{diakonikolas2023algorithmic} Lemma 8.18), but there doesn't seem to be a general statement in the literature with the correct concrete bounds. Thus, we show:
\begin{lemma}\label{real moments lemma}
    Let $a_0,\ldots,a_d$ be real numbers and $k>0$. There exists a function $p:[-k,k]\rightarrow \R$ so that
\begin{enumerate}
    \item $\sup_{x\in [-k,k]}|p(x)| < \poly(d) \max_t(|a_t|O(d/kt)^{t+1}).$
    \item $\int_{-k}^k f(x)x^t dx = a_t$ for $0\leq t \leq d.$
\end{enumerate}
\end{lemma}
\begin{proof}
  First, we note by homogeneity that it suffices to prove the statement for $k=1$ by finding a $g$ so that
  $\int_{-1}^1 g(x) x^t dx = a_t/k^t$ and then letting $f(x)=g(kx).$

  We can replace condition 2 above with 
  $$
  \int_{-1}^1 f(x)p(x) dx = \sum_{t=0}^d a_tb_t
  $$
  for any polynomial $p(x) = \sum_{t=0}^d b_t x^t.$ By linearity, it suffices to check this condition for $p(x) = P_t(x)$ for $P_t(x) = \sum_{s=0}^t d_{t,s}x^s$ the $t^{th}$ Legendre polynomial. In particular, if we take
  $$
  f(x) := \sum_{t=0}^d c_t P_t(x)
  $$
  where $c_t = \frac{2n+1}{2}\sum_{s=0}^t a_s d_{t,s}$, then condition 2 will hold by the standard orthogonality relations of the Legendre polynomials.

  To show the first condition, we note that $|P_t(x)| \leq 1$ when $|x|\leq 1$, so it suffices to bound $$\sum_{t=0}^d |c_t| \leq \poly(d) \max_{t,s} |a_s||d_{t,s}|.$$ However, given the representation
  $$
  P_n(x) = 2^{-n} \sum_{k>n/2}^n (-1)^{k+n} \binom{n}{k}\binom{2k}{2k-n} x^{2k-n}
  $$
  it is easy to see that $|d_{t,s}| = O(t/s)^s,$ from which our proof follows.
\end{proof}

Lemma \ref{real moments lemma} implies that the real-valued version of our problem has a solution with $\|\mu \|_1 < B/2$, and therefore the system 
with the $a$'s twice as large still has a solution. 
In particular, by LP duality this implies that for all such polynomials $p$,  
$\sup_{x \in [-k,k]} |p(x)| < 2 \sum_t a_t b_t$. 

To show that our original discrete dual program does not have a solution, 
it suffices to prove the following lemma.

\begin{lemma} \label{lem:disc-cont}
If $p$ is a polynomial of degree at most $d$, 
then $\sup_{x \in [-k,k]} |p(x)| < 2 \sup_{x \in \{-k,\ldots,k\}} |p(x)|$.
\end{lemma}
\begin{proof}[Proof of Lemma~\ref{lem:disc-cont}]
First, by making a change of variables, we note that this is equivalent to the following: 
for all polynomials $p$ of degree at most $d$, we have that
$$\sup_{x \in [-1,1]} |p(x)| < 2 \sup_{x \in \{-1+2/(2k+1),-1+4/(2k+1),\ldots,1\}} |p(x)| \;.$$
Let $m$ be a sufficiently large constant multiple of $d$ 
and define the intervals $$I_j = [\cos(\pi j/m), \cos(\pi(j-1)/m)]$$ for $j \in [m]$. 
Note that these intervals form a partition of $[-1,1]$. 
Note also that the length of any $I_j$ is $\Omega(1/m^2)$.  
So, if $k$ is at least a large enough multiple of $d^2$, 
for each $j$ there must be some element $x_j \in I_j \cap  \{-1+2/k,-1+4/k,\ldots,1\}$. 
Pick such an $x_j$ and define $r(x)$ to be the piecewise constant function on $[-1,1]$ 
defined by $r(x)$ is $p(x_j)$ on all of $I_j$. 
We note that it is now sufficient to prove that
$\|p\|_{\infty} < 2 \|r\|_{\infty}$,
or that $\|p-r\|_{\infty} < \|p\|_{\infty}/2$. 
Interestingly, this statement follows immediately from Lemma 2.1 of~\cite{KaneKP17},
completing the proof of Lemma~\ref{lem:disc-cont}.
\end{proof}

This completes our proof of Proposition~\ref{prop:int-mm}.
\end{proof}

\subsection{Moment Matching Distribution for Robustly Learning Products} \label{ssec:SQ-A}

We can now leverage Proposition~\ref{prop:int-mm} to construct
our moment-matching distribution $A$. Specifically, we show:

\begin{proposition}[Moment-Matching Corrupted Binary Product in Low Dimensions] \label{prop:final-A}
Let $\eps > 0$ and $L > 0$. 
There exists $\delta > \Omega(\eps \log(1/\eps)/\log\log(1/\eps))/L$ 
and $d = \tilde{\Omega}(\log(L))$ such that for positive integers 
$m \gg \log^2(1/\eps)$ there exists a distribution $A$ on 
$\{0,1\}^m$ satisfying the following: 
\begin{itemize}
\item[(i)] $A$ is $\eps$-close to $U_{(1+\delta)/m}^m$ in total variation distance, and 
\item[(ii)] $A$  matches $d$ moments with $U_{1/m}^m$.
\end{itemize}
\end{proposition}

\begin{proof}
We will choose $A$ to be a symmetric distribution, i.e.,  
$A$ is determined by the distribution over the sum of its coordinates.
Considering its distribution over weights, we need $A'$ 
to be $\eps$-close to $\Bin(m,(1+\delta)/m)$ and match $d$ 
moments with $\Bin(m,1/m)$. 

We choose an even integer $T$ to be a sufficiently small constant multiple 
of $\log(1/\eps)/\log\log(1/\eps)$, so that 
$\Pr(\Bin(m,1/m) = i) > \eps^{1/2}$ for each $i \leq T$. 
We will let $A'$ be $\Bin(m,(1+\delta)/m)$ plus some 
pseudo-distribution $\mu$ supported on $\{0,1,\ldots,T\}$. 
We note that it suffices to have $\|\mu\|_1 < \eps$ 
and
$$\sum_i \mu(i) i^t = \E[ \Bin(m,1/m)^t ] -  \E[ \Bin(m,(1+\delta)/m)^t ]$$
for $t \leq d$.
Note that the $L_1$ bound on $\mu$ would imply that it is pointwise less than $\eps^{1/2}$. 

Letting $\nu = \mu - T/2$, this is equivalent to finding a $\nu$ supported on $\{-T/2,\ldots,T/2\}$ 
so that for $t\leq  d$,
$$\sum_i \nu(i) i^t = \E[ (\Bin(m,1/m)-T/2)^t ] -  \E[ (\Bin(m,(1+\delta)/m)-T/2)^t ] \;.$$
This difference above is
$$\sum_s \binom{t}{s} (T/2)^{t-s} [  \E[ \Bin(m,1/m)^s ] -  \E[ \Bin(m,(1+\delta)/m)^s ]] \;.$$
Note that the $s=0$ terms cancel, the $ \binom{t}{s}$ terms 
sum to at most $2^d$ and the maximum remaining $(T/2)^{t-s}$ term is at most $T^{t-1}$. 
Thus, this sum is at most
$$2^d T^{t-1} \max_s  [ \E[ \Bin(m,1/m)^s ] -  \E[ \Bin(m,(1+\delta)/m)^s ]] \;.$$
Note that the ratio of the probabilities of the events $\Bin(m,1/m) = x$ 
and $\Bin(m,(1+\delta)/m) = x$ is $1+O(x \delta)$. 
Thus, the difference in expectations above is at most 
$O(\delta) \E[\Bin(m,1/m)^{s+1}]$, 
which is at most $\delta d^{O(d)}$.

Thus, applying Proposition~\ref{prop:int-mm}, we can find a $\mu$ of the form we want 
with $L_1$-norm at most $\delta/Td^{O(d)}$. 
Given our choice of parameters, this is sufficient.
\end{proof}

\subsection{Putting Everything Together} \label{ssec:final-SQ}

Given $n$, let $m$ be approximately $\sqrt{n}$ and define $A$ as in Proposition~\ref{prop:final-A}. 
Note that $A$ is $\Omega(\delta)$-far from $U_{1/m}^m$ in total variation distance. 
By Proposition~\ref{prop:generic-SQ}, it is $n^{\tilde{\Omega}(\log(L))}$-hard in SQ to distinguish 
between $P^A_f$, for random $f$, and $U_{1/m}^n$, 
which are two $\eps$-corrupted products which are 
$\Omega(\eps \log(1/\eps)/\log\log(1/\eps)L)$-far from each other.
This completes the proof of Theorem~\ref{thm:sq-informal}. \qed

\section{Quantum SQ Lower Bounds for Agnostic Tomography of Mixed Product States} \label{sec:qsq}

In this section, we give the proof of Theorem~\ref{thm:qsq-intro}. 
Specifically, we establish the following more detailed statement. 

\begin{theorem} \label{thm:qsq-body}
    Any QSQ algorithm that learns a product mixed state $\pi$ on $n$ qubits to trace distance $o(\eps \log (1/ \eps) / \log \log (1 / \eps))$, given QSQ access to a state $\rho$ satisfying $\trd (\rho, \pi) = \eps$, either requires $n^{\omega (1)}$ many quantum statistical queries, or must make at least one query of tolerance inverse super-polynomial in $n$.
\end{theorem}
\begin{proof}
    We claim that any such algorithm immediately implies a SQ algorithm for robustly learning product distributions over the hypercube with the same parameters, from which the claim follows from Theorem~\ref{thm:sq-informal}.

    Any classical distribution $D$ on $\{0, 1\}^n$ can be canonically encoded as a mixed state $\rho_D$ on $n$ qubits which is diagonal in the computational basis, namely
    \[
    \rho_D = \sum_{x \in \{0, 1\}^n} D(x) \ketbra{x}{x} \; .
    \]
    Moreover, it is straightforward to verify that $\tvd (D, D') = \trd (\rho_D, \rho_{D'})$ for all distributions $D, D'$.

    Suppose we have a QSQ algorithm for agnostically learning product mixed states.
    We will directly construct an SQ algorithm for robustly learning product distributions using the same number of queries and tolerance.
    First, note that trivially, any QSQ algorithm for agnostically learning product mixed states also implies a QSQ algorithm for agnostically learning diagonal product mixed states.
    Then, note that any QSQ algorithm for agnostically learning diagonal states can without loss of generally be replaced by one that only makes queries to diagonal observables.
    Additionally, observe that if $O$ is diagonal, then $\tr (O \rho_D) = \E_{X \sim D} [f(X)]$, where $f_O (x) = \bra{x} O \ket{x}$ satisfies $|f(x)| \leq \norm{O}_2$ for all $x \in \{0, 1\}^n$.

    The reduction is then as follows.
    To construct our SQ algorithm, we simply invoke our QSQ algorithm for agnostically learning diagonal mixed product states.
    Whenever the QSQ algorithm queries some diagonal observable $O$, we simply replace it with an SQ query to $f_O$.
    By the reasoning above, the behavior of the two oracles on $\rho_D$ and $D$ are the same, for all distributions $D$.
\end{proof}

\bibliographystyle{alpha}
\bibliography{refs}

\appendix

\section*{Appendix}

\section{Sample Near-Optimal Efficient Algorithm for Robustly Learning Product Distributions in $\ell_2$-Norm} \label{app:sample-opt-l2}

In this section we show the following result, which we crucially require to obtain the nearly optimal copy complexity in Section~\ref{sec:pure}:

\begin{theorem}
\label{thm:main-l2}
    Let $\eps_0 > 0$ be some universal constant.
    There is an algorithm which given an $\eps$-corrupted set of samples from an unknown product distribution $p \in \cP_n$ with mean $\mu$, for $\eps \leq \eps_0$, of size $N \geq N_0$, where $N_0 = \widetilde{O} \left( \tfrac{n}{\eps^2} \right)$, outputs with probability $0.99$ a mean vector $\widehat{\mu}$ so that $\norm{\muhat - \mu}_2 \lesssim \eps \sqrt{\log 1 / \eps}$.
    Moreover, the algorithm runs in time $\poly(N)$.
\end{theorem}
\noindent
As in Theorem~\ref{thm:robust-main}, we can easily boost the success probability to $1 - \delta$ by paying an additional $\log (1 / \delta)$ in the sample complexity and runtime.
We believe this result is essentially folklore in the community, but to our knowledge, has not been written down, and so we include it here for completeness.

The algorithm is again based on the filtering method, and is essentially a significantly simpler version of the algorithm required in Theorem~\ref{thm:robust-main}.
At a high level, because we now only insist on $\ell_2$ closeness, we do not need to use the complicated $\norm{\cdot}_\mu$ norm, and can use more classical spectral techniques.

We will crucially require the following notion of goodness.
For a symmetric matrix $M$, we let $\norm{M}_2$ denote the spectral norm of $M$.
\begin{definition}
    We say a set of points $T$ is \emph{$\eps$-Euclidean good} with respect to a product distribution $\pi$ with mean vector $\mu \in [0, 1]^n$ if:
    \begin{itemize}
        \item We have that
            \begin{align*}
                &\norm{\mu(T) - \mu}_2 \lesssim \eps \sqrt{\log 1 / \eps} \; , \; \mbox{and} \\
                &\norm{\E_{X \sim T} (X - \mu(T)) (X - \mu(T))^\top - \E_{X \sim \pi} (X - \mu) (X - \mu^\top)}_2 \lesssim \eps \log 1 / \eps \; .
            \end{align*}
        \item For all $w \leq w(T)$ with $\norm{w}_1 \leq \eps$, we have that
            \begin{align*}
                &\norm{\sum_{i \in T} w_i (X_i - \mu)}_2 \lesssim \eps \sqrt{\log 1 / \eps} \; , \: \mbox{and} \\
                & \norm{\sum_{i \in T} w_i (X_i - \mu) (X_i - \mu)^\top} \lesssim \eps \log 1 / \eps \; .
            \end{align*}
    \end{itemize}
\end{definition}
\noindent We have the following concentration inequality:
\begin{lemma}
\label{lem:l2-sample-complexity}
Let $\eps > 0$, and let $T = \{X_1, \ldots, X_N\}$ be a set of $N \geq N_0$ independent samples from $\pi$, where $N = O\left( \tfrac{n}{\eps^2} \right)$.
Then, with probability $0.99$, $T$ is an $\eps$-Euclidean good set of points for $\pi$.
\end{lemma}
\begin{proof}
    The key observation is that for any unit vector $v$, the random variable $Z = \iprod{v, X - \mu}$ for $X \sim \pi$ is sub-gaussian with variance proxy $1$ by Hoeffding's inequality.
    Therefore, the desired bound follows immediately from the same analysis as Appendix C of~\cite{dong2019quantum}.
\end{proof}
Crucially, notice that this is satisfied with a number of samples which is linear in $n$.
Intuitively, this is because the univariate projections are all sub-gaussian, and we only have to union bound over a net of size exponential in $n$, whereas before our random variables were sub-exponential, and our union bound was over a larger set of test matrices.

\subsection{Algorithm Description and Analysis}
We are now ready to state our algorithm.
The algorithm is very similar to Algorithm~\ref{alg:main}.
As a first step, we do the same preprocessing steps as in Algorithm~\ref{alg:main}; it is readily verified that these can be done in $\widetilde{O} (n / \eps^2)$ samples.
Thus, without loss of generality, we will assume that our data points satisfy~\Cref{eq:regularity1} and~\Cref{eq:regularity2}.

The main distinction is that rather than using the $\norm{\cdot}_\mu$ norm, we simply check the largest eigenvalue of the covariance (with diagonals zeroed), and we set the score to be the variance in the direction of the largest eigenvector.
We give the formal pseudocode in Algorithm~\ref{alg:main-l2}

The key geometric fact we require is the following, which is the analog of~\Cref{lem:key-geometric} for this setting:
\begin{lemma}
\label{lem:key-geometric-l2}
    Let $\pi$ be a binary product distribution with mean $\mu \in \R^n$ with $0 \leq \mu_i \leq 2/3$ for all $i = 1, \ldots, n$.
    Let $S = S_g \cup S_b \setminus S_r$ where $S_g$ is an $\eps$-Euclidean good set of points for $\pi$, $S_r \subset S_g$, and $|S_b| = |S_r| = \eps |S|$, and suppose $S$ satisfies~\Cref{eq:regularity1} and~\Cref{eq:regularity2}.
    Let $w \in \sW_{N, \eps}$.
    Then
    \begin{equation}
    \norm{\mu(w) - \mu}_2 \leq \sqrt{\eps \cdot \norm{ \poff (\Sigma (w))}_2} + O(\eps \sqrt{\log 1 / \eps}) \; .
    \end{equation}
\end{lemma}
\begin{proof}
    The proof is very similar to the proof of~\Cref{lem:key-geometric}.
    We may assume without loss of generality that $\eta \gtrsim \eps \sqrt{\log 1 / \eps}$, as otherwise the claim is trivially true.
    Let $y$ be a unit vector so that $\iprod{y, \mu(w) - \mu} = \norm{\mu(w) - \mu} = \eta$.
    By the same calculation as in~\Cref{lem:key-geometric}, but now using $\eps$-Euclidean goodness, we obtain that 
    \begin{align*}
    \E_{X \sim w} \iprod{y, X - \mu}^2 &= \sum_{i = 1}^n y_i^2 p_i (1 - p_i) + \norm{\bar{w}}_1 \E_{X \sim \bar{w}} \iprod{y, X - \mu}^2 \pm O(\eps \log 1 / \eps) \\
    &\geq \sum_{i = 1}^n y_i^2 \mu_i (1 - \mu_i) + \frac{\eta^2}{\eps} - O(\eps \log 1/ \eps) \; .
    \end{align*}
To bound the first term, we observe that 
\begin{align*}
    \left| \sum_{i = 1}^n y_i^2 \mu_i (1 - \mu_i) - \sum_{i = 1}^n  y_i^2 \mu(w)_i (1 - \mu(w)_i) \right| &\leq \left| \sum_{i = 1}^n y_i^2 (\mu_i - \mu(w)_i) \right| + \left| \sum_{i = 1}^n y_i^2 (\mu_i^2 - \mu(w)_i^2) \right| \\
    &\leq 2 \norm{\mu - \mu(w)}_2 = O(\eta) \; .
\end{align*}
Note in fact this bound is actually usually very loose.
Given this, we have that
\[
y^\top \poff(\Sigma (w)) y \geq \frac{\eta^2}{\eps} - O(\eta) - O(\eps \sqrt{\log 1 / \eps}) \; ,
\]
which by rearranging implies the claim, since $\eta \gtrsim \eps \sqrt{\log 1 / \eps}$.
\end{proof}

\begin{proof}[Proof of Theorem~\ref{thm:main-l2}]
    The analysis of this is almost identical to the proof of~\Cref{thm:robust-main} in~\Cref{sec:alg-analysis}; the only difference is that we all $\norm{\cdot}_{\mu(w)}$ with $\norm{\cdot}_2$, we replace $A$ with $vv^\top$, where $v$ is the top eigenvector, and we invoke Lemma~\ref{lem:key-geometric-l2} instead of Lemma~\ref{lem:key-geometric}.
\end{proof}

\begin{algorithm}
\caption{A nearly-optimal robust learner for binary product distributions}\label{alg:main-l2}
\KwIn{An $\eps$-corrupted set of samples from a product distribution $p \in \cP_n$}
\KwOut{A product distribution $\hat{p}$}
Let $C$ be a sufficiently large universal constant \\
$w \gets w(S)$ \\
\While{$\norm{ \poff (\Sigma (w))}_{2} > C \eps \log 1 / \eps$}{
    Let $v$ be the unit eigenvector of $\poff(\Sigma (w))$ corresponding to its largest eigenvalue in absolute value.\\
    Let $\tau_i - \iprod{v, X_i - \mu(w)}^2$ for all $i \in S$ \\
    Sort the $\tau_i$ in decreasing order. WLOG assume that $\tau_1 \geq \tau_2 \geq \ldots \tau_N$.\\
    Let $M$ be the first index so that $\sum_{i = 1}^M w_i > 2 \eps$. \\
    For every $i \leq M$, let
    \[
    w_i \gets \left( 1 - \frac{\tau_i}{\tau_1} \right) w_i \; .
    \]
    Let $S \gets \{i \in S: w_i \neq 0 \}$.
}
\textbf{return} $\mu(w)$ 
\end{algorithm}

\end{document}